\newif\ifappendixtoc\appendixtoctrue
\newif\ifappendix\appendixtrue
\title{On the Balancedness of Tree-to-word Transducers} 
\titlerunning{Balancedness}
\author{Raphaela L\"obel}{TU M\"unchen, Germany}{raphaela.loebel@tum.de}{}{}
\author{Michael Luttenberger}{TU M\"unchen, Germany}{luttenbe@in.tum.de}{
https://orcid.org/0000-0002-4677-9561}{}
\author{Helmut Seidl}{TU M\"unchen, Germany}{seidl@in.tum.de}{}{}
\authorrunning{R.\ L\"obel, M.\ Luttenberger and H.\ Seidl}
\keywords{balancedness of tree-to-word transducer, equivalence, longest common suffix/prefix of a CFG}
\theoremstyle{plain}
\newtheorem{fct}{Fact}
\begin{document}
\ifappendixtoc
\mtcprepare 
\faketableofcontents 
\fi

\maketitle

\begin{abstract}
A language over an alphabet $\Br=\al\cup\ial$ of opening ($\al$) and closing ($\ial$) brackets, is balanced if 
it is a subset of the Dyck language $\DBr$ over $\Br$, and it is well-formed if all words
are prefixes of words in $\DBr$.
We show that well-formedness of a context-free language is decidable in polynomial time, and that
the longest common reduced suffix can be computed in polynomial time.
With this at a hand we decide for the class \oneLTW\ of non-linear tree transducers
with output alphabet $\Br^*$ whether or not the output language is balanced.
\end{abstract}

\newpage
\section{Introduction}
Structured text requires that pairs of opening and closing brackets are properly nested.
This applies to text representing program code as well as
to XML or HTML documents.
Subsequently, we call properly nested words over an alphabet $\Br$ of opening and closing brackets 
\emph{balanced}.
Balanced words, i.e.
structured text, need not necessarily be constructed in a structured way.
Therefore, it is a non-trivial problem whether the set of words produced by some
kind of text processor, consists of balanced words only.
For the case of a single pair of brackets and context-free languages,
decidability of this problem has been settled by
Knuth \cite{DBLP:journals/iandc/Knuth67} where a polynomial time algorithm
is presented by Minamide and Tozawa \cite{DBLP:conf/aplas/MinamideT06}.
Recently, these results were generalized to the output languages of monadic second-order logic (MSO) definable 
tree-to-word transductions \cite{DBLP:journals/ipl/ManethS18}.
The case when the alphabet $\Br$ consists of \emph{multiple} pairs of brackets, though, 
seems to be more intricate.
Still, balancedness for context-free languages was shown to be decidable 
by Berstel and Boasson \cite{DBLP:journals/acta/BerstelB02} where 
a polynomial time algorithm again has been provided by Tozawa and Minamide 
\cite{DBLP:conf/fossacs/TozawaM07}.
Whether or not these results for $\Br$ can be generalized to MSO definable transductions
as e.g.\ done by finite copying macro tree transducers with regular look-ahead,
remains as an open problem.
Reynier and Talbot \cite{ReynierT2016} considered visibly pushdown transducers and showed decidability of
this class with well-nested output in polynomial time.

Here, we provide a first step to answering this question. We consider deterministic
tree-to-word transducers
which process their input at most twice by calling in their axioms at most 
two \emph{linear} transductions of the input. 
Let \oneLTW\ denote the class of these transductions.
Note that the output languages of \emph{linear} deterministic tree-to-word transducers is
context-free, which does not need to be the case for \oneLTW\ transducers.
\oneLTW\ forms a subclass of MSO definable transductions which allows to 
specify transductions such as 
\emph{prepending} an XML document with the list of its section headings,
or \emph{appending} such a document with the list of figure titles.
For \oneLTW\ transducers we show that balancedness is decidable --- and this in
polynomial time.
In order to obtain this result, we first generalize the notion of balancedness to the notion
of \emph{well-formedness} of a language, which means that each word is a \emph{prefix} of a 
balanced word.
Then we show that well-formedness for context-free languages is decidable in polynomial time.
A central ingredient is the computation of the {\em longest common suffix} of a context-free language $L$ over $\Br$ 
{\em after reduction} i.e.\ after canceling all pairs of matching brackets.
While the proof shares many ideas with the computation of the longest common prefix of a context-free language
\cite{DBLP:conf/stacs/LuttenbergerPS18} 
we could not directly make use of the results of~\cite{DBLP:conf/stacs/LuttenbergerPS18} s.t.\ the results of this paper fully subsume the results of~\cite{DBLP:conf/stacs/LuttenbergerPS18}.
Now assume that we have verified that the output language of the 
first linear transduction called in the axiom of the \oneLTW\ transducer
and the \emph{inverted} output language of the second linear transformation
both are well-formed.
Then balancedness of the \oneLTW\ transducer in question, effectively reduces 
to the \emph{equivalence} of two deterministic linear tree-to-word transducers
--- modulo the reduction of opening followed by corresponding closing brackets.
Due to the well-formedness 
we can use the equivalence of
linear tree-to-word transducers over the \emph{free group} which can be decided in polynomial time \cite{Loebel2020}.

This paper is organized as follows.
After introducing basic concepts in Section \ref{s:prelim}, 
Section \ref{s:balanced_2TW} shows how balancedness for 
\oneLTW\ transducers can be reduced to equivalence over the free group and
well-formedness of \LTW s.
Section \ref{sec:CFGwf} considers the problem of deciding well-formedness of 
 context-free languages in general.

\ifappendix\else Missing proofs can be found in the extended version of this paper \cite{Loebel2019}.\fi

\noindent \textbf{Acknowledgements.}
We also like to thank the anonymous reviewers for their 
detailed comments and valuable advice.

\section{Preliminaries}\label{s:prelim}
\renewcommand{\al}{\Sigma}

As usual, $\N$ ($\N_0$) denotes the natural numbers (including $0$).
The power set of a set $S$ is denoted by $2^S$.
$\al$ denotes some generic (nonempty) alphabet, 
$\als$ and $\al^\omega$ denote 
the set of all finite words and the set of all infinite words, respectively. 
Then $\al^\infty=\als\cup\al^\omega$ is the set of all countable words.
Note, that the transducers considered here output finite words only;
however, for the operations needed to analyze the output infinite words are very helpful.
We denote the empty word by $\ew$. 
For a finite word $w=w_0\ldots w_l$, its reverse $w^R$ is defined by $w^R=w_{l}\ldots w_1 w_0$; 
as usual, set $L^R:=\{w^R\mid w\in L\}$ for $L\subseteq \als$.
$\sA$ is used to denote an alphabet of \emph{opening brackets} with $\inv{\sA}=\{\inv{a}\mid a\in\sA\}$ the derived alphabet of \emph{closing brackets}, and $\sB:=\sA\cup\inv{\sA}$ the resulting alphabet of \emph{opening and closing brackets}. 

\paragraph*{Longest common prefix and suffix}
Let $\al$ be an alphabet.
We first define the \emph{longest common prefix} of a language,
and then reduce the definition of the \emph{longest common suffix} to it by means of the reverse.
We write $\ple$ to denote the prefix relation on $\al^\infty$, i.e.\ we have $u\ple w$ if either 
(i) $u,w\in\als$ and there exists $v\in\als$ s.t.\ $w=uv$, or
(ii) $u\in\als$ and $w\in\al^\omega$ and there exists $v\in\al^\omega$ s.t.\ $w=uv$, or
(iii) $u,w\in\al^\omega$ and $u=w$.
We extend $\al^\infty$ by a greatest element $\top\not\in\al^\infty$ w.r.t.\ $\ple$
s.t.\ $u\ple \top$ for all $u\in \al_{\top}^\infty := \al^{\infty}\cup\{\top\}$.
Then every set $L\subseteq \al^\infty_{\top}$ has an infimum w.r.t.\ $\ple$
which is called the \emph{longest common prefix} of $L$, abbreviated by $\lcp(L)$.
Further, define $\ew^\omega:=\top$, $\top^R:=\top$, and $\top w := \top =: w\top$ for all $w\in \al_{\top}^\infty$.

In Section~\ref{sec:CFGwf} we will need to study the 
\emph{longest common suffix ($\lcs$)} of a language $L$.
For $L\subseteq \als$, we can simply set $\lcs(L):=\lcp(L^R)^R$,
but also certain infinite words are very useful for describing how the $\lcs$ changes when concatenating two languages (see e.g.\ Example~\ref{ex:lcsext}).
Recall that for $u,w\in\als$ and $w\neq \ew$ the $\omega$-regular expression $u w^\omega$ denotes the unique infinite word $uwww\ldots$ in $\bigcap_{k\in\N_0} uw^k\al^\omega$; such a word is also called \emph{ultimately periodic}.
For the $\lcs$ we will use the expression $w^\iomega u$ to denote 
the \emph{ultimately left-periodic} word $\ldots www u$ that ends on the suffix $u$ with infinitely many copies of $w$ left of $u$; these words are used to abbreviate the fact that we can generate a word $w^k u$ for unbounded $k\in\N_0$. 
As we reduce the $\lcs$ to the $\lcp$ by means of the reverse,
we define the reverse of $w^\iomega u$, denoted by $(w^\iomega u)^R$, by means of $(w^\iomega u)^R:=u^R (w^R)^\omega$.
\begin{definition}
\label{def:lcs}
Let $\alup$ denote the set of all expressions of the form $w^\iomega u$ with $u\in\als$ and $w\in\al^+$.
$\alup$ is called the set of \emph{ultimately left-periodic words}.
Define the reverse of an expression $w^\iomega u\in \alup$ by means of $(w^\iomega u)^R:=u^R (w^R)^\omega$.
Accordingly, set $(u w^\omega)^R := (w^R)^\iomega u^R$ for $u\in\als$, $w\in\al^+$.

The \emph{suffix order} on $\als\cup \alup\cup\{\top\}$ is defined by $u\sle v:\Leftrightarrow u^R \ple v^R$.
The \emph{longest common suffix ($\lcs$)} of a language $L\subseteq \als\cup\alup$ is $\lcs(L):=\lcp(L^R)^R$.
\end{definition}
\noindent
For instance, we have $\lcs((bba)^\iomega, (ba)^\iomega a)=
a$,
and $\lcs((ab)^\iomega, (ba)^\iomega b) =
(ab)^\iomega$.

As usual, we write $u\slt v$ if $u\sle v$, but $u\neq v$.
As the $\lcp$ is the infimum w.r.t.\ $\ple$, we also have for $x,y,z\in\{\top\}\cup\als\cup \alup$ and $L,L'\subseteq \{\top\}\cup\als\cup \alup$ that (i) $\lcs(x,y)=\lcs(y,x)$, (ii) $\lcs(x,\lcs(y,z))=\lcs(x,y,z)$, (iii) $\lcs(L)\sle \lcs(L')$ for $L\supseteq L'$, and (iv) $\lcs(Lx)=\lcs(L)x$ for $x\in \{\top\}\cup\als$.
In \refA{lem:lcs-calculus}{Lemma~8} 
we derive further equalities for $\lcs$ that allow to simplify its computation.
In particular, the following two equalities (for $x,y\in\als$) are very useful:
\[
\begin{array}{lcl}
\lcs(x,xy) & = & \lcs(x,y^\iomega) = \lcs(x,xy^k) \quad \text{ for every } k\ge 1\\[2mm]
\lcs(x^\iomega,y^\iomega) & = & \begin{cases}
(xy)^\iomega & \text{ if } xy=yx\\
\lcs(xy,x^\iomega)=\lcs(xy,yx^k) & \text{ if } xy\neq yx, \text{ for every } k\ge 1\\
\end{cases}\\[2mm]
\end{array}
\]
For instance, we have $\lcs((ab)^\iomega, (bab)^\iomega) = bab = \lcs(abbab, (ab)^\iomega )$.
Note also that by definition we have $\ew^\iomega =\top$ s.t.\ $\lcs(x^\iomega, \ew^\iomega)=
(x\ew)^\iomega$.
We will use the following observation frequently:
\begin{lemma}\label{lem:lcs-witness-main-text}
Let $L\subseteq\al^\ast$ be nonempty. 
Then for any $x\in L$ we have $\lcs(L)=\lcs(\lcs(x,z)\mid z\in L)$;
in particular, there is some {\em witness} $y\in L$ (w.r.t.\ $x$)
s.t.\ $\lcs(L)=\lcs(x,y)$.
\end{lemma}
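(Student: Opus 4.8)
The plan is to rely solely on the characterisation of $\lcs$ as the infimum w.r.t.\ the suffix order $\sle$ together with the algebraic properties (i)--(iii) listed above; in particular, for finite words $\lcs(u,v)$ is the greatest common suffix, i.e.\ $\lcs(u,v)\sle u$ and $\lcs(u,v)\sle v$, and any $s$ with $s\sle u$ and $s\sle v$ satisfies $s\sle\lcs(u,v)$. Write $M:=\{\lcs(x,z)\mid z\in L\}$ for the fixed $x\in L$. Since $L$ is nonempty and (via the reverse) every subset of $\als\cup\alup\cup\{\top\}$ has an infimum w.r.t.\ $\sle$, both $\lcs(L)=\inf_{\sle}L$ and $\lcs(M)=\inf_{\sle}M$ are well defined.

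First I would prove $\lcs(L)=\lcs(M)$ by antisymmetry of $\sle$. For the direction $\lcs(L)\sle\lcs(M)$ it suffices to show that $\lcs(L)$ is a lower bound of $M$: for every $z\in L$ we have $\lcs(L)\sle x$ and $\lcs(L)\sle z$ because $x,z\in L$, hence $\lcs(L)$ is a common lower bound of $\{x,z\}$ and therefore $\lcs(L)\sle\lcs(x,z)$; as this holds for all elements of $M$, we obtain $\lcs(L)\sle\inf_{\sle}M=\lcs(M)$. For the converse direction $\lcs(M)\sle\lcs(L)$ I would show that $\lcs(M)$ is a lower bound of $L$: for every $z\in L$ the element $\lcs(x,z)$ lies in $M$, so $\lcs(M)\sle\lcs(x,z)\sle z$ by transitivity, and thus $\lcs(M)\sle\inf_{\sle}L=\lcs(L)$. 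Combining the two inequalities yields the claimed equality.

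Finally, to extract the witness I would use that every element of $M$ is a suffix of the \emph{finite} word $x$, since $\lcs(x,z)\sle x$. The suffixes of a fixed finite word are pairwise comparable under $\sle$ and there are at most $|x|+1$ of them, so $M$ is a finite chain; consequently its infimum is in fact its minimum and is attained by some element $\lcs(x,y)$ with $y\in L$. By the equality just established, $\lcs(x,y)=\lcs(M)=\lcs(L)$, which is exactly the desired witness. I expect the only genuinely substantive step to be this last observation that $M$ is a finite chain: the two inequalities of the first part are pure lattice reasoning, whereas the witness claim crucially exploits that $x$ is a finite word, so that the infimum over the possibly infinite family $M$ collapses to a minimum that is realised inside $L$.
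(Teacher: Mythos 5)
Your proposal is correct and follows essentially the same route as the paper's own proof: both establish $\lcs(L)=\lcs(\lcs(x,z)\mid z\in L)$ by showing each side is a lower bound for the other's defining set, and both then obtain the witness from the observation that all the values $\lcs(x,z)$ are suffixes of the finite word $x$, hence form a finite chain whose infimum is attained. No gaps.
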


\renewcommand{\al}{\sA}

\paragraph*{Involutive monoid} We briefly recall the basic definitions and properties of the finitely generated involutive monoid, but refer the reader for details and a formal treatment to e.g.\ \cite{elements}.
Let $\al$ be a finite alphabet (of opening brackets/letters). From $\al$ we derive the alphabet $\ial:=\{\inv{a}\mid a\in\al\}$ (of closing brackets/letters) where we assume that $\al\cap \ial =\emptyset$. Set $\Br:=\al\cup\ial$.
We use roman letters $p,q,\ldots$ to denote words over $\al$, while Greek letters $\alpha,\beta,\gamma,\ldots$ will denote words over $\Br$.

We extend $\inv{\cdot}$ to an involution on $\Br^\ast$ by means of $\overline{\ew}:=\ew$, $\overline{\overline{a}}:= a$ for all $a\in \al$, and $\overline{\alpha\beta}:= \overline{\beta}\,\overline{\alpha}$ for all other $\alpha,\beta\in \Br^\ast$.
Let $\rdto$ be the binary relation on $\Br^\ast$ defined by $\alpha a \inv{a} \beta \rdto \alpha \beta$ for any $\alpha,\beta\in\Br^\ast$ and $a\in\al$, i.e.\ $\rdto$ cancels nondeterministically one pair of matching opening and closing brackets.
A word $\alpha\in\Br^\ast$ is \emph{reduced} if it does not contain any infix of the form $a\inv{a}$ for any $a\in\al$,
i.e.\ $\alpha$ is reduced if and only if it has no direct successor w.r.t.\ $\rdto$.
For every $\alpha\in\Br^\ast$ canceling all matching brackets in any arbitrary order always results in the same unique reduced word which we denote by $\rd(\alpha)$; we write $\alpha\rdeq \beta$ if $\rd(\alpha)=\rd(\beta)$.
Then $\Br^\ast/\rdeq$ is the free involutive monoid generated by $\al$, and $\rd(\alpha)$ is the shortest word in the $\rdeq$-equivalence class of $\alpha$.
For $L\subseteq \Br^\ast$ we set $\rd(L) := \{\rd(w)\mid w\in L\}$. 

\paragraph*{Well-formed languages and context-free grammars}
We are specifically interested in context-free grammars (CFG) $G$ over the alphabet $\Br$. 
We write $\to_G$ for the rewrite rules of $G$.
We assume that $G$ is reduced to the productive nonterminals that are reachable from its axiom $S$.
For simplicity, we assume for the proofs and constructions that the rules of $G$ are of the form 
\[
X\to_G YZ \qquad X\to_G Y \qquad X\to_G \inv{u}v
\]
for nonterminals $X,Y,Z$ and $u,v\in\als$. 
We write $L_X:=\{\alpha\in\Br^\ast \mid X\to_G^\ast \alpha\}$ for the language generated by the nonterminal $X$. 
Specifically for the axiom $S$ of $G$ we set $L:=L_S$. 
The height of a derivation tree w.r.t.\ $G$ is measured in the maximal number of nonterminals occurring along a path from the root to any leaf, i.e.\ in our case any derivation tree has height at least $1$. 
We write $L_X^{\le h}$ for the subset of $L_X$ of words that possess a derivation tree of height at most $h$ s.t.:
\[
L_X^{\le 1} = \{ \inv{u}v \mid X\to_G \inv{u}v\} \quad L_X^{\le h+2} = L_X^{\le h+1} \cup \bigcup_{X\to_G YZ} L_Y^{\le h+1} L_Z^{\le h+1} \cup \bigcup_{X\to_G Y} L_Y^{\le h+1} 
\]
We will also write $L_X^{< h}$ for $L_X^{\le h-1}$ and $L_X^{=h}$ for $L_X^{\le h}\setminus L_X^{<h}$.
The {\em prefix closure} of $L\subseteq\Br^\ast$ is denoted by $\pfcl(L):=\{ \alpha' \mid \alpha'\alpha''\in L\}$.

\begin{definition}\label{def:wf-main-text}
Let $\alpha\in\Br^\ast$ and $L\subseteq \Br^\ast$.
\begin{enumerate}
\item
Let $\hd(\alpha):=\abs{\alpha}_{\al}-\abs{\alpha}_{\ial}$ be the difference of opening brackets to closing brackets.
$\alpha$ is {\em nonnegative} if $\forall \alpha'\ple \alpha\colon \hd(\alpha')\ge 0$.
$L\subseteq \Br^\ast$ is {\em nonnegative} if every $\alpha\in L$ is nonnegative.
\item
A context-free grammar $G$ with $L(G)\subseteq \Br^\ast$ is {\em nonnegative} if $L(G)$ is nonnegative. For a nonterminal $X$ of $G$ let
$d_X:=\sup (\{-\hd(\alpha') \mid \alpha'\alpha'' \in L_X\})$.
\item
A word $\alpha$ is {\em \twwf\ (short: $\wwf$) resp.\ \twf\ (short: $\wf$)} if $\rd(\alpha)\in \ial^\ast\al^\ast$ resp.\ if $\rd(\alpha)\in \al^\ast$.
A context-free grammar $G$ is $\wf$ if $L(G)$ is $\wf$.
$L\subseteq \Br^\ast$ is {\em \wwf\ resp.\ \wf} if every word of $L$ is \wwf\ resp.\ \wf.
\item
A context-free grammar $G$ is {\em bounded well-formed (\bwf)} if it is \wwf\ and for every nonterminal $X$ there is a (shortest) word $r_X\in\al^\ast$ with $\abs{r_X}=d_X$ s.t.\ $r_X L_X$ is $\wf$.

\end{enumerate}
\end{definition}

\noindent
Note that $d_X\ge 0$ as we can always choose $\alpha'=\ew$ in the definition of $d_X$.

As already mentioned in the abstract and the introduction, we have that $L$ is $\wf$ {\em iff} $\pfcl(L)$ is $\wf$ {\em iff} $L$ is a subset of the prefix closure of the Dyck language generated by $S\to \ew$, $S\to SS$, $S\to aS\inv{a}$  (for $a\in\al$).
We state some further direct consequences of above definition:
(i) $L$ is nonnegative {\em iff} the image of $L$ under the homomorphism that collapses $\al$ to a singleton is $\wf.$
Hence, if $L$ is $\wf$, then $L$ is nonnegative. $\hd$ is an $\omega$-continuous homomorphism from the language semiring generated by $\Br$ to the tropical semiring $\cg{\Z\cup\{-\infty\},\min,+}$. Thus it is decidable in polynomial time if $G$ is nonnegative using the Bellman-Ford algorithm~\cite{DBLP:journals/tcs/EsparzaKL11}.
(ii) If $L$ is not $\wf$, then there exists some $\alpha\in \pfcl(L)\setminus\{\ew\}$ s.t.\ $\hd(\alpha)<0$ or $\alpha\rdeq ua\inv{b}$ for $u\in\als$ and $a,b\in\al$ (with $a\neq b$).
(iii) If $L_X$ is $\wwf$, then $d_X = \sup\{\abs{y} \mid \gamma\in L_X, \rd(\gamma)=\inv{y}z\}$.

\noindent
In particular, because of context-freeness, it follows that, if $G$ is $\wf$, then for every nonterminal $X$ there is $r_X\in\als$ s.t.\ (i) $\inv{r_X}\in \rd(\pfcl(L_X))$, (ii) $\abs{r_X}=d_X$ and (iii)$r_X L_X$ is $\wf$. Hence:
\begin{lemma}\label{lem:char-wf-main-text}
A context-free grammar $G$ is $\wf$ iff $G$ is $\bwf$ with $r_S=\ew$ for $S$ the axiom of $G$.
\end{lemma}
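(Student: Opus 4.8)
I would prove the two implications separately; the direction from $\bwf$ (with $r_S=\ew$) to $\wf$ is immediate, so the work is in the converse. For that easy direction: by definition of $\bwf$ the language $r_S L_S$ is $\wf$, and since we are given $r_S=\ew$ we have $L_S=r_S L_S$, so $L(G)=L_S$ is $\wf$, i.e.\ $G$ is $\wf$. (The hypothesis $r_S=\ew$ is exactly what is needed here, and it is consistent with the requirement $\abs{r_S}=d_S$ precisely when $d_S=0$, which will hold for $\wf$ grammars.)

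For $\wf\Rightarrow\bwf$ with $r_S=\ew$ I would first dispatch the global conditions. Since $\al^\ast\subseteq\ial^\ast\al^\ast$, $\wf$-ness of $L(G)$ implies $\wwf$-ness; and by the stated consequence (i) a $\wf$ language is nonnegative, so every prefix $\alpha'$ of a word in $L_S$ has $\hd(\alpha')\ge 0$, which together with $\alpha'=\ew$ gives $d_S=0$. Hence $r_S=\ew$ is a legitimate choice with $\abs{r_S}=d_S$ and $r_S L_S=L_S$ $\wf$. What then remains is to produce, for every nonterminal $X$, a word $r_X\in\als$ with $\abs{r_X}=d_X$ and $r_X L_X$ $\wf$ (and $L_X$ $\wwf$); this is exactly the claim stated in the excerpt immediately before the lemma, which I may assume, so at this level the lemma is a repackaging of that claim.

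Since that per-nonterminal claim is the real content, I would prove it as follows, and I expect its core to be the main obstacle. As $G$ is reduced, each $X$ is reachable and productive, so there is a derivation $S\to_G^\ast u_1 X\tau$ with a \emph{terminal} left context $u_1\in\Br^\ast$ and a sentential form $\tau$. For any $\gamma\in L_X$, choosing a terminal derivation $\tau\to_G^\ast\tau'$ gives $u_1\gamma\ple u_1\gamma\tau'\in L$, so $u_1,u_1\gamma\in\pfcl(L)$; as $L$ is $\wf$ so is $\pfcl(L)$, whence $p:=\rd(u_1)\in\al^\ast$ and $\rd(u_1\gamma)=\rd(p\,\rd(\gamma))\in\al^\ast$. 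The delicate step is a one-sided cancellation argument: because $p$ consists of opening brackets only, reducing $p\,\rd(\gamma)$ can cancel \emph{only} the maximal closing-bracket prefix of $\rd(\gamma)$ against a suffix of $p$ and must halt at the first opening bracket of $\rd(\gamma)$. Writing $\rd(\gamma)=\inv{y_\gamma}z_\gamma$ with $\inv{y_\gamma}\in\ial^\ast$ maximal, the membership $\rd(p\inv{y_\gamma}z_\gamma)\in\al^\ast$ forces both $z_\gamma\in\al^\ast$ (so $\rd(\gamma)\in\ial^\ast\al^\ast$, giving $\wwf$-ness of $L_X$) and the full absorption of $\inv{y_\gamma}$, i.e.\ $y_\gamma\sle p$.

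Finally I would assemble the $r_X$. The relation $y_\gamma\sle p$ for all $\gamma\in L_X$ means the words $y_\gamma$ are pairwise suffix-comparable and bounded in length by $\abs{p}$, so by the stated consequence (iii) the value $d_X=\sup_\gamma\abs{y_\gamma}$ is a finite maximum attained by some $\gamma^\ast$; I set $r_X:=y_{\gamma^\ast}\in\als$, so $\abs{r_X}=d_X$. Every $y_\gamma$ is then a suffix of $r_X$, giving $\rd(r_X\gamma)=\rd(r_X\inv{y_\gamma})\,z_\gamma\in\al^\ast$, i.e.\ $r_X L_X$ is $\wf$; the property $\inv{r_X}\in\rd(\pfcl(L_X))$ follows by reducing a prefix of $\gamma^\ast$ of minimal $\hd$. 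Together with $d_S=0$ and $r_S=\ew$ this shows $G$ is $\bwf$, completing the forward direction. The main obstacle is securing the terminal left context with $\rd(u_1)\in\al^\ast$ and justifying rigorously that the resulting one-sided cancellation cannot reach into the opening-bracket part of $\rd(\gamma)$; everything else is bookkeeping on top of the already-stated consequences.
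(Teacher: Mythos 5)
Your proof is correct and follows essentially the same route as the paper's: both directions hinge on choosing a well-formed left context of $X$ (the paper takes one of minimal $\hd$, you take an arbitrary terminal one) and using one-sided cancellation against its reduct in $\al^\ast$ to force all reduced closing-bracket prefixes $\inv{y_\gamma}$ of $L_X$ to be suffix-comparable and bounded, so that the maximal one can serve as $r_X$. The remaining differences (deriving $r_S=\ew$ from nonnegativity rather than from the empty left context of $S$, and extracting $r_X$ from a maximizing $\gamma$ rather than from the reduced left context) are cosmetic.
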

The words $r_X$ mentioned in the definition of bounded well-formedness can be computed in polynomial time using the Bellman-Ford algorithm similar to \cite{DBLP:conf/fossacs/TozawaM07}; more precisely, a {\em straight-line program ({{\SLP}})} (see e.g.~\cite{Lohrey2012} for more details on {\SLP}s), i.e.\ a context-free grammar generating exactly one derivation tree and thus word, can be extracted from $G$ for each $r_X$.

\begin{lemma}\label{lem:r_X-main-text}
Let $L=L(G)$ be $\wf$. Let $X$ be some nonterminal of $G$. 
Let $r_X\in \al^\ast$ be the shortest word s.t.\ $r_X L_X$ is \wf.
We can compute an {{\SLP}} for $r_X$ from $G$ in polynomial time.
\end{lemma}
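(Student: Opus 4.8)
The plan is to reduce the computation of $r_X$ to a shortest-path (tropical) computation on $G$ and then to read off an $\SLP$ for $r_X$ along the witnessing derivations. Since $L=L(G)$ is $\wf$, the consequences of Definition~\ref{def:wf-main-text} give that every $L_X$ is $\wwf$, so that $\rd(\gamma)\in\ial^\ast\al^\ast$ for all $\gamma\in L_X$; writing $\rd(\gamma)=\inv{p}q$ with $p,q\in\als$, the number $|p|$ counts the still-unmatched leading closing brackets. The same consequences identify $r_X$ as the word with $\inv{r_X}\in\rd(\pfcl(L_X))$ and $|r_X|=d_X$: it is the reduced image of a prefix of $L_X$ of minimal $\hd$-value, and at such a minimum the reduced word is forced to be purely closing. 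Hence it suffices to compute this one minimal-height prefix in compressed form.

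First I would compute, for every nonterminal $X$, the integers $d_X$ (the maximal deficit, i.e.\ $|r_X|$) and $e_X:=\min\{\hd(\gamma)\mid\gamma\in L_X\}$, together with a fixed minimal-height complete word $\gamma_X\in L_X$ reduced as $\rd(\gamma_X)=\inv{p_X}q_X$. These obey tropical-linear recurrences: $X\to\inv{u}v$ yields $e_X=|v|-|u|$ and $d_X=|u|$; a chain rule $X\to Y$ copies both values; and $X\to YZ$ yields $e_X=e_Y+e_Z$ and $d_X=\max(d_Y,\,d_Z-e_Y)$. For $\wf$ (hence nonnegative) $G$ all these values are finite, and they can be computed in polynomial time by the Bellman--Ford argument already used for nonnegativity~\cite{DBLP:journals/tcs/EsparzaKL11,DBLP:conf/fossacs/TozawaM07}. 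The computation simultaneously selects, for each $X$, a \emph{tight} production witnessing its value; as no height-decreasing pump can occur once the value is finite, these tight productions span a finite acyclic witness structure.

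The core step turns this witness structure into three families of $\SLP$ nonterminals $P_X,Q_X,R_X$ generating $p_X,q_X,r_X$, whose rules follow the tight productions and the reduction $\rd(\inv{p_Y}q_Y\,\inv{p_Z}q_Z)$. By $\wwf$-ness of $L_X$ this reduction never creates a mixed overhang (a pattern $a\inv{b}$), so the shorter of $q_Y,p_Z$ is a suffix of the longer and $\rd(q_Y\inv{p_Z})$ is obtained by plain truncation. Concretely, for a tight $X\to YZ$ either $P_X\to P_Y$ and $Q_X\to (\text{prefix of }q_Y\text{ of length }|q_Y|-|p_Z|)\,Q_Z$, or $P_X\to(\text{prefix of }p_Z\text{ of length }|p_Z|-|q_Y|)\,P_Y$ and $Q_X\to Q_Z$; analogously $R_X\to R_Y$ when $d_X=d_Y$ (the deficit occurs already inside $Y$), and $R_X\to(\text{prefix of }r_Z\text{ of length }d_Z-|q_Y|)\,P_Y$ when $d_X=d_Z-e_Y$ (the deficit is reached by running through $\gamma_Y$ into $Z$); the base case $X\to\inv{u}v$ sets $P_X\to u$, $Q_X\to v$, $R_X\to u$. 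All truncation lengths are the already-computed integers $|p_X|,|q_X|,d_X$, and prefix/suffix truncation and concatenation of $\SLP$s are polynomial-time operations~\cite{Lohrey2012}; processing the nonterminals in the acyclic witness order thus produces the desired $\SLP$ for $r_X=R_X$ in polynomial time.

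The main obstacle is to keep the representation polynomial. Since $d_X$, and hence $|r_X|$, can be exponential in $|G|$ (already a doubling grammar $X\to YY$ forces this), $r_X$ cannot be materialised, and the cancellations in $\rd(q_Y\inv{p_Z})$ must be carried out on compressed words. The point that makes this work is precisely the $\wwf$-ness of every $L_X$: it guarantees that each such cancellation is a pure truncation by a precomputed length, so that each of the $O(n)$ target words is obtained from previously built symbols by $O(1)$ truncations and concatenations, while acyclicity of the witness structure prevents the $\SLP$ from becoming cyclic or of super-polynomial size.
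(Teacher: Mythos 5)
Your proposal is correct and follows essentially the same route as the paper: a Bellman--Ford--style tropical computation of the descent values, a pump-freeness argument bounding the height of a witnessing derivation, and a bottom-up SLP construction in which, by $\wwf$-ness, every reduction $\rd(q_Y\inv{p_Z})$ is a length-controlled truncation followed by concatenation. The paper organizes this slightly differently --- it routes through the prefix-closure grammar $G^p_X$ and an explicit acyclic unfolding into height-indexed copies $Y_0,\ldots,Y_N$, normalizing the resulting family of SLP pairs per nonterminal --- but your recurrence $d_X=\max(d_Y,\,d_Z-e_Y)$ encodes exactly the $X_p\to Y_p$ versus $X_p\to Y Z_p$ split of that prefix grammar, so the two arguments coincide in substance.
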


\paragraph*{Tree-to-word transducers}
\renewcommand{\LTW}{${\sf LT}_\Br$}
We define a \emph{linear} tree-to-word transducer (\LTW) $M = (\Sigma, \Br, Q, S, R)$ where
$\Sigma$ is a finite ranked input alphabet, $\Br$ is the finite (unranked) output alphabet, $Q$ is a finite set of states,
the axiom $S$ is of the form $u_0$ or $u_0 q(x_1) u_1$ with $u_0, u_1 \in \Br^*$ and $R$ is a set of rules of the form
$q(f(x_1,\ldots, x_m)) \to u_0 q_1(x_{\sigma(1)}) u_1 \ldots q_n(s_{\sigma(n)}) u_n$ with
$q, q_i \in Q$, $f \in \Sigma$, $u_i \in \Br^*$, $n \leq m$ and $\sigma$ an injective mapping from $\{1,\ldots, n\}$ to $\{1,\ldots, m\}$.
Since non-deterministic choices of linear transducers can be encoded into the input symbols,
we may, w.l.o.g., consider \emph{deterministic} transducers only.
For simplicity, we moreover assume the transducers to be \emph{total}. 
This restriction can be lifted by additionally taking a top-down deterministic tree 
automaton for the domain into account. 
The constructions introduced in Section~\ref{s:balanced_2TW} would
then have to be applied w.r.t.\ such a domain tree automaton.
As we consider total deterministic transducers there is exactly one rule for each pair $q \in Q$ and $f \in \Sigma$.

A $2$-copy tree-to-word transducer (\oneLTW) is a tuple $N = (\Sigma, \Br, Q, S, R)$ that is defined in the
same way as an \LTW\ but the axiom $S$ is of the form $u_0$ or $u_0 q_1(x_1) u_1 q_2(x_1) u_2$, with $u_i \in \Br^*$.

$\T_\Sigma$ denotes the set of all trees/terms over $\Sigma$.
We define the semantics $\sem{q}: \T_\Sigma \to  \Br^*$ of a state $q$ with rule
$q(f(t_1,\ldots, t_m)) \to u_0 q_1(t_{\sigma(1)}) u_1 \ldots q_n(t_{\sigma(n)}) u_n$
inductively by 
$$\sem{q}(f(t_1,\ldots, t_m)) = \rd(u_0 \sem{q_1}(t_{\sigma(1)}) u_1 \ldots \sem{q_n}(t_{\sigma(n)}) u_n)$$
The semantics $\sem{M}$ of an \LTW\ $M$ with axiom $u_0$ is given by $\rd(u_0)$;
if the axiom is of the form $u_0 q(x_1) u_1$ it is defined by $\rd(u_0 \sem{q}(t) u_1)$ for all $t \in \T_\Sigma$;
while the semantics $\sem{N}$ of a \oneLTW\ $N$ with axiom $u_0$ is again given by $\rd(u_0)$ and
for axiom $u_0 q_1(x_1) u_1 q_2(x_1) u_2$ it is defined by
$\rd(u_0 \sem{q_1}(t) u_1 \sem{q_2}(t) u_2)$ for all $t \in \T_\Sigma$.
For a state $q$ we define the output language $\mathcal{L}(q) = \{\sem{q}(t) \mid t \in \T_\Sigma\}$;
For a \oneLTW\ $M$ we let $\mathcal{L}(M) = \{\sem{M}(t) \mid t \in \T_\Sigma\}$.
Note that the output language of an \LTW\ is context-free and a corresponding context-free grammar
for this language can directly read from the rules of the transducer.

Additionally, we may assume w.l.o.g.\ that all states $q$ of an \LTW\ are \emph{nonsingleton}, i.e.,
$\L(q)$ contains at least two words.
We call a \oneLTW\ $M$ \emph{balanced} if $\mathcal{L}(M) = \{\ew\}$.
We say an \LTW\ $M$ is \emph{well-formed} if $\mathcal{L}(M) \subseteq \al^*$.
Balanced and well-formed states are defined analogously.
We use $\inv{q}$ to denote the inverse transduction of $q$ which is obtained from a copy of the
transitions reachable from $q$ by involution of the right-hand side of each rule.
As a consequence, $\sem{\inv{q}}(t) = \inv{\sem{q}(t)}$ for all $t \in \T_\Sigma$, and thus, $\mathcal{L}(\inv{q}) = \inv{\mathcal{L}(q)}$.
We say that two states $q$, $q'$ are \emph{equivalent} iff for all $t \in \T_\Sigma$, $\sem{q}(t) = \sem{q'}(t)$.
Accordingly, two \oneLTW s $M$, $M'$ are equivalent iff for all $t \in \T_\Sigma$, $\sem{M}(t) = \sem{M'}(t)$.

\section{Balancedness of \oneLTW s}\label{s:balanced_2TW}

Let $M$ denote a \oneLTW.
W.l.o.g., we assume that the axiom of $M$ is of the form
$q_1(x_1) q_2(x_1)$ for two states $q_1,q_2$.
If this is not yet the case, an equivalent \oneLTW\ with this property can be constructed
in polynomial time.
We reduce balancedness of $M$ to decision problems for 
\emph{linear} tree-to-word transducers alone.
\begin{proposition}\label{p:reduction}
The \oneLTW\ $M$ is balanced iff the following two properties hold:
\begin{itemize}
\item Both $\mathcal{L}(q_1)$ and $\inv{\mathcal{L}(q_2)}$ are well-formed;
\item $q_1$ and $\inv{q_2}$ are equivalent.
\end{itemize}
\end{proposition}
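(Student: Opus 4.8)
The plan is to characterize balancedness of $M$ through the reduced output word and then split this characterization into the two stated conditions. Recall that $M$ is balanced iff $\sem{M}(t)=\ew$ for all $t\in\T_\Sigma$, i.e.\ iff $\rd(\sem{q_1}(t)\,\sem{q_2}(t))=\ew$ for every input tree $t$. The key algebraic fact is that in the free involutive monoid $\Br^\ast/\rdeq$, we have $\rd(\alpha\beta)=\ew$ if and only if $\beta\rdeq\inv\alpha$, equivalently $\rd(\alpha)=\rd(\inv\beta)=\inv{\rd(\beta)}$. Applying this with $\alpha=\sem{q_1}(t)$ and $\beta=\sem{q_2}(t)$, balancedness is equivalent to
\[
\forall t\in\T_\Sigma\colon\quad \sem{q_1}(t)\;\rdeq\;\inv{\sem{q_2}(t)}\;=\;\sem{\inv{q_2}}(t),
\]
using the stated property $\sem{\inv{q_2}}(t)=\inv{\sem{q_2}(t)}$. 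Since the semantics already take values in reduced words, this says precisely that $q_1$ and $\inv{q_2}$ are equivalent \emph{over the free involutive monoid} --- but not yet as reduced words in the naive sense, which is exactly where the two conditions enter.

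First I would prove the forward direction. Assume $M$ is balanced, so the displayed equivalence holds for all $t$. Fix any $t$ and write $\gamma:=\rd(\sem{q_1}(t))$, which is reduced and satisfies $\gamma\rdeq\inv{\sem{q_2}(t)}$. I would argue that $\gamma$ must lie in $\al^\ast$: since $\inv{\sem{q_2}(t)}\in\rd^{-1}(\gamma)$ and $\inv{\sem{q_2}(t)}$ is the involution of a word, a short combinatorial argument on the reduced normal form forces $\gamma\in\al^\ast$, so $\mathcal{L}(q_1)\subseteq\al^\ast$, i.e.\ $q_1$ is well-formed; symmetrically $\gamma=\inv{\rd(\sem{q_2}(t))}$ gives $\rd(\sem{q_2}(t))\in\ial^\ast$, hence $\inv{\mathcal{L}(q_2)}=\mathcal{L}(\inv{q_2})\subseteq\al^\ast$ is well-formed. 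Equivalence of $q_1$ and $\inv{q_2}$ is then immediate from the display, since both outputs are now reduced words in $\al^\ast$ and coincide.

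For the converse, assume both well-formedness conditions and $q_1\equiv\inv{q_2}$. Then for every $t$ the words $\sem{q_1}(t)$ and $\sem{\inv{q_2}}(t)=\inv{\sem{q_2}(t)}$ lie in $\al^\ast$ and are equal, say both equal to some $p\in\al^\ast$. Hence $\sem{q_2}(t)=\inv p\in\ial^\ast$, and the full output reduces as $\rd(\sem{q_1}(t)\,\sem{q_2}(t))=\rd(p\,\inv p)=\ew$, so $M$ is balanced. I expect the main obstacle to be the forward direction: one must rule out that the two outputs cancel only after a \emph{nontrivial} interleaving of brackets --- i.e.\ that $\sem{q_1}(t)$ itself contains matched pairs or stray closing brackets that happen to cancel against $\sem{q_2}(t)$. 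The point is that cancellation in $\alpha\beta$ happens only at the interface, so $\rd(\alpha\beta)=\ew$ forces $\rd(\alpha)$ and $\rd(\beta)$ to be exact mirror images; showing this pins each individual output into $\al^\ast$ (resp.\ $\ial^\ast$) rather than merely constraining the concatenation, and it is here that well-formedness genuinely separates from balancedness.
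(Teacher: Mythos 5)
Your converse direction is correct and matches the paper's. The problem lies in the forward direction, and it begins with your ``key algebraic fact'': in the free involutive monoid the equivalence $\rd(\alpha\beta)=\ew \Leftrightarrow \beta\rdeq\inv{\alpha}$ is \emph{false}. Only $a\inv{a}$ cancels in this rewriting system, not $\inv{a}a$ (the paper stresses exactly this difference from the free group immediately after the proposition), so for $\alpha=\inv{a}$ and $\beta=a$ we have $\beta=\inv{\alpha}$ yet $\rd(\alpha\beta)=\inv{a}a\neq\ew$. Consequently your displayed claim that balancedness is \emph{equivalent} to $\forall t\colon \sem{q_1}(t)\rdeq\inv{\sem{q_2}(t)}$ is also false: take $q_1$ with constant output $\inv{a}$ and $q_2$ with constant output $a$; then $q_1$ and $\inv{q_2}$ are equivalent, the display holds, but $M$ outputs $\inv{a}a\neq\ew$. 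The display captures only the second bullet of the proposition and loses the first, which is precisely why both conditions are needed.

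This error then propagates: having weakened ``$\rd(\sem{q_1}(t)\sem{q_2}(t))=\ew$'' to ``$\gamma\rdeq\inv{\sem{q_2}(t)}$'', you try to recover $\gamma\in\al^\ast$ from the observation that ``$\inv{\sem{q_2}(t)}$ is the involution of a word''. That is no constraint at all --- the involution is a bijection on $\Br^\ast$, so every word is the involution of some word, and the counterexample above ($\gamma=\inv{a}$) satisfies your hypotheses while lying in $\ial^\ast$. Well-formedness cannot be extracted from the display; it has to be read off from $\rd(\alpha\beta)=\ew$ directly. Your closing sentence in fact contains the right argument: for reduced $\alpha,\beta$, cancellation in $\alpha\beta$ proceeds one pair at a time at the interface, and each cancelled pair has the form $a\inv{a}$ with $a\in\al$, which forces the last letter of $\alpha$, then the second-to-last, and so on, to be opening brackets. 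This single analysis yields simultaneously $\rd(\alpha)=u\in\al^\ast$ and $\rd(\beta)=\inv{u}$, which is exactly how the paper obtains well-formedness and (after a short contradiction argument) equivalence. If you promote that observation from an afterthought to the actual key lemma --- namely, $\rd(\alpha\beta)=\ew$ iff $\rd(\alpha)=u\in\al^\ast$ and $\rd(\beta)=\inv{u}$ --- the rest of your proof goes through and coincides with the paper's.
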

\begin{proof}
Assume first that $M$ with axiom $q_1(x_1) q_2(x_1)$ is balanced, i.e., $\L(M) = \ew$.
Then for all $w', w''$ with $w = w'w'' \in \L(M)$,
$\rd(w') = u \in \al^*$ and $\rd(w'') = \inv{u}$.
Thus, both $\L(q_1)$ and $\inv{\L(q_2)}$ consist of well-formed words only.
Assume for a contradiction that $q_1$ and $\inv{q_2}$ are not equivalent. 
Then there is some $t \in \T_\Sigma$ such that
$\sem{q_1}(t) \not \rdeq \sem{\inv{q_2}}(t)$.
Let $\sem{q_1}(t) = u \in \al^*$ and $\sem{\inv{q_2}}(t) = \inv{\sem{q_2}(t)} = v$
with $v \in \al^*$ and $u \neq v$.
Then $\rd(\sem{q_1}(t)\sem{q_2}(t)) = \rd(u\inv{v}) \neq \ew$ as $u\neq v$, $u,v\in\al^*$.
Since $M$ is balanced, this is not possible.

Now, assume that $\L(q_1)$ and $\inv{\L(q_2)}$ are well-formed, i.e.,
for all $t\in \T_\Sigma$, $\sem{q_1}(t) \in \al^*$ and $\sem{q_2}(t) \in \ial^*$.
Additionally assume that $q_1$ and $\inv{q_2}$ are equivalent, i.e.,
for all $t \in \T_\Sigma$, $\sem{q_1}(t) = \sem{\inv{q_2}}(t) = \inv{\sem{q_2}(t)}$.
Therefore for all $t\in \T_\Sigma$, $\sem{q_2}(t) = \inv{\sem{q_1}(t)}$
and hence, 
$${\small\rd(\sem{q_1}(t)\sem{q_2}(t)) = \rd(\sem{q_1}(t)\inv{\sem{q_1}(t)}) = \ew}$$
Therefore, the \oneLTW\ $M$ must be balanced. \qed
\end{proof}

\noindent
The output languages of states $q_1$ and $\inv{q_2}$ are generated by means of
context-free grammars of polynomial size.
\begin{example}\label{ex:LTWCFG}
Consider \LTW\  $M$ with input alphabet $\Sigma=\{f^{(2)}, g^{(0)}\}$ (the superscript denotes the rank),
output alphabet $\sB=\{a, \inv{a}\}$, axiom $q_3(x_1)$ and rules
\[
\begin{array}{l@{\quad}l}
q_3(f(x_1,x_2)) \to a q_2(x_1) q_2(x_2) \inv{a} & q_2(g) \to \ew \\
q_2(f(x_1,x_2)) \to a q_1(x_1) q_1(x_2) \inv{a} & q_2(g) \to \ew \\
q_1(f(x_1,x_2)) \to q_3(x_1) q_3(x_2) & q_1(g) \to aa\\
\end{array}
\]
We obtain a CFG\ producing exactly the output language of $M$ by nondeterministically guessing the input symbol, i.e.\ the state $q_i$ becomes the nonterminal $W_i$.
The axiom of this CFG is then $W_3$, and 
as rules we obtain
\[
\begin{array}{l@{\quad}l@{\quad}l}
W_3 \to a W_2W_2 \inv{a} \mid \ew & W_2 \to a W_1W_1 \inv{a} \mid \ew & W_1 \to W_3 W_3 \mid aa
\end{array}
\]
Note that the rules of $M$ and the associated CFG use a form of iterated squaring, i.e.\ $W_3\to^2 W_3^4$,
that allows to encode potentially exponentially large outputs within the rules (see also Example~\ref{ex:the-one}).
In general, words thus have to be stored in compressed form as \SLP s~\cite{Lohrey2012}.
\end{example}

Therefore, Theorem \ref{t:xyz} of Section~\ref{sec:CFGwf} implies that 
well-formedness of $q_1$, $\inv{q_2}$ can be decided in polynomial time.
Accordingly, it remains to consider the equivalence problem for \twf\ \LTW s.
Since the two transducers in question are well-formed, they are equivalent as
\LTW s iff they are equivalent when their outputs are considered over the free group
$\F_\al$.
In the free group $\F_\al$, we additionally have that
$\inv{a}a \rdeq \ew$ --- which does not hold in our rewriting system.
If sets $\L(q_1),\L(\inv{q_2})$ of outputs for $q_1$ and $\inv{q_2}$, however, are \twf, 
it follows for all $u\in\L(q_1),v\in\L(\inv{q_2})$ that $\rd(u\inv{v})=\rd(\rd(u)\rd(\inv{v}))$ 
cannot contain $\inv{a}a$. Therefore, $\rd(u\inv{v})=\ew$ iff $u\inv{v}$ is equivalent to $\ew$
over the free group $\F_\al$.
In \cite[Theorem 2]{Loebel2020}, we have proven that equivalence of \LTW s where the output is interpreted
over the free group, is decidable in polynomial time. 
Thus, we obtain our main theorem.
\begin{theorem}\label{t:main}
Balancedness of \oneLTW s is decidable in polynomial time.
\end{theorem}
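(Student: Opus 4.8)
The plan is to assemble the main theorem entirely from machinery already developed in the excerpt, using \Cref{p:reduction} as the skeleton. First I would normalize the given \oneLTW\ $M$ so that its axiom has the form $q_1(x_1)q_2(x_1)$; the excerpt already asserts this can be done in polynomial time, so I would simply invoke that reduction and fix the two states $q_1,q_2$. By \Cref{p:reduction}, $M$ is balanced if and only if (a) both $\mathcal{L}(q_1)$ and $\inv{\mathcal{L}(q_2)}$ are well-formed, and (b) $q_1$ and $\inv{q_2}$ are equivalent. Thus the proof reduces to exhibiting polynomial-time decision procedures for each of these two conditions and observing that a conjunction of polynomially many polynomial-time tests is again polynomial time.

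For condition (a), I would read off context-free grammars for $\mathcal{L}(q_1)$ and for $\inv{\mathcal{L}(q_2)}=\mathcal{L}(\inv{q_2})$ directly from the transducer rules, as illustrated in \Cref{ex:LTWCFG}; these grammars have size polynomial in $M$. Since the output language of an \LTW\ is context-free, well-formedness of each of these languages is precisely an instance of well-formedness of a context-free grammar, which by Theorem~\ref{t:xyz} of \Cref{sec:CFGwf} is decidable in polynomial time. Here one must be slightly careful that the individual states of $M$, once reachable from $q_1$ or $q_2$, genuinely yield linear transductions whose output languages are context-free; this is guaranteed because each $q_i$ together with the transitions reachable from it forms an \LTW, and $\inv{q_2}$ is likewise an \LTW\ by construction of the inverse transduction (so that $\mathcal{L}(\inv{q_2})=\inv{\mathcal{L}(q_2)}$).

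For condition (b), I would first use that once well-formedness has been established we may interpret the outputs of $q_1$ and $\inv{q_2}$ over the free group $\F_\al$ rather than over the involutive monoid. The key observation, already spelled out in the excerpt, is that for \twf\ outputs $u\in\mathcal{L}(q_1)$ and $v\in\mathcal{L}(\inv{q_2})$ the reduced word $\rd(u\inv{v})$ can never contain a factor $\inv{a}a$, so $\rd(u\inv{v})=\ew$ in the monoid coincides with equivalence of $u\inv{v}$ to $\ew$ over $\F_\al$. Hence $q_1$ and $\inv{q_2}$ are equivalent as \LTW s if and only if they are equivalent with outputs taken over $\F_\al$. By \cite[Theorem~2]{Loebel2020}, equivalence of \LTW s over the free group is decidable in polynomial time, which discharges condition (b).

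The main obstacle, and the place where the real content lives, is the well-formedness test underlying condition (a): everything rests on Theorem~\ref{t:xyz}, i.e.\ on the polynomial-time decidability of well-formedness for context-free languages established in \Cref{sec:CFGwf}, which in turn depends on computing the longest common reduced suffix of a CFG in polynomial time. The subtlety to flag explicitly in the write-up is that the outputs of \oneLTW\ states may be exponentially long and must be handled in compressed \SLP\ form, as noted around \Cref{ex:LTWCFG}; so the polynomial-time claim is genuinely a claim about compressed representations, and the reduction preserves polynomiality only because both Theorem~\ref{t:xyz} and \cite[Theorem~2]{Loebel2020} operate on such grammars/transducers directly. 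Once these two imported results are in place, the theorem follows immediately by combining them through \Cref{p:reduction}.
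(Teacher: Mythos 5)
Your proposal is correct and follows essentially the same route as the paper: it normalizes the axiom, applies Proposition~\ref{p:reduction}, decides well-formedness of the two output languages via the grammars of Example~\ref{ex:LTWCFG} and Theorem~\ref{t:xyz}, and decides equivalence of $q_1$ and $\inv{q_2}$ by passing to the free group and invoking \cite[Theorem~2]{Loebel2020}. The additional remark on handling outputs in compressed \SLP\ form is consistent with the paper's treatment and does not change the argument.
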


\section{Deciding whether a context-free language is well-formed}\label{sec:CFGwf}
\renewcommand{\al}{\Sigma}
As described in the preceding sections, given a \oneLTW\ we split it into the two underlying \LTW s that process a copy of the input tree.
We then check that each of these two \LTW s are equivalent w.r.t.\ the free group.
As sketched in \Cref{ex:LTWCFG} we obtain a context-free grammar
for the output language of each of these \LTW~s.
It then remains to check that both context-free grammars are well-formed.
In order to prove that we can decide in polynomial time whether a context-free grammar is well-formed (short: $\wf$), we proceed as follows:

First, we introduce in Definition~\ref{def:lcsext-main-text} 
the \emph{maximal suffix extension} of a language $L\subseteq \als$ w.r.t.\ the $\lcs$ (denoted by $\lcsext(L)$), 
i.e.\ the longest word $u\in\al^\infty$ s.t.\ $\lcs(uL) = u\lcs(L)$.
We then show that the relation $L\tseq L':\Leftrightarrow \lcs(L)=\lcs(L')\wedge \lcsext(L)=\lcsext(L')$ is an equivalence relation on $\als$ that respects both union and concatenation of languages (see Lemma~\ref{lem:lcs-computation-main-text}).
It then follows that for every language $L\subseteq\als$ there is some subset $\tsn(L)\subseteq L$ of size at most $3$ with $L\tseq \tsn(L)$.

We then use $\tsn$ to compute a finite $\tseq$-equivalent representation $\sT_X^{\le h}$ of the \emph{reduced} language generated by each nonterminal $X$ of the given context-free grammar inductively for increasing derivation height $h$.
In particular, we show that we only have to compute up to derivation height $4N+1$ (with $N$ the number of nonterminals)
in order to decide whether $G$ is $\wf$:
In~Lemma~\ref{thm:converge-wf-main-text} we show that, if $G$ is $\wf$,
then we have to have $\sT_X^{\le 4N+1} \tseq \sT_X^{\le 4N}$ for all nonterminals $X$ of $G$.
The complementary result is then shown in Lemma~\ref{thm:not-wf--main-text},
i.e.\ if $G$ is not $\wf$, then 
we either cannot compute up to $\sT_X^{\le 4N+1}$
as we discover some word that is not $\wf$,
or we have $\sT_X^{\le 4N}\not\tseq \sT_{X}^{\le 4N+1}$ for at least one nonterminal $X$.

\paragraph*{Maximal suffix extension and $\lcs$-equivalence}

We first show that we can compute the longest common suffix of the union $L\cup L'$ and the concatenation $LL'$ of two languages $L,L'\subseteq \als$
if we know both $\lcs(L)$ and $\lcs(L')$, and in addition, the longest word $\lcsext(L)$ resp.\ $\lcsext(L')$ by which we can extend $\lcs(L)$ resp.\ $\lcs(L')$ when concatenating another language from left.
In contrast to the computation of the $\lcp$ presented in~\cite{DBLP:conf/stacs/LuttenbergerPS18},
we have to take the maximal extension $\lcsext$ explicitly into account.
In this paragraph we do not consider the involution, thus let $\al$ denote an arbitrary alphabet.
\begin{definition}\label{def:lcsext-main-text}
For $L\subseteq\al^\ast$ with $R=\lcs(L)$ the \emph{maximal suffix extension ($\lcsext$)} of $L$ is defined by $\lcsext(L):= \lcs( z^{\iomega} \mid zR \in L)$. 
\end{definition}

\noindent
Recall that by definition $\lcsext(\emptyset)=\lcs(\emptyset)=\top$ and $\lcsext(\{R\})=\lcs(\ew^\iomega)=\top$.
The following example motivates the definition of $\lcsext$:
\begin{example}\label{ex:lcsext}
Consider the language $L=\{R,xR,yR\}$ with $\lcs(L)=R$ and $\lcsext(L)=\lcs(x^\iomega,y^\iomega)$. 
Assume we prepend some word $u\in\als$ to $L$ resulting in the language $uL = \{uR,uxR,uyR\}$, see the following picture for an illustration (dotted boxes represent copies of $z\in\{x,y\}$ stemming from the usual line of argumentation that, if $z$ is a suffix of $u=u'z$, then $uzR=u'zzR$, and thus eventually covering all of $u$ by $z^\iomega$):
\begin{center}
\scalebox{0.7}{
\begin{tikzpicture}
\begin{scope}
\draw (0,0) rectangle (2,0.5);
\node at (1,0.25) {$R$};
\draw (-5,0) rectangle (0,0.5);
\node at (-2.5,0.25) {$u$};
\end{scope}
\begin{scope}[yshift=0.5cm]
\draw (0,0) rectangle (2,0.5);
\node at (1,0.25) {$R$};
\draw (-7,0) rectangle (-2,0.5);
\node at (-4.5,0.25) {$u$};
\draw (-2,0) rectangle (0,0.5);
\node at (-1,0.25) {$x$};
\end{scope}
\begin{scope}[yshift=1cm]
\draw[dotted] (0,0) rectangle (2,0.5);
\node at (1,0.25) {$R$};
\draw[dotted] (-2,0) rectangle (0,0.5);
\node at (-1,0.25) {$x$};
\node at (-3,0.25) {$x$};
\node at (-5,0.25) {$x$};
\node at (-7,0.25) {$x$};
\draw[dotted] (-4,0) rectangle (-2,0.5);
\draw[dotted] (-6,0) rectangle (-4,0.5);
\draw[dotted] (-8,0) rectangle (-6,0.5);
\end{scope}
\begin{scope}[yshift=-0.5cm]
\draw (0,0) rectangle (2,0.5);
\node at (1,0.25) {$R$};
\draw (-8,0) rectangle (-3,0.5);
\node at (-5.5,0.25) {$u$};
\draw (-3,0) rectangle (0,0.5);
\node at (-1.5,0.25) {$y$};
\end{scope}
\begin{scope}[yshift=-1cm]
\draw[dotted] (0,0) rectangle (2,0.5);
\node at (1,0.25) {$R$};
\draw[dotted] (-3,0) rectangle (0,0.5);
\node at (-1.5,0.25) {$y$};
\node at (-4.5,0.25) {$y$};
\node at (-7.5,0.25) {$y$};
\draw[dotted] (-3,0) rectangle (-6,0.5);
\draw[dotted] (-6,0) rectangle (-9,0.5);
\end{scope}
\end{tikzpicture}
}
\end{center}
As motivated by the picture, $\lcs(uL)$ is given by $\lcs(u,x^\iomega,y^\iomega)R$.
Using the concept of ultimately left-periodic words, we may also formalize this as follows:
\[
\begin{array}{lcl@{\quad}l}
\lcs(u\{xR,yR,R\})
& = &\lcs(u,ux,uy)R\\
& = & \lcs(\lcs(u,ux),\lcs(u,uy)) R & (\text{as }\lcs(u,ux)=\lcs(u,x^\iomega))\\
& = &  \lcs(\lcs(u,x^\iomega),\lcs(u,y^\iomega))R\\
& = & \lcs(u,\lcs(x^\iomega,y^\iomega))R
& =  \lcs(u, \lcsext(L)) \lcs(L)
\end{array}
\]
In particular, if $xy=yx$, we can extend $\lcs$ by any finite suffix of $\lcsext(L)=(xy)^\iomega$ (note that, if $x=\ew=y$, then $\lcsext(L)=\ew^\iomega=\top$ is defined to be the greatest element w.r.t.\ $\sle$); 
but if $xy\neq yx$, we can extend it at most to $\lcsext(L)=\lcs(x^\iomega,y^\iomega)=\lcs(xy,yx)\slt xy$.
Essentially, only three cases can arise as illustrated by the following three examples:

First, consider $L_1 = \{ab, cb\}$ with $\lcs(L_1) = b$.
Obviously, for every word $u\in\als$ we have that $\lcs(uL_1) = \lcs(L_1)$ and so we should have $\lcsext(L_1)=\ew$.
Instantiating the definition we obtain indeed $\lcsext(L_1) = \lcs(a^\iomega,b^\iomega) = \lcs(\ew) = \ew$.

As another example consider $L_2 = \{a, baa\}$ with $\lcs(L_2)=a$.
Here, we obtain 
$\lcsext(L_2)=\lcs(\ew^\iomega, (ba)^\iomega)=\lcs(\top,(ba)^\iomega)= (ba)^\iomega$,
i.e.\ the suffix of $L_2$ can be extended by any finite suffix of $(ba)^\iomega=\ldots bababa$.

Finally, consider $L_3=\{b,ba^nb,aba^nb\}$ with $\lcs(L_3)=b$ for some fixed $n\in\N$.
As mentioned in Section~\ref{s:prelim}, we have $\lcs(x^\iomega,y^\iomega)=\lcs(xy,yx)$ for $xy\neq yx$.
We thus obtain in this case 
$\lcs((ba^n)^\iomega,(aba^n)^\iomega)=\lcs(ba^n\,aba^n,aba^n\,ba^n)= a^nba^n$.
The classic result by Fine and Wilf states that, if $xy\neq yx$, then $\abs{\lcs(x^\iomega,y^\iomega)}< \abs{x}+\abs{y}-\gcd(\abs{x},\abs{y})$.
Thus $x=ba^n$ and $y=aba^n$ constitute an extremal case where the $\lcs$ is only finitely extendable.
\end{example}

If $\lcs(L)$ is not contained in $L$, then $\lcs(L)$ has to be a strict suffix of every shortest word in $L$, and thus immediately $\lcsext(L)=\ew$.
As in the case of the $\lcs$, also $\lcsext(L)$ is already defined by two words in $L$:
\begin{lemma}\label{lem:lcsextwit-main-text}
Let $L\subseteq\al^\ast$ with $\abs{L}\ge 2$ and $R:=\lcs(L)$.
Fix any $xR\in L\setminus\{R\}$.
Then there is some $yR\in L\setminus\{R\}$ s.t.\
$\lcsext(L)=\lcs(x^\iomega,y^\iomega) =\lcs(x^\iomega,y^\iomega,z^\iomega)
$ for all $zR\in L$.
If $xy=yx$, then $R\in L$.
\end{lemma}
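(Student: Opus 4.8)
The plan is to express $\lcsext(L)$ as an infimum over a family of common suffixes of the single word $x^\iomega$, and then argue that this infimum is actually attained, which furnishes the witness $y$.

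Write $R:=\lcs(L)$ and set $L':=\{z^\iomega \mid zR\in L\}\subseteq \als\cup\alup\cup\{\top\}$, so that by Definition~\ref{def:lcsext-main-text} we have $\lcsext(L)=\lcs(L')$. Since $xR\in L$ and $x\neq\ew$, the word $x^\iomega$ lies in $L'$ and satisfies $x^\iomega\in\alup$ (in particular $x^\iomega\neq\ew$). First I would consider
\[
S := \{\lcs(x^\iomega,z^\iomega) \mid z^\iomega\in L'\},
\]
whose elements are all common suffixes of $x^\iomega$, hence suffixes of $x^\iomega$. Because $\lcs$ is the infimum with respect to $\sle$, its associativity together with $x^\iomega\in L'$ gives $\lcs(S)=\lcs(x^\iomega,\lcs(L'))=\lcs(L')=\lcsext(L)$.

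Next I would show that this infimum is attained inside $S$. The suffixes of $x^\iomega$ form a chain under $\sle$ of order type $\omega+1$: the finite suffixes ordered by increasing length, with $x^\iomega$ above all of them. Consequently any nonempty subset of this chain that contains a finite element has a least element, and the only subset avoiding all finite elements is $\{x^\iomega\}$. If $S=\{x^\iomega\}$, then $\lcsext(L)=x^\iomega=\lcs(x^\iomega,x^\iomega)$ and I take $y:=x$. Otherwise $S$ contains a finite element, and its least element equals $\lcs(x^\iomega,y^\iomega)$ for some $y^\iomega\in L'$; since $z=\ew$ only contributes $\lcs(x^\iomega,\top)=x^\iomega$, the top of the chain, this witness may be chosen with $y\neq\ew$, i.e.\ $yR\in L\setminus\{R\}$. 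In either case $\lcsext(L)=\lcs(x^\iomega,y^\iomega)$. The triple equality then follows for every $zR\in L$, because $\lcsext(L)=\lcs(L')\sle z^\iomega$ implies $\lcs(x^\iomega,y^\iomega,z^\iomega)=\lcs(\lcsext(L),z^\iomega)=\lcsext(L)$.

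It remains to treat the commuting case. If $xy=yx$, then by the identities recalled in Section~\ref{s:prelim} we have $\lcs(x^\iomega,y^\iomega)=(xy)^\iomega=x^\iomega$, so $\lcsext(L)=x^\iomega\neq\ew$. By the observation preceding the statement --- if $R\notin L$, then $R$ is a strict suffix of every word of $L$, two of these words disagree on the letter immediately left of $R$, and hence $\lcsext(L)=\ew$ --- the contrapositive forces $R\in L$. The main obstacle is the attainment argument of the second paragraph, i.e.\ producing an actual witness $y$ rather than merely an infimum; this is the analogue over $\alup$ of Lemma~\ref{lem:lcs-witness-main-text}, and it hinges on the fact that all elements of $S$ are suffixes of the one word $x^\iomega$ and therefore lie on a single chain.
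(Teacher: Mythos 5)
Your proof is correct and follows essentially the same route as the paper's: both reduce $\lcsext(L)$ to the infimum of the pairwise suffixes $\lcs(x^\iomega,z^\iomega)$, which all lie on the chain of suffixes of $x^\iomega$, and both observe that this infimum is attained unless every such pairwise $\lcs$ equals $x^\iomega$ (equivalently, every $z$ commutes with $x$, in which case $y:=x$ works). The only cosmetic differences are that you make the attainment step explicit via the order type of the chain where the paper is terse, and that you obtain $R\in L$ from the contrapositive of the preceding observation ($R\notin L\Rightarrow\lcsext(L)=\ew$) rather than from the paper's direct primitive-root argument that the shortest word $p^iR$ would otherwise be $\lcs(L)$.
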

\noindent

\noindent
We show that 
we can compute the $\lcs$ and the extension $\lcsext$ of the union resp.\ the concatenation of two languages
solely from their $\lcs$ and $\lcsext$.
To this end, we define the $\lcs$-\emph{summary} of a language as:
\begin{definition}\label{def:lcs-cong-main-text}
For $L\subseteq \al^\ast$ set $\sig(L):= (\lcs(L),\lcsext(L))$.
The equivalence relation $\tseq$ on $2^{\al^\ast}$ is defined by:
$L\tseq L' \text{ iff }\sig(L)=\sig(L')$.
\end{definition}

\begin{lemma}\label{lem:lcs-computation-main-text}
Let $L,L'\subseteq \al^\ast$ with $\sig(L)=(R,E)$ and $\sig(L')=(R',E')$. 
If $L=\emptyset$ or $L'=\emptyset$, then $\sig(L\cup L')=(\lcs(R,R'),\lcs(E,E'))$, and $\sig(LL')=(\top,\top)$. Assume thus $L\neq \emptyset \neq L'$ which implies $R\neq \top \neq R'$. Then:
\begin{itemize}
\item 
$\lcs(L\cup L') = \lcs(R,R')$ and $\lcs(LL') = \lcs(R,E')R'$.
\item 
If $\lcs(R,R') \not\in \{R,R'\}$, then $\lcsext(L\cup L')=\ew$; 
else w.l.o.g.\ $R'=\delta R$ and $\lcsext(L\cup L') = \lcs(E,\lcs(E',E'\delta)\delta)$. 
\item
If $\lcs(R,E')\slt R$, then $\lcsext(LL') = \ew$;
else $E'=\delta R$ and $\lcsext(LL') = \lcs(E,\delta)$.
\end{itemize}
\end{lemma}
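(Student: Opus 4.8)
The whole lemma reduces to a single \emph{prepend formula}: for a nonempty $L\subseteq\als$ with $\sig(L)=(R,E)$ and any $u\in\als$,
\[
\lcs(uL)=\lcs(u,E)\,R .
\]
First I would establish this identity, since it is exactly the chain of equalities carried out in Example~\ref{ex:lcsext} for a three-element language. By Lemma~\ref{lem:lcs-witness-main-text} and Lemma~\ref{lem:lcsextwit-main-text} both $\lcs(L)$ and $\lcsext(L)=E$ are already realised by at most two words of $L$, so it suffices to prove the formula for $L=\{xR,yR\}$ (and the degenerate case $L=\{R,xR\}$); there it is precisely the computation in Example~\ref{ex:lcsext}, using the $\lcs$-calculus of Section~\ref{s:prelim} (in particular $\lcs(x,xy)=\lcs(x,y^\iomega)$; see \refA{lem:lcs-calculus}{Lemma~8}). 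Throughout I use that $\lcs$ is the infimum for $\sle$, hence commutative, associative and idempotent, that $\lcs(\alpha y,\beta y)=\lcs(\alpha,\beta)\,y$ (property~(iv)), and that $\lcs(u,v)=u$ iff $u\sle v$. I also use the prose characterisation of $\lcsext$: it is the $\sle$-largest $u$ with $\lcs(uL)=u\,\lcs(L)$, which via the prepend formula becomes $\lcs(u,E)=u$, i.e.\ $u\sle E$.

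The two $\lcs$-formulas then follow quickly. For the union, associativity gives $\lcs(L\cup L')=\lcs(\lcs(L),\lcs(L'))=\lcs(R,R')$. For the concatenation I would fix $w\in L$, apply the prepend formula to get $\lcs(wL')=\lcs(w,E')R'$, then take the $\lcs$ over all $w\in L$ and factor out the common suffix $R'$ by property~(iv):
\[
\lcs(LL')=\lcs\bigl(\lcs(w,E')\mid w\in L\bigr)\,R'=\lcs\bigl(E',\lcs(L)\bigr)\,R'=\lcs(R,E')\,R',
\]
where the middle step is idempotence of $\lcs$ and the last uses $\lcs(L)=R$.

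For the two $\lcsext$-formulas I would avoid manipulating ultimately-left-periodic words directly and instead combine the characterisation above with the prepend and concatenation $\lcs$-formulas. For the union, $\lcs(u(L\cup L'))=\lcs(\lcs(u,E)R,\ \lcs(u,E')R')$. If $\lcs(R,R')\notin\{R,R'\}$, then $R$ and $R'$ diverge strictly inside both words; prepending anything leaves the common suffix capped at $\lcs(R,R')$, so $\lcs(u(L\cup L'))=\lcs(R,R')$ is independent of $u$, and only $u=\ew$ satisfies $\lcs(u(L\cup L'))=u\,\lcs(R,R')$, giving $\lcsext(L\cup L')=\ew$. Otherwise, w.l.o.g.\ $R'=\delta R$; factoring $R$ turns the defining condition into $\lcs(\lcs(u,E),\,\lcs(u,E')\delta)=u$, whose left slot forces $u\sle E$ and whose right slot is $\sle$-maximised at $\lcs(E',E'\delta)\delta$, so the $\sle$-largest admissible $u$ is $\lcsext(L\cup L')=\lcs(E,\lcs(E',E'\delta)\delta)$. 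For the concatenation I would write $\lcs(uLL')=\lcs(\lcs(u,E)R,E')\,R'$, applying the concatenation formula to $(uL)L'$ and then the prepend formula. If $\lcs(R,E')\slt R$ the divergence again lies strictly inside $R$, so $\lcs(\lcs(u,E)R,E')=\lcs(R,E')$ is $u$-independent and only $u=\ew$ works, i.e.\ $\lcsext(LL')=\ew$; if instead $\lcs(R,E')=R$, write $E'=\delta R$, factor $R$, and the condition becomes $\lcs(\lcs(u,E),\delta)=u$, whose $\sle$-largest solution is $u=\lcs(E,\delta)$, giving $\lcsext(LL')=\lcs(E,\delta)$.

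\textbf{Main obstacle.}
The genuinely delicate points are (a) proving the prepend formula in full, including the cases where $E$ is infinite (an element of $\alup$) or $\top$, which is where ultimately-left-periodic words and the Fine--Wilf bound from Example~\ref{ex:lcsext} become unavoidable; and (b) the recurring claim that divergence strictly inside a fixed suffix cannot be repaired by prepending, i.e.\ $\lcs(\alpha R,\gamma)=\lcs(R,\gamma)$ whenever $\lcs(R,\gamma)\slt R$ (and its symmetric two-sided version used for $R$ versus $R'$). This must be isolated as an auxiliary lemma, together with the fact that $\{u\mid \lcs(u,E')\delta\sge u\}=\{u\mid u\sle\lcs(E',E'\delta)\delta\}$, which is the only nonroutine suffix-lattice computation behind the union case. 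Finally the degenerate and infinite values ($L'$ a singleton so $E'=\top$, or $\delta$ infinite) have to be checked against the conventions $\lcs(u,\top)=u$ and $\ew^\iomega=\top$, so that the displayed formulas remain valid at the boundary.
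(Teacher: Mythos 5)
Your proposal is correct, but it takes a genuinely different route from the paper. The paper proves the two clauses separately (\Cref{lem:lcs-union,lem:lcs-concat} in the appendix) by working directly with the definition $\lcsext(M)=\lcs(z^\iomega\mid z\,\lcs(M)\in M)$: it fixes witnesses via \Cref{lem:lcs-witness-main-text,lem:lcsextwit-main-text}, splits on $R\in L$ versus $R\notin L$, $R'\in L'$ versus $R'\notin L'$, $E'=\top$, $E'=\ew$, $R\sle E'$, and then manipulates the resulting ultimately left-periodic words using conjugates $Rz=\hat z R$ and identities such as $\lcs(\delta^\iomega,(z\delta)^\iomega)=\lcs(\delta^\iomega,z^\iomega)\delta$ from \Cref{lem:lcs-calculus}. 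You instead factor everything through a single prepend formula $\lcs(uM)=\lcs(u,\lcsext(M))\lcs(M)$ and the induced variational characterisation of $\lcsext(M)$ as the supremum of $\{u\in\als\mid \lcs(uM)=u\,\lcs(M)\}$, turning both $\lcsext$ clauses into computations of down-sets in the suffix order. This buys modularity and keeps $\alup$ confined to two auxiliary facts — the prepend formula itself and the set identity $\{u\mid u\sle\lcs(u,E')\delta\}=\{u\mid u\sle\lcs(E',E'\delta)\delta\}$ — both of which you correctly isolate and both of which do go through with the calculus of \Cref{lem:lcs-calculus}; the paper's version is more direct but repeats the case analysis in each clause. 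Two small remarks. First, your middle step $\lcs(\lcs(w,E')\mid w\in L)=\lcs(\lcs(L),E')$ is associativity of the infimum rather than idempotence. Second, your flagged boundary check is genuinely needed: when $E'=\top$ and $\delta\neq\ew$ the displayed expression $\lcs(E,\lcs(E',E'\delta)\delta)$ collapses to $\lcs(E,\top)=E$ under the convention $\top\delta=\top$, whereas the correct value (which your derivation produces, since the admissibility condition $u\sle u\delta$ gives $u\sle\delta^\iomega$) is $\lcs(E,\delta^\iomega)$; the paper's own proof avoids this by computing $\lcs(\delta^\iomega,E')\delta$ before rewriting it into the stated form, so this is an artefact of the statement rather than a defect of your argument.
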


\begin{example}\label{ex:lcs-ops}
Lemma~\ref{lem:lcs-computation-main-text} can be illustrated as follows: 
\begin{center}
\scalebox{0.7}{
\begin{tikzpicture}[yscale=0.5]
\node at (-14,-1) {$(L\cup L')$};
\draw[white] (-15,-2) rectangle (-13,0);
\draw (0,0) rectangle (-3,1);
\draw (0,0) rectangle (-5,-1);
\draw (0,-1) rectangle (-5,-2);
\node at (-1,0.5) {$\lcs(L)$};
\node at (-1,-0.5) {$\lcs(L')$};
\node at (-1,-1.5) {$\lcs(L')$};
\draw[dotted] (-3,0) -- (-3,-1);
\node at (-4,-0.5) {$\delta$};
\draw[dotted] (-8,0) rectangle (-3,1);
\node at (-6,0.5) {$\lcsext(L)$};
\draw[dotted] (-11,-1) rectangle (-5,-2);
\node at (-6,-1.5) {$\lcsext(L')$};
\draw[dotted] (-7,-1) rectangle (-5,0);
\node at (-6,-0.5) {$\delta$};
\draw[dotted] (-9,-1) rectangle (-7,0);
\node at (-8,-0.5) {$\delta$};
\draw[dotted] (-11,-1) rectangle (-7,0);
\node at (-10,-0.5) {$\delta$};
\end{tikzpicture}
}
\end{center}
\begin{center}
\scalebox{0.7}{
\begin{tikzpicture}[yscale=0.5]
\node at (-12,-1) {$(LL')$};
\draw[white] (-13,-2) rectangle (-11,0);
\draw (-3,1) rectangle (0,0);
\draw[dotted] (-8,-1) rectangle (0,0);
\draw (0,0) rectangle (2,1);
\draw (0,0) rectangle (2,-1);
\node at (1,-0.5) {$\lcs(L')$};
\node at (1,0.5) {$\lcs(L')$};
\node at (-1,0.5) {$\lcs(L)$};
\node at (-1,-0.5) {$\lcs(L)$};
\node at (-5.5,-0.5) {$\delta$};
\draw[dotted] (-3,-1) -- (-3,0);
\draw[dotted] (-3,1) rectangle (-9,0);
\node at (-4,0.5) {$\lcsext(L)$};
\draw[dotted] (-8,-2) rectangle (0,-1);
\draw (0,-1) rectangle (2,-2);
\node at (-4,-1.5) {$\lcsext(L')$};
\node at (1,-1.5) {$\lcs(L')$};
\end{tikzpicture}
}
\end{center}
For instance, consider $L=\{a,baa\}$ and $L'=\{aa,baaa\}$ s.t.\ $\sig(L)=(a,(ba)^\iomega)$ and $\sig(L')=(aa,(ba)^\iomega)$.
Applying Lemma~\ref{lem:lcs-computation-main-text}, we obtain for the union
$\lcs(L\cup L')=\lcs\bigr(a,aa\bigl)=a$ and 
$\lcsext(L\cup L')= \lcs\bigl((ba)^\iomega,\lcs((ba)^\iomega,(ba)^\iomega a)a\bigr) =a$.
In case of the concatenation, Lemma~\ref{lem:lcs-computation-main-text} yields
$\lcs(LL')=\lcs\bigr(a,(ba)^\iomega\bigl)aa=aaa$
and 
$\lcsext(LL')=\lcs\bigl((ba)^\iomega, (ab)^\iomega)=\ew$.
\end{example}

As both the $\lcs$ and the $\lcsext$ are determined by already two words (cf.~\Cref{lem:lcs-witness-main-text,lem:lcsextwit-main-text}),
it follows that every $L\subseteq\als$ is $\tseq$-equivalent to some sublanguage $\tsn(L)\subseteq L$ consisting of at most three words where the words $xR,yR$ can be chosen arbitrarily up to the stated constraints (with $R=\lcs(L)$):
\[
\tsn(L):=\begin{cases}
L & \text{ if } \abs{L}\le 2\\
\{R,xR,yR\} & \text{ if } \{R,xR,yR\} \subseteq L\wedge \lcsext(L)=\lcs(x^\iomega,y^\iomega)\\
\{xR,yR\} & \text{ if } R=\lcs(xR,yR)\wedge R\not\in L\wedge \{xR,yR\}\subseteq L
\end{cases}
\]

\renewcommand{\al}{\textsf{A}}

\paragraph*{Deciding well-formedness}
\renewcommand{\al}{\sA}
\noindent
For the following, we assume that $G$ is a context-free grammar over $\Br=\al\cup\ial$ with nonterminals $\vars$. Set $N:=\abs{\vars}$.
We further assume that $G$ is nonnegative, 
and that we have computed for every nonterminal $X$ of $G$ a word $r_X \in \als$ (represented as an {\SLP}) s.t.\ $\abs{r_X} = d_X$ and $\inv{r_X}\in \pfcl(\rd(L_X))$.\footnote{
$\inv{r_X}$ is (after reduction) a longest word of closing brackets in $\rd(L_X)$ (if $G$ is $\wf$, then $r_X$ is unique).
An {\SLP} encoding $r_X$ can be computed in polynomial time while checking that $G$ is nonnegative;
see Definition~\ref{def:wf-main-text} and the subsequent explanations, and the proof of \refA{lem:r_X-main-text}{\Cref{lem:r_X-main-text}}.
All required operations on words run in time polynomial in the size of the \SLP s representing the words, see e.g.~\cite{Lohrey2012}.
}
In order to decide whether $G$ is $\wf$ we compute the languages $\rd(r_XL_X^{\le h})$ modulo $\tseq$ for increasing derivation height $h$ using fixed-point iteration. 
Assume inductively that (i) $r_X L_X^{\le h}$ is $\wf$ and (ii) that we have computed $\sT_X^{\le h}:=\tsn(\rd(r_XL_X^{\le h}))\tseq \rd(r_XL_X^{\le h})$
 for all $X\in\vars$ up to height $h$. Then we can compute $\tsn(\rd(r_XL_X^{\le h+1}))$ for each nonterminal as follows:
\[
{\small\begin{array}{cl}
 & \rd(r_X L_X^{\le h+1})\\
 = & \rd(r_X L_X^{\le h})
\cup  \bigcup_{X\to_G Y} \rd(r_X\inv{r_Y}\ r_Y L_Y^{\le h})
\cup  \bigcup_{X\to_G YZ} \rd(r_X \inv{r_Y}\ r_Y L_Y^{\le h}\ \inv{r_Z}\ r_Z L_Z^{\le h})\\
\tseq & \sT_X^{\le h}
\cup  \bigcup_{X\to_G Y} \rd(r_X\inv{r_Y}\ \sT_Y^{\le h})
\cup  \bigcup_{X\to_G YZ} \rd(r_X \inv{r_Y}\ \sT_Y^{\le h}\ \inv{r_Z}\ \sT_Z^{\le h})\\
 \tseq & \tsn\Bigl(\rd\Bigl(\sT_X^{\le h}
 \cup \bigcup_{X\to_G Y} r_X\inv{r_Y}\ \sT_Y^{\le h}
 \cup \bigcup_{X\to_G YZ} r_X \inv{r_Y}\ \sT_Y^{\le h}\ \inv{r_Z}\ \sT_Z^{\le h}\Bigr)\Bigr) =: \sT_X^{\le h+1}\\
\end{array}}
\]
Note that, if all constants $r_X\inv{r_Y}$ and all $\sT_{X}^{\le h}$ are $\wf$, but $G$ is not $\wf$,
then the computation has to fail while computing $r_X\inv{r_Y} \sT_Y^{\le h} \inv{r_Z}$;
see the following example.
\begin{example}\label{ex:the-one}
Consider the nonnegative context-free grammar $G$ given by the rules (with the parameter $n\in\N$ fixed) 
\[
\begin{array}{lcl@{\quad}lcl@{\quad}lcl@{\quad}lcl@{\quad}lcl@{\quad}lcl}
S & \to & Uc & U & \to & AV \mid W_{n} & V       & \to & U\inv{B} & W_i & \to & W_{i-1} W_{i-1} & (2 \leq i \leq n)\\
A & \to & a  & B & \to & b             & \inv{B} & \to & \inv{b}  & W_1 & \to & B B\\
\end{array}
\]
with axiom $S$. Except for $\inv{B}$ all nonterminals generate nonnegative languages.
Note that the nonterminals $W_n$ to $W_1$ form an {\SLP} that encodes the word $b^{2^n}$ by means of iterated squaring which only becomes productive at height $h=n+1$.
For $h\ge n+3$ we have:
\[
\begin{array}{lcl@{\quad}lcl@{\quad}lcl}
L_S^{\le h} & = & \{a^k b^{2^n} \inv{b}^k c \mid k\le \lfloor\frac{h-(n+3)}{2}\rfloor\}\\
L_U^{\le h} & = & \{a^k b^{2^n} \inv{b}^k \mid k\le \lfloor\frac{h-(n+2)}{2}\rfloor\} & 
L_{W_i}^{\le h} & = & \{b^{2^i}\} & 
L_B^{\le h} & = & \{b\}\\
L_V^{\le h} & = & \{a^k b^{2^n} \inv{b}^{k+1} \mid k\le \lfloor\frac{h-(n+3)}{2}\rfloor\} & 
L_A^{\le h} & = & \{a\} & 
L_{\inv{B}}^{\le h} & = & \{\inv{b}\}\\
\end{array}
\]
Here the words $r_X$ used to cancel the longest prefix of closing brackets (after reduction)
are $r_S=r_U=r_V=r_W=r_A=r_B= \ew$ and $r_{\inv{B}}=b$.
Note that $r_X L_X^{\le h}$ is $\wf$ for all nonterminals $X$ up to $h\le h_0=2^{n+1}+(n+2)$ s.t.\ 
$\tsn(\rd(r_SL_S^{\le h}))\tseq \sT_S^{\le h} = \{b^{2^n}c, a^k b^{2^{n}-k(h)}c\}$ 
for $k(h)=\lfloor (h-(n+3))/2\rfloor$ and $n+3\le h\le h_0$;
in particular, the $\lcs$ of $\sT_S^{\le h}$ has already converged to $c$ at $h=n+3$, 
only its maximal extension $\lcsext$ changes for $n+3\le h \le h_0$.
We discover the first counterexample $a^{2^n}\inv{b}$ that $G$ is not $\wf$
while computing $\sT_V^{\le h_0+1}=\tsn(\rd(\sT_U^{\le h_0}\inv{b}))$.
\end{example}

As illustrated in Example~\ref{ex:the-one},
if $G$ is not $\wf$, then the minimal derivation height $h_0+1$ 
at which we discover a counterexample might be exponential in the size of the grammar. The following lemma states that up to this derivation height $h_0$
the representations $\sT_X^{\le h}$ cannot have converged (modulo $\tseq$).
\begin{lemma}\label{thm:not-wf--main-text}
If $L=L(G)$ is not $\wf$, then there is some least $h_0$ s.t.\ $r_X L_Y^{\le h_0}\inv{r_Z}$ is not $\wf$ with $X\to_G YZ$.
For $h\le h_0$, all $r_X L_X^{\le h}$ are $\wf$ s.t.\ $\sT_X^{\le h}\tseq \rd(r_X L_X^{\le h})$.
If $h_0 \ge 4N+1$, then at least for one nonterminal $X$ we have $\sT_X^{\le 4N+1}\not\tseq \sT_X^{\le 4N}$.
\end{lemma}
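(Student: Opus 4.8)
The plan is to first pin down the exact shape that any failure of well-formedness can take during the fixed-point computation, and then read off all three assertions from this analysis. Two preliminary facts drive everything. First, since $G$ is nonnegative, the case $\hd(\alpha)<0$ in point~(ii) following Definition~\ref{def:wf-main-text} is impossible, so the only way a word can fail to be $\wf$ is to reduce to a mismatch $u a\inv{b}$ with $a\neq b$. Second, I would establish as a structural lemma that for every rule $X\to_G Y$ or $X\to_G YZ$ the word $r_Y$ is a suffix of $r_X$, say $r_X=r'r_Y$. This I would derive from the maximality of $d_X$ together with the observation that a leading block of closing brackets in a reduced word survives any right-concatenation (it has no opening bracket to its left to cancel against): hence $\inv{r_Y}$ stays a prefix of $\rd(\gamma\delta)$ for $\gamma\in L_Y,\delta\in L_Z$, which makes $\inv{r_Y}$ a prefix of the maximal closing prefix $\inv{r_X}$ and thus $r_Y\sle r_X$. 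Consequently $r_X\inv{r_Y}\rdeq r'$ is $\wf$ and, cancelling the block $r_Y\inv{r_Y}\rdeq\ew$, we get $\rd(r_X\inv{r_Y}\,r_Y L_Y^{\le h})=\rd(r_X L_Y^{\le h})$; in particular the boundary $\inv{r_Y}$ never produces a mismatch.

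Granting this, the only place a mismatch can be created is the inner boundary $\inv{r_Z}$ of a binary rule. As long as $r_Y L_Y^{\le h}$ and $r_Z L_Z^{\le h}$ are $\wf$, their reductions lie in $\als$, so each word of the intermediate reduces to $\rd(p\,\inv{r_Z}\,q)$ with $p\in\rd(r_X L_Y^{\le h})\subseteq\als$ and $q\in\als$; appending the all-opening $q$ can neither create nor repair a mismatch, so this is $\wf$ iff $r_Z$ is a suffix of $p$. Hence the intermediate $r_X L_Y^{\le h}\inv{r_Z}$ is $\wf$ iff $r_Z\sle\lcs(\rd(r_X L_Y^{\le h}))$, and this $\lcs$ is computable from $r'$ and $\sig(\sT_Y^{\le h})$ by the concatenation rule of Lemma~\ref{lem:lcs-computation-main-text} (this is precisely where the extension $\lcsext$ is needed, as prepending $r'$ rewrites the $\lcs$ through $\lcsext$). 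Since $\lcs(\rd(r_X L_Y^{\le h}))$ can only shrink w.r.t.\ $\sle$ as $h$ grows, while $L(G)$ not being $\wf$ forces some such $\lcs$ to eventually lose $r_Z$ as a suffix for a binary rule, a least $h_0$ exists and the corresponding violation is exactly a language $r_X L_Y^{\le h_0}\inv{r_Z}$ with $X\to_G YZ$; this is the first assertion. For $h\le h_0$ every intermediate used to build $\sT_X^{\le h}$ sits at height $\le h-1<h_0$ and is therefore $\wf$, so Lemma~\ref{lem:lcs-computation-main-text} makes the $\tseq$-recurrence valid step by step and yields $\sT_X^{\le h}\tseq\rd(r_X L_X^{\le h})$, which is the second assertion.

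For the third assertion I would argue by contradiction: assume $h_0\ge 4N+1$ yet $\sT_X^{\le 4N+1}\tseq\sT_X^{\le 4N}$ for every nonterminal $X$. Because $4N<h_0$, all intermediates up to height $4N$ are $\wf$, so the one-step update $F$ of the summary tuple $(\sig(\sT_X^{\le h}))_X$ induced by the recurrence is well defined and valid at this step, and the assumption says $(\sig(\sT_X^{\le 4N}))_X$ is a fixed point of $F$. By the characterization of the preceding paragraph, whether the step $h\to h+1$ stays $\wf$ depends on the current summaries only, namely on the conditions $r_Z\sle\lcs(\rd(r_X L_Y^{\le h}))$; these data are frozen at their height-$4N$ values, and the step at height $4N$ is $\wf$ since $4N<h_0$. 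An immediate induction then shows that every later step is $\wf$ and that $\sig(\sT_X^{\le h})$ equals its height-$4N$ value for all $h\ge 4N$. Hence no violation ever occurs, so each $r_X L_X$ is $\wf$; as $G$ is $\wf$ iff it is bounded well-formed with $r_S=\ew$ (Lemma~\ref{lem:char-wf-main-text}), this makes $G$ well-formed, contradicting that $L(G)$ is not $\wf$. Therefore $\sT_X^{\le 4N+1}\not\tseq\sT_X^{\le 4N}$ for at least one $X$.

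The main obstacle is the congruence step underpinning both the location of $h_0$ and the fixed-point argument: making rigorous that the $\wf$-status and the resulting $\sig$-summary of each intermediate $\rd(r_X\inv{r_Y}\,r_Y L_Y^{\le h}\,\inv{r_Z}\,r_Z L_Z^{\le h})$ are determined solely by $\sig(\sT_Y^{\le h})$, $\sig(\sT_Z^{\le h})$ and the constants $r_X,r_Y,r_Z$ (given as \SLP s). This extends Lemma~\ref{lem:lcs-computation-main-text}, which covers union and concatenation over $\als$, to concatenations interleaved with the closing-bracket blocks $\inv{r_Y},\inv{r_Z}$, and it requires a careful case analysis of how the suffix of a $\wf$ reduced word cancels against $\inv{r_Z}$ --- tracked by the $\lcs$ --- and how prepending $r'$ reshapes the $\lcs$ through $\lcsext$. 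The constant $4N$ itself plays no special role in this lemma; it is the checkpoint at which the companion Lemma~\ref{thm:converge-wf-main-text} guarantees convergence for $\wf$ grammars, and the hypothesis $h_0\ge 4N+1$ is exactly what is needed to keep the recurrence valid (all intermediates $\wf$) up to that checkpoint, so that the fixed-point argument applies.
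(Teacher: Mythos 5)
Your reduction of the well-formedness of an intermediate $r_XL_Y^{\le h}\inv{r_Z}$ to the suffix test $r_Z\sle\lcs(\rd(r_XL_Y^{\le h}))$ via nonnegativity matches the paper, and your fixed-point contradiction for the third assertion is a legitimate variant of the paper's direct comparison of $\rlcs(r_XL_Y^{\le 4N+1})$ with $\rlcs(r_XL_Y^{\le h_0})$. The genuine gap is the structural lemma that $r_Y$ is a suffix of $r_X$ for every rule $X\to_G YZ$ (and analogously for unit and nullary rules), on which your localization of every possible failure to the inner boundary $\inv{r_Z}$ entirely rests. Your argument shows $\inv{r_Y}\in\pfcl(\rd(L_X))$ and $\abs{r_Y}\le\abs{r_X}$, but to conclude $r_Y\sle r_X$ you need all maximal closing prefixes occurring in $\pfcl(\rd(L_X))$ to form a chain, which holds when $L_X$ is $\wwf$ --- i.e.\ essentially in the well-formed case that this lemma excludes. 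Concretely, take $X\to YZ$, $Y\to\inv{a}$, $Z\to\ew$, $X\to\inv{b}\,\inv{b}$, embedded under an axiom $S\to CX$ with $C\to aa$ so that $G$ stays nonnegative: then $r_Y=a$ but $r_X=bb$, so $r_X\inv{r_Y}\rdeq bb\inv{a}$ is itself not $\wf$ and the claim that "the boundary $\inv{r_Y}$ never produces a mismatch" fails (the same can happen for a nullary constant $r_X\gamma$). Moreover, your first paragraph only asserts, but never proves, that non-well-formedness of $L(G)$ "forces some such $\lcs$ to eventually lose $r_Z$"; the contrapositive you invoke in the third paragraph (no violation ever occurs $\Rightarrow$ $G$ is $\wf$) again depends on the false structural claim, so the existence of $h_0$ is not actually established.

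The paper avoids both problems with a different device: it fixes a derivation tree of a concrete non-$\wf$ word $\alpha\in L(G)$ and descends to the deepest node $X\to_G YZ$, $\alpha_X=\alpha_Y\alpha_Z$, at which $r_X\alpha_X$ is still not $\wf$ while both $r_Y\alpha_Y$ and $r_Z\alpha_Z$ already are. There $\alpha_Z\rdeq\inv{u_Z}v_Z$ with $u_Z\sle r_Z$, and stripping the purely opening suffix $v_Z$ shows that $r_X\alpha_Y\inv{r_Z}$ is not $\wf$; no comparability of the various $r$'s is needed, the minimality of the height at which such a boundary failure occurs yields the second assertion, and the comparison of $\rlcs$ at heights $4N+1$ and $h_0$ yields the third. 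To repair your write-up you would either have to add the well-formedness of all constants $r_X\inv{r_Y}$ and $r_X\gamma$ as a separately checked precondition (as the paper implicitly does in the surrounding discussion), or replace the structural lemma by this descent argument.
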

\noindent
The following Lemma~\ref{thm:converge-wf-main-text} states the complementary result, i.e.\ 
if $G$ is $\wf$ then the representations $\sT_X^{\le h}$ have converged at the latest for $h=4N$ modulo $\tseq$.
The basic idea underlying the proof of Lemma~\ref{thm:converge-wf-main-text}
is similar to~\cite{DBLP:conf/stacs/LuttenbergerPS18}:
we show that from every derivation tree of height at least $4N+1$ we can construct a derivation tree of height at most $4N$ such that both
trees carry the same information w.r.t.\ the $\lcs$ (after reduction).
\iftrue
In contrast to~\cite{DBLP:conf/stacs/LuttenbergerPS18} we need not only to show that $\sT_X^{\le 4N}$ has the same $\lcs$ as $\rd(r_XL_X^{\le 4N})$, but that $\sT_X^{\le 4N}$ has converged modulo $\tseq$ if $G$ is $\wf$; to this end, we need to explicitly consider $\lcsext$, and re-prove stronger versions of the results regarding the combinatorics on words which take the involution into account (see \refA{lem:ustst,lem:uststw,lem:pospos,lem:pospos,lem:negpos}{A.6}).
\else
In contrast to~\cite{DBLP:conf/stacs/LuttenbergerPS18} we need not only to show that $\sT_X^{\le 4N}$ has the same $\lcs$ as $\rd(r_XL_X^{\le 4N})$, but that $\sT_X^{\le 4N}$ has converged modulo $\tseq$ if $G$ is $\wf$; to this end, we need to explicitly consider $\lcsext$, and re-prove stronger versions of the results regarding the combinatorics on words which take the involution into account (see Section~\ref{app:slg} in the appendix of the extended version).
What prevents us to apply the results of~\cite{DBLP:conf/stacs/LuttenbergerPS18}
to the reduced $\lcs$
is, roughly spoken, that given a well-formed linear grammar 
of the following form
$S \to u X$ and $X \to s_i X \inv{r_i} t_i r_i$ and $X\to \ew$ (with $u,w,s_i,r_i,t_i\in\als$),
we cannot in general find suitable conjugates of $u,w,s_i,t_i,r_i$ that allow us to cancel the factors $\inv{r_i}$ in each rule
while preserving the structure of the grammar and its language after reduction; see the following example.\footnote{
Still, we can proceed as in Lemma~\ref{l:latest} to remove closing brackets for all nonterminals that produce an ultimately periodic language after reduction;
although, we currently do not know if we can transform in polynomial time a well-formed context-free grammar $G$ over $\Br$ into a context-free grammar $G'$ over $\al$ s.t.\ $\rd(L(G))=L(G')$ (and further s.t.\ derivations are in bijection).}
\begin{example}\label{ex:pospos}
Consider the grammar given by the rules
$S\to ut_1r X$ and $X\to s_1 X \inv{r} t_1 r \mid s_2 X t_2 \mid \ew$
where we assume that 
(i) $R:=\lcs(\rd(L))=t_1 r$, 
(ii) $r = r' s_1$ with $r'\neq \ew$,
(iii) $r\not\sle rs_2$ (i.e.\ there is no conjugate of $s_2$ w.r.t.\ $r$), and
(iv) $t_2 = t_2' R$ with $t_2'\neq\ew$.
As the grammar is $\wf$, 
there is a conjugate $\hat{s}_1$ s.t.\ $r's_1 = \hat{s}_1 r'$ that allows us 
to cancel $\inv{r}$ in $r s_1\inv{r}\rdeq r' s_1 \inv{r'} \rdeq \hat{s}_1$.
Subsequently, there is a conjugate $\check{s}_2$ of $s_2$ with $r's_2 =\check{s}_2r'$
that allows us to cancel $r s_2 s_1 \inv{r} \rdeq \hat{s}_1 r' s_2 \inv{r'} \rdeq \hat{s}_1 \check{s}_2$.
These conjugates allow us to remove the closing brackets but only by splitting the grammar depending on which rules are used in a derivation:
\[
S \to u R \mid  ut_1\hat{s}_1 [r'X] R \qquad [r'X] \to \hat{s}_1 [r'X] t_1 \mid \check{s}_2 [r'X] Rt'_2 \mid \ew \mid \check{s}_2 r' t_2'\qedhere
\]
See Example~\ref{ex:pospos-app} in the appendix of the extended version for the detailed calculation.\qed
\end{example}
\fi
\begin{lemma}\label{thm:converge-wf-main-text}
Let $G$ be a context-free grammar with $N$ nonterminals and $L(G)$ be $\wf$.
For every nonterminal $X$ let $r_X\in\als$ s.t.\ $\abs{r_X}=d_X$ and $r_XL_X$ $\wf$.
Then $\rd(r_X L_X) \tseq \rd(r_X L_X^{\le 4N})$, and $\sT_X^{\le 4N}\tseq \sT_X^{\le 4N+1}$ for every nonterminal $X$.
\end{lemma}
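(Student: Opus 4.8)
The plan is to establish the first identity $\rd(r_X L_X)\tseq \rd(r_X L_X^{\le 4N})$ by a surgery argument on derivation trees, and then to obtain the second identity $\sT_X^{\le 4N}\tseq \sT_X^{\le 4N+1}$ by sandwiching. One half of the first identity is free: since $L_X^{\le 4N}\subseteq L_X$, monotonicity of the $\lcs$ (property (iii) listed after Definition~\ref{def:lcs}) yields $\lcs(\rd(r_X L_X))\sle\lcs(\rd(r_X L_X^{\le 4N}))$, so passing to larger heights can only \emph{decrease} the summary in the suffix order. The entire content is therefore to show that neither $\lcs$ nor $\lcsext$ strictly decreases past height $4N$.

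First I would reduce to a bounded set of witnesses. By Lemma~\ref{lem:lcs-witness-main-text} the value $\lcs(\rd(r_X L_X))$ is realized by two words, and by Lemma~\ref{lem:lcsextwit-main-text} the value $\lcsext(\rd(r_X L_X))$ is realized by two words as well; hence the full summary $\sig(\rd(r_X L_X))$ is pinned down by the at most three words of a representative set $\tsn(\rd(r_X L_X))$. It thus suffices to show that each such witness $\rd(r_X\gamma)$, with $\gamma\in L_X$, can be replaced by some $\rd(r_X\gamma')$ with $\gamma'\in L_X^{\le 4N}$ in a way that preserves the pairwise $\lcs$- and $\lcsext$-values among the (at most three) witnesses; together with the free inclusion above this forces $\sig(\rd(r_X L_X))=\sig(\rd(r_X L_X^{\le 4N}))$.

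The surgery is the heart of the argument. Given a derivation tree of height $h>4N$ for $\gamma$, I would locate along some root-to-leaf path a nonterminal that repeats sufficiently often; pigeonholing on the $N$ nonterminals yields a decomposition $X\to_G^\ast \alpha\,Y\,\beta$, $Y\to_G^\ast \mu\,Y\,\nu$, $Y\to_G^\ast \delta$, so that after $k$ iterations $\gamma=\alpha\mu^k\delta\nu^k\beta$. Pumping down (removing the $(\mu,\nu)$ loops, i.e.\ taking $\gamma'=\alpha\delta\beta$) strictly lowers the height, and the task is to show it leaves the summary-relevant suffix unchanged after reduction. Because $G$ is $\wf$, every word of $r_X L_X$ reduces into $\al^\ast$, so no closing bracket survives at the interface; invoking the involution-aware combinatorics re-proved in \refA{lem:ustst,lem:uststw,lem:pospos,lem:negpos}{the appendix}, I would argue that after a bounded number of iterations the reduced contribution of the loop to the suffix of $\rd(r_X\gamma)$ becomes periodic, so that iterating the loop only appends or removes full periods and hence changes neither $\lcs$ (via property (iv) and the calculus of Lemma~\ref{lem:lcs-computation-main-text}) nor $\lcsext$ (via Lemma~\ref{lem:lcsextwit-main-text} and the Fine--Wilf bound) once the height exceeds the threshold. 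The constant $4$ is what affords the several independent pigeonhole constraints required: we must find a repeated nonterminal whose reduced \emph{left} context and reduced \emph{right} context have both stabilized, and we must see enough copies of the loop to run the periodicity/Fine--Wilf argument that controls the extension $\lcsext$ rather than merely the $\lcs$.

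The main obstacle is exactly this interaction of reduction with the two-word nature of $\lcsext$. For the plain $\lcp$ of~\cite{DBLP:conf/stacs/LuttenbergerPS18} a loop inserts a literal factor and its effect on the common prefix is transparent; here the involution can make $\mu$, $\nu$ and the surrounding context cancel against one another, so the suffix produced by a loop is periodic only \emph{after} reduction, and only the strengthened, involution-aware lemmas guarantee this. Controlling $\lcsext$ is strictly harder than controlling the $\lcs$ alone, since pumping must preserve not only the common suffix but also the period $\lcs(x^\iomega,y^\iomega)$ by which it extends; this is where the extremal Fine--Wilf case (as in $L_3$ of Example~\ref{ex:lcsext}) must be handled, and it is the reason a convergence argument for the bare $\lcs$ does not suffice. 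Once the first identity holds for every $X$, the second follows by sandwiching $A:=\rd(r_X L_X^{\le 4N})\subseteq B:=\rd(r_X L_X^{\le 4N+1})\subseteq C:=\rd(r_X L_X)$: the first identity gives $\sig(A)=\sig(C)$, so $\lcs(A)=\lcs(C)=:R$ and monotonicity squeezes $\lcs(B)=R$ as well; since all three then share the same $\lcs$, their extensions are the $\lcs$ over the nested sets $\{z^\iomega\mid zR\in A\}\subseteq\{z^\iomega\mid zR\in B\}\subseteq\{z^\iomega\mid zR\in C\}$, and monotonicity squeezes $\lcsext(B)$ between the equal values $\lcsext(A)=\lcsext(C)$. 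Hence $\sig(B)=\sig(C)$, and with the invariant $\sT_X^{\le 4N+1}\tseq B$ we conclude $\sT_X^{\le 4N}\tseq \sT_X^{\le 4N+1}$.
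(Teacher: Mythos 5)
Your overall architecture (reduce to at most three witnesses, replace each by a bounded-height word, then sandwich $L_X^{\le 4N}\subseteq L_X^{\le 4N+1}\subseteq L_X$ to get the second identity) matches the paper, and the sandwiching step is correct as written. The gap is in the surgery itself. You factor a tall derivation tree through a \emph{single} repeated nonterminal, $\kappa=\alpha\mu^k\delta\nu^k\beta$, and claim that pumping down to $\gamma'=\alpha\delta\beta$ ``leaves the summary-relevant suffix unchanged after reduction'' because the loop's contribution eventually becomes periodic. This is false in general, and the paper's own motivating discussion (Example~\ref{ex:lcpconv}) exhibits exactly the failure mode: there, $(u,v)w$ --- the fully pumped-down word --- is explicitly \emph{not} a witness for $\rlcs(L)$; the witness requires retaining one or two applications of carefully chosen loops. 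Removing a loop can cancel differently against the surrounding context after reduction and thereby lengthen the common suffix of the remaining words, so the $\lcs$ can strictly increase under your surgery, and the witness property is destroyed. What the paper actually does is pigeonhole $3N+1$ nonterminals on a path leading into the $\rlcs$-defining suffix to obtain \emph{four} occurrences of one nonterminal, hence \emph{three} pumping contexts $(\sigma_1,\tau_1),(\sigma_2,\tau_2),(\sigma_3,\tau_3)$; it then invokes the two-context lemmas (\refA{lem:ustst,lem:uststw,lem:pospos,lem:negpos,lem:negneg}{A.6}) on the induced simple linear languages and plays the three contexts off against one another to show that some word using at most two loop applications is still a witness. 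The remaining $N$ of the $4N$ budget pays for paths branching left off the main path. Your accounting for the constant $4$ (``left and right contexts have both stabilized'') does not reflect this $3N+N$ split and does not explain why three contexts are needed.

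A second, smaller gap: you dismiss the involution at the interface (``no closing bracket survives'') and you control $\lcsext$ by an appeal to periodicity and Fine--Wilf. In the paper, closing brackets \emph{do} survive inside the loop contributions --- each $\tau_i$ reduces to $\inv{r_i}t_ir_i$ or $\inv{r_i}\inv{t_i}r_i$ --- and a substantial conjugation argument is needed to normalize $\alpha,\gamma,\sigma_i$ to words over $\al$ and to push the $r_i$'s around. For $\lcsext$, the paper does not argue periodicity of the loop directly; it reduces $\lcsext$-convergence to $\lcs$-convergence of the modified language $x^m r_X L_X$ via \refA{lem:exttolcs}{Lemma A.22}, so that the entire pumping argument only ever has to control a reduced $\lcs$. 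Without that reduction (or an equivalent), your proposal gives no mechanism for showing that the \emph{extension} has converged by height $4N$.
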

\iftrue
The following example sketches the main idea underlying the proof of Lemma~\ref{thm:converge-wf-main-text}.
\begin{example}\label{ex:lcpconv}
The central combinatorial observation\footnote{This observation strengthens the combinatorial results in~\cite{DBLP:conf/stacs/LuttenbergerPS18} and also allows to greatly simplify the original proof of convergence given there.} 
is that for any well-formed language $\mathcal{L}\subseteq\Br^\ast$ of the form 
\[
\mathcal{L}=(\alpha,\beta)[(\mu_1,\nu_1)+(\mu_2,\nu_2)]^\ast \gamma := \{\alpha \mu_{i_1} \ldots \mu_{i_l} \gamma \nu_{i_l} \ldots \nu_{i_1} \beta \mid i_1\ldots i_l \in \{1,2\}^\ast\}
\]
we have that its \emph{longest common suffix after reduction} $\rlcs(\mathcal{L}):=\lcs(\rd(\mathcal{L}))$
is determined by the reduced longest common suffix of $\alpha\gamma\beta$ and either $(\alpha,\beta)(\mu_i,\nu_i)\gamma=\alpha \mu_i \gamma \nu_i \beta$ or $(\alpha,\beta)(\mu_i,\nu_i)(\mu_j,\nu_j)\gamma=\alpha \mu_i \mu_j \gamma \nu_j \nu_i \beta$ for some $i\in\{1,2\}$ but \emph{arbitrary} $j\in\{1,2\}$ in the latter case.\footnote{
To clarify notation, we set $(\alpha,\beta)(\mu,\nu):=(\alpha\mu,\nu\beta)$ and $(\alpha,\beta)\gamma:=\alpha\gamma\beta$,
i.e.\ the pair $(\alpha,\beta)$ is treated as a word with a ``hole'' into which the pair or word on the right-hand side is substituted.}

Assume now we are given a context-free grammar $G$ with $N$ variables.
Further assume that $L:=L(G)$ is well-formed.
W.l.o.g.\ $G$ is in Chomsky normal form and reduced to the productive nonterminals reachable from the axiom of $G$.
Let $L^{\le 4N}$ denote the sublanguage of words generated by $G$ with a derivation tree of height at most $4N$.
Pick a shortest (before reduction) word $\kappa_0\in L:=L(G)$.
Then there is some $\kappa_1\in L$ with $R:=\rlcs(L)=\rlcs(\kappa_0,\kappa_1)$;
we will call any such word a \emph{witness (w.r.t.\ $\kappa_0$)} in the following.
If $R\in L$, then $\kappa_0\rdeq R$, and any word in $L$ is a witness. In particular, there is a witness in $L^{\le 4N}$.
So assume $R\not\in L$.
Then we may factorize (in a unique way) $\kappa_0=\kappa_0''a\kappa_0'$ and $\kappa_1=\kappa_1''b\kappa_1'$ such that
$\rd(\kappa_0')=R=\rd(\kappa_1')$ where $a,b\in\al$ with $a\neq b$.
Then $\rd(\kappa_0)=z_0'aR$ and $\rho(\kappa_1)=z_1'bR$. 
Further assume that $\kappa_1\not\in L^{\le 4N}$, otherwise we are done.
Fix any derivation tree $t$ of $\kappa_1$, 
and fix within $t$ the \emph{main} path from the root of $t$ to the last letter $b$ of the suffix $b\kappa_1'$ of $\rd(\kappa_1)$ (the dotted path in Figure~\ref{fig:lcpHeight}).
We may assume that any path starting at a node on this main path and then immediately turning left towards a letter within the prefix $\kappa_1''$
consists of at most $N$ nonterminals: if any nonterminal occurs twice the induced pumping tree can be pruned without changing the suffix $b\kappa_1'$; as the resulting tree is still a valid derivation tree w.r.t.\ $G$, we obtain another witness w.r.t.\ $\kappa_0$.
\begin{figure}[!h]
\centering
\scalebox{0.7}{
\begin{tikzpicture}
\draw (0,0) -- (-5,-5) -- (5,-5) -- (0,0);
\draw[dashed] (0,0) -- (0.5,-1) -- (-0,-2) -- (0.5,-3) -- (0.5,-4) -- (0,-5);
\draw[dotted] (0,0) -- (0.5,-1) -- (-0,-2) -- (0.5,-3) -- (0.5,-3.75) -- (-1,-5);
\draw (0.5,-1) -- (4.5,-5);
\draw (0.5,-1) -- (-3.5,-5);
\draw (-0,-2) -- (-3,-5);
\draw (0,-2) -- (3,-5);
\draw (0.5,-3) -- (-1.5,-5);
\draw (0.5,-3) -- (2.5,-5);
\draw (0.5,-4) -- (-0.5,-5);
\draw (0.5,-4) -- (1.5,-5);

\node at (0.5-0.3,-1) {$X$};
\node at (-0+0.3,-2) {$X$};
\node at (0.5-0.3,-3) {$X$};
\node at (0.5+0.3,-4) {$X$};

\draw (-5,-5) -- (-5,-6) -- (5,-6) -- (5,-5);
\draw[dashed] (-5,-5.5) -- (5,-5.5);

\node at (-4.25,-5.25) {$u$};
\draw[dotted] (-3.5,-5) -- (-3.5,-5.5);
\node at (-3.25,-5.25) {$s_1$};
\draw[dotted] (-3,-5) -- (-3,-5.5);
\node at (-2.25,-5.25) {$s_2$};
\draw[dotted] (-1.5,-5) -- (-1.5,-5.5);
\node at (-1,-5.25) {$s_3$};
\draw[dotted] (-0.5,-5) -- (-0.5,-5.5);
\node at (0.5,-5.25) {$w$};
\draw[dotted] (1.5,-5) -- (1.5,-5.5);
\node at (2,-5.25) {$\tau_3$};
\draw[dotted] (2.5,-5) -- (2.5,-5.5);
\node at (2.75,-5.25) {$\tau_2$};
\draw[dotted] (3,-5) -- (3,-5.5);
\node at (3.75,-5.25) {$\tau_1$};
\draw[dotted] (4.5,-5) -- (4.5,-5.5);
\node at (4.75,-5.25) {$v$};

\draw (-1.2,-6) -- (-1.2,-5.5);
\draw (-0.8,-6) -- (-0.8,-5.5);
\node at (-1,-5.75) {$b$};
\node at (-2,-5.75) {$\kappa_1''$};
\node at (3.5,-5.75) {$\kappa_1'\rdeq R$};

\end{tikzpicture}
}
\caption{Factorization of a witness $\kappa_1 = (u,v) (s_1,t_1)(s_2,t_2)(s_3,t_3)w=\kappa_1''b\kappa_1'$ w.r.t.\ a nonterminal $X$ occurring at least four times a long the dashed path in a derivation tree of $\kappa_1$ leading to a letter within the suffix $b\kappa_1'$.
The dotted path depicts the main path leading to the $\rlcs$-delimiting occurrence of the letter $b$.
\label{fig:lcpHeight}}
\end{figure}
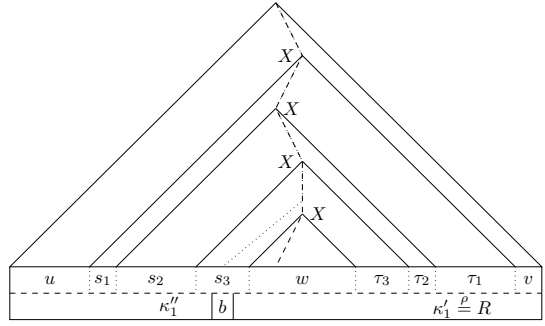
Thus consider any path (including the main path) in $t$ that leads from its root to a letter within the suffix $b\kappa_1'$.
If every such path consists of at most $3N$ nonterminals, then every path in $t$ consists of at most $4N$ nonterminals so that $\kappa_1\in L^{\le 4N}$ follows.
Hence, assume there is at least one such path consisting of $3N+1$ nonterminals.
Then there is some nonterminal $X$ occurring at least four times on this path.
Fix four occurrences of $X$ 
and factorize $\kappa_1$ accordingly
\[
\kappa_1 = us_1s_2s_3w\tau_3\tau_2\tau_1v =: (u,v) (s_1,\tau_1)(s_2,\tau_2)(s_3,\tau_3)w
\]
In the proof of Lemma~\ref{thm:converge-wf-main-text} we show that we may assume --- as $L$ is well-formed --- that $u,v,w,s_1,s_2,s_3\in\als$ with only $\tau_1,\tau_2,\tau_3\in\Br^\ast$.
From this factorization we obtain the sublanguage
$L' := (u,v) [(s_1,\tau_1)+(s_2,\tau_2)+(s_3,\tau_3)]^\ast w$. 
Our goal is to show that 
$(u,v)w$ or $(u,v)(s_i,\tau_i)w$ or $(u,v)(s_i,\tau_i)(s_j,\tau_j)w$ (for $i\neq j$)
is a witness w.r.t.\ $\kappa_0$: note that each of these words result from pruning at least one pumping tree from $t$ which inductively leads to a procedure to reduce $t$ to a derivation tree of height at most $4N$ that still yields a witness for $R=\rlcs(L)$ w.r.t.\ $\kappa_0$.
Assume thus specifically that neither $(u,v)w=uwv$ nor $(u,v)(s_3,\tau_3)w=us_3w\tau_3v$ nor $(u,v)(s_i,\tau_i)(s_3,\tau_3)w=us_is_3w\tau_3\tau_iv$ for $i\in\{1,2\}$  is a witness w.r.t.\ $\kappa_0$, i.e.\  each of these words end on $aR$ after reduction.
Apply now the result mentioned at the beginning of this example to the language
$L'' := (u,v) [(s_1,\tau_1)+(s_2,\tau_2)]^\ast (s_3,\tau_3) w$: by our assumptions $\kappa_1\in L''$ is a witness w.r.t.\ $us_3w\tau_3v\in L''$ so that both $\rlcs(L'')=\rlcs(L)$ and also $(u,v) (s_1,\tau_1)^2(s_3,\tau_3)w$ is a witness w.r.t.\ $us_3w\tau_3v$ as we may choose $j=1$.
Thus also $\rlcs(L)=\rlcs(L''')$ for
$L''':=(u,v) [(s_1,\tau_1)+(s_3,\tau_3)]^\ast w$
as $L'''\subseteq L$ and both $(u,v)w\in L'''$ and $(u,v)(s_1,\tau_1)(s_1,\tau_1)(s_3,\tau_3)w\in L'''$.
Applying the same argument now to $L'''$, but \emph{choosing} $j\neq i$ it follows that $(u,v)(s_1,\tau_1)w$ or $(u,v)(s_3,\tau_3)(s_1,\tau_1)w$ has to be a witness w.r.t.\ $\kappa_0$.

The sketched argument can be adapted so that it also allows to conclude 
the maximal extension after reduction $\lcsext(\rd(L))$ has to have converged at derivation height $4N$ the latest, if $L$ is well-formed. 
For details, see the proof of \refA{thm:converge-wf-main-text}{\Cref{thm:converge-wf-main-text}}.
\end{example}
\fi
\noindent
As $\abs{\sT_X^{\le h}}\le 3$, a straight-forward induction also shows that every word in $\sT_{X}^{\le h}$ can be represented by an {\SLP} that we can compute in time polynomial in $G$ for $h\le 4N+1$;
together with the preceding \Cref{thm:converge-wf-main-text,thm:not-wf--main-text}
we thus obtain the main result of this section:
\begin{theorem}\label{t:xyz}
Given a context-free grammar $G$ over $\Br$ we can decide in time polynomial in the size of $G$ whether $G$ is \wf.
\end{theorem}
\noindent

\section{Conclusion}\label{s:conclusion}
We have shown that well-formedness for context-free languages is decidable in polynomial time.
This allowed us to decide in polynomial time whether or not a \oneLTW\ is balanced.
The presented techniques, however, are particularly tailored for 
\oneLTW s. It is unclear how a generalization to transducers processing three or more
copies of the input would look like.
Thus, the question remains whether balancedness is decidable for general MSO 
definable transductions.
It is also open whether even the single bracket case can be generalized beyond
MSO definable transduction, e.g.\ to output languages of top-down tree-to-word 
transducers \cite{SeidlMK18}.

\bibliography{lit}

\appendix

\ifappendix
\newpage

\section{Appendix}
\ifappendixtoc
\secttoc
\fi

The appendix is self-contained,
i.e.\ we restate all definitions and lemmata of the main work, give the missing proofs, and also introduce further lemmata needed to prove the main results.

\subsection{Properties of the longest common prefix and suffix}\label{app:lcs}
\renewcommand{\al}{\Sigma}

Fix some generic nonempty finite alphabet $\al$ with $\top\not\in\al$ a fresh, unused symbol. We write $\al_{\top}^\infty$ for $\{\top\}\cup\als\cup\al^\omega$.
As mentioned in the main work,
the $\lcp$ is the infimum w.r.t.\ the prefix order $\ple$ on $\al_\top^\infty$ extended by a greatest element $\top$ in order to handle the empty set. 
We briefly sketch the argument:
For $u\in\al^\infty$, set $h(u):=u\al^\omega$, if $u\in\als$, and $h(u):=\{u\}$ otherwise.
We then have $h(u)=\{w\in\al^\omega \mid u\ple w\}$, 
i.e.\ we can alternatively define $\ple$ by means of $u\ple v:\Leftrightarrow h(u)\supseteq h(v)$.
Thus, $\top$ becomes the greatest element w.r.t.\ $\ple$ by setting $h(\top):=\emptyset$.
Extend $h$ to languages $L\subseteq \al^\infty$ by means of $\hat{h}(L):=\bigcup_{w\in L} h(w)$.
The $\lcp(L)$ is then the unique word in $\al_\top^\infty$ satisfying
$h(\lcp(L))=\hat{h}(\{\lcp(L)\})=\bigcap\{ z\al^\omega \mid \hat{h}(L)\subseteq z\al^\omega\}$,
and thus is the infimum w.r.t.\ $\ple$.

In the following, we summarize some properties of the $\lcs$ which are used in the following proofs.
For easier reference, we restate the definition of the $\lcs$ and $\alup$:
\begin{definition}[Definition~\ref{def:lcs} in the main work]
\hspace{1em}
\begin{itemize}
\item
For $u\in\als$ and $w\in\al^+$,
define the expression $w^\iomega u$ by means of $w^\iomega u:=\bigl(u^R (w^R)^\omega\bigr)^R$, and its reverse by means of $(u w^\omega)^R= (w^R)^\iomega u^R$. The set of \emph{ultimately left-periodic words} is then $\alup:=\{w^\iomega u\mid w\in\al^+, u\in\als\}$.
\item
The \emph{suffix order} on $\als\cup \alup\cup\{\top\}$ is then defined by $u\sle v:\Leftrightarrow u^R \ple v^R$.
\item
The \emph{longest common suffix ($\lcs$)} of a language $L\subseteq \als\cup\alup$ is then $\lcs(L):=\lcp(L^R)^R$.
\end{itemize}
\end{definition}

\noindent
Note that for the $\lcs$ a term like $\lcs(x^\iomega, \ldots)$ is always supposed to be read as $\lcs(\ldots xxxxx, \ldots)=\lcp((x^R)^\omega, \ldots)^R$.

We prove the following properties of the $\lcs$ which allow to simplify the computation of the $\lcs$, in particular in the case of ultimately periodic words.
\begin{lemma}\label{lem:lcs-calculus}
Let $u,v,w,x,y,z\in\als$.
\[
\begin{array}{lcl}
u^\iomega v & = & u^\iomega u^k v \quad (\forall k\ge 0)\\[2mm]
u\sle x^\iomega y & \text{ iff } & x=\ew \vee \exists k\colon u\sle x^k y\\[2mm]
u\sle w^\iomega \slt\top & \text{ iff } & u\slt uw\\
                    & \text{ iff } & \exists w',w'',k\colon w=w'w''\wedge w''\slt w\wedge u = w''w^k \\[2mm]
u\sle w^\iomega & \text{ iff } & u\sle uw\\
                      & \text{ iff } & \exists v\colon uw=vu \wedge v^\iomega u = w^\iomega\\[2mm]
u^\iomega v = w^\iomega & \text{ iff } & \exists p,q\in\al^\ast \colon pv=vq\wedge u\in p^\ast\wedge w\in q^\ast\\[2mm]
u^\iomega v = x^\iomega y & \text{ iff } & u=\ew=x\vee u\neq\ew\neq x\wedge \forall k\exists l\colon u^k v \sle x^l y \wedge x^ky\sle u^l v\\
                        & \text{ iff } & u=\ew=x\vee u\neq\ew\neq x\wedge \forall k,l\colon u^k v\inv{x^ly}\text{ is \twwf}\\ 
                        & \text{ iff } & u=\ew=x\vee \exists p,q\colon u\in q^+\wedge x\in p^+ \wedge py\inv{qv}\rdeq y\inv{v}\\[2mm]
\lcs(x,xy) & = & \lcs(x,xy^k) \quad (\forall k\ge 1)\\
           & = & \lcs(x,y^\iomega)\\[2mm]
\lcs(x^\iomega,y^\iomega) & = & \begin{cases}
y^\iomega & \text{ if } x = \ew\\
x^\iomega & \text{ if } xy=yx\wedge x\neq \ew\\
\lcs(yx,y^\iomega)=\lcs(xy,x^\iomega)=\lcs(xy^l,yx^k) =\lcs(xy,yx) & \text{ if } xy\neq yx, k,l>0\\
\end{cases}\\[2mm]
\lcs(u^\iomega,(wu)^\iomega) & = & \lcs(u^\iomega, w^\iomega)u\\[2mm]
\lcs(u^\iomega,w^\iomega,(wu)^\iomega) & = & \lcs(u^\iomega,w^\iomega)\\[2mm]
\end{array}
\]
\end{lemma}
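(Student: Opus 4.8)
The plan is to prove every identity by transporting it through the reverse map to an equivalent statement about the \emph{longest common prefix} of (ultimately) right-periodic words, since by definition $\lcs(L)=\lcp(L^R)^R$, we have $u\sle v$ iff $u^R\ple v^R$, and the reverse of $w^\iomega u$ is the ultimately periodic word $u^R(w^R)^\omega$. Writing $a:=u^R$, $b:=w^R$, etc., the whole lemma becomes a list of facts about prefixes of purely periodic words $b^\omega$ and about $\lcp(a^\omega,b^\omega)$. I would then invoke the standard combinatorics-on-words toolkit: (i) a finite word $a$ is a prefix of $b^\omega$ iff $a=b^k b'$ with $b'\ple b$; (ii) $ab=ba$ iff $a,b$ are powers of a common word (Lyndon--Schützenberger); (iii) the conjugacy equation $ab=bc$ describing when $a,c$ are conjugate via $b$; and, crucially, (iv) the theorem of Fine and Wilf already quoted in Example~\ref{ex:lcsext}.

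First I would dispatch the elementary items. The identity $u^\iomega v=u^\iomega u^k v$ is immediate after reversing, since $(w^R)^\omega=(w^R)^k(w^R)^\omega$ --- prepending finitely many periods to a right-infinite periodic word changes nothing. The characterisation $u\sle x^\iomega y\Leftrightarrow x=\ew\vee\exists k\colon u\sle x^k y$ reverses to the statement that a finite word is a prefix of $y^R(x^R)^\omega$ iff it is a prefix of some finite truncation $y^R(x^R)^k$; this holds because the left word is finite, the case $x=\ew$ being exactly $\ew^\iomega=\top$, which dominates everything. The two items about $u\sle w^\iomega$ reduce to the single fundamental lemma $a\ple b^\omega\Leftrightarrow a\ple ba$: the forward direction is a one-line induction, and the converse follows by iterating $a\ple ba\ple b(ba)=b^2a\ple\cdots$ and using that $a$ is finite. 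The explicit forms $u=w''w^k$ (with $w''$ a suffix of $w$) and the conjugate $v$ with $uw=vu$, $v^\iomega u=w^\iomega$ then come from (i) together with the conjugacy equation (iii) after reversing.

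The combinatorial heart is the clause for $\lcs(x^\iomega,y^\iomega)$, which after reversing asks for $\lcp(a^\omega,b^\omega)$. The case $a=\ew$ is trivial, and the case $ab=ba$ gives $a^\omega=b^\omega$ by (ii). The decisive case is $ab\neq ba$: here Fine and Wilf guarantees that $a^\omega$ and $b^\omega$ cannot agree on a prefix of length $\ge\abs{a}+\abs{b}-\gcd(\abs{a},\abs{b})$, so $\lcp(a^\omega,b^\omega)$ is \emph{finite}; one then checks that the first mismatch is already exposed by the finite witnesses listed, giving the chain $\lcs(yx,y^\iomega)=\lcs(xy,x^\iomega)=\lcs(xy^l,yx^k)=\lcs(xy,yx)$. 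I expect this to be the main obstacle: one must argue carefully that no longer truncation is needed and that all the listed witnesses agree, and this is precisely where the $\gcd$-bound of Fine and Wilf is indispensable. The two trailing identities $\lcs(u^\iomega,(wu)^\iomega)=\lcs(u^\iomega,w^\iomega)u$ and $\lcs(u^\iomega,w^\iomega,(wu)^\iomega)=\lcs(u^\iomega,w^\iomega)$ then follow by peeling the common factor $u$ via the stated property $\lcs(Lx)=\lcs(L)x$, the shift $(ab)^\omega=a(ba)^\omega$, and the already-established periodic formula.

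Finally, the two equality clauses $u^\iomega v=w^\iomega$ and $u^\iomega v=x^\iomega y$ characterise when two ultimately left-periodic words coincide. After reversing, equality of the corresponding ultimately periodic words forces the periods to be conjugate, and I would express this through the conjugacy equation $pv=vq$ with $u\in p^\ast$, $w\in q^\ast$ (respectively their primitive-root versions). The translation of these conditions into the involutive monoid --- the equivalent forms ``$u^kv\inv{x^ly}$ is \twwf\ for all $k,l$'' and ``$py\inv{qv}\rdeq y\inv{v}$'' --- is obtained by reading a common-prefix/agreement condition as the statement that the associated word over $\Br$ reduces without producing an opening-after-closing clash; this is routine once the conjugacy is in place, but it is the second most delicate point, since it mixes the periodicity argument with the reduction relation $\rdeq$.
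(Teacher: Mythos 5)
Your overall plan --- push everything through the reverse map and argue about prefixes of ultimately periodic words --- is essentially what the paper does; the definitions are given via reversal anyway, and the paper's proof of the elementary clauses ($u^\iomega v=u^\iomega u^kv$ via the intersection $\bigcap_n v^R(u^R)^n\Sigma^\omega$, the induction for $u\sle w^\iomega\Leftrightarrow u\sle uw$, the conjugacy characterisations via primitive roots) matches your sketch item for item. The one genuine methodological divergence is the clause for $\lcs(x^\iomega,y^\iomega)$: the paper never invokes Fine and Wilf. It splits on whether $x\sle xy$ and $y\sle yx$, disposes of the failing case directly (there $\lcs(x,xy^l)\slt x$ already bounds everything), and in the remaining case manufactures conjugates $\hat{x},\hat{y}$ with $xy=\hat{y}x$, $yx=\hat{x}y$, $y=y'x$, $\hat{y}=xy'$, from which $\lcs(x^\iomega,y^\iomega)=\lcs(\hat{x},x)y\slt xy$ falls out by pure cancellation. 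So your assertion that the $\gcd$-bound of Fine and Wilf is ``indispensable'' here is not accurate; the paper's route is entirely elementary and, as a bonus, the same conjugate bookkeeping is reused verbatim in the later clauses and in Lemmas A.6ff.

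The substantive gap is that in your treatment of this central clause, Fine and Wilf only delivers \emph{finiteness} (a length bound on $\lcp(a^\omega,b^\omega)$); it does not by itself identify that word with $\lcp(ab,ba)$, nor with the other listed witnesses $\lcs(xy,x^\iomega)$, $\lcs(yx,y^\iomega)$, $\lcs(xy^l,yx^k)$. The step ``one then checks that the first mismatch is already exposed by the finite witnesses listed'' is where the entire content lives: you must separately handle the subcase where $x\not\sle xy$ or $y\not\sle yx$ (there the $\lcs$ is shorter than $\min(\abs{x},\abs{y})$ and the mismatch occurs inside the first period, so a different, easier argument applies), and in the complementary subcase you still need the conjugacy equations to align $xy^l$ and $yx^k$ against the periodic words --- at which point Fine and Wilf has become redundant. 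A similar, milder, remark applies to the clauses characterising $u^\iomega v=x^\iomega y$ via ``$u^kv\,\inv{x^ly}$ is $\wwf$'' and ``$py\inv{qv}\rdeq y\inv{v}$'': you call this routine, but the paper spends a careful page extracting the primitive roots $p,q$ and the conjugate of $q$ along $y'$ before the reduction-theoretic reformulations follow; you should at least record that the finite word among $v,y$ (w.l.o.g.\ the longer one, $y=y'v$) must itself be a suffix of $u^\iomega$, which is what produces the conjugating word. With those two verifications written out, your proof goes through.
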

\begin{prfblk}
\begin{enumerate}
\item
$\forall k\colon u^\iomega v = u^\iomega u^k v$ as

If $u=\ew$, then: 
\begin{block}
$u^\iomega v=\top v=\top=\top u^k v = u^\iomega u^k v$ for all $k$.
\end{block}
So assume $u\neq\ew$.

Then by definition for any $k\in\N_0$:
\[
u^\iomega v = u^\iomega u^k v \text{ iff } \bigcap_{n\ge 0} v^R (u^R)^{n} \al^\omega = \bigcap_{n\ge k} v^R (u^R)^{n} \al^\omega
\]
\item
$u\sle x^\iomega y \text{ iff } x=\ew \vee \exists k\colon u\sle x^k y$ as

If $x=\ew$, then:
\begin{block}
$u\sle x^\iomega y = \top$
\end{block}
So assume $x\neq\ew$.

If $u\sle x^\iomega y$, then:
\begin{block}
 There is some $k\in\N_0$ s.t.\ $u \le \abs{x^{k}y}$.
  
 Thus $u\sle x^\iomega y = x^\iomega x^k y \text{ iff } u\slt x^{k}y$.
\end{block}

If $\exists k\colon u\sle x^k y$, then:
\begin{block}
$u\sle x^ky \sle x^\iomega x^k y = x^\iomega y$
\end{block}
\item
$u\sle w^\iomega\slt\top \text{ iff } u\slt uw \text{ iff } \exists w',w'',k\colon w=w'w''\wedge w''\slt w\wedge u = w''w^k$ as

If $\exists w',w'',k\colon w=w'w''\wedge w''\slt w\wedge u = w''w^k$, then:
 \begin{block}
 $\ew\sle w''\slt w$
 
 Thus $u=w''w^k\slt w^{k+1}\slt w^\iomega \slt \top$.
 \end{block}

If $u\sle w^\iomega\slt \top$, then:
 \begin{block}
 $w\neq \ew$.
 
 Thus there is $k\in\N_0$ s.t.\  $w^k\sle u\slt w^{k+1}$ and thus $u=w''w^k$ for some factorization $w=w'w''$ with $w''\slt w$.
 
 Hence also $u\slt uw=w''w^k w = (w''w') w'' w^k = (w''w') u$.
 \end{block}
If $u\slt uw$, then:
  \begin{block}
  $w\neq \ew$.
  
  We show by induction on $\abs{u}$ that $u\sle w^\iomega\slt \top$.
  
  If $\abs{u}\le \abs{w}$, then:
  \begin{block}
  $w=w'u$ 
  
  Thus $u\sle w\slt w^\iomega\slt\top$
  \end{block}
  So assume $\abs{u}>\abs{w}$ s.t.\ $u=u'w$.
  
  Thus $u'\slt u'w$ as $u'w=u\slt uw=u'ww$.
  
  Hence by induction $u'\sle w^\iomega\slt\top$ and thus $u'\sle w^\iomega w=w^\iomega$.
  \end{block}
\item
$u\sle w^\iomega \text{ iff } u\sle uw \text{ iff } \exists v\colon uw=vu\wedge w^\iomega = v^\iomega u$ as

If $w=\ew$, then:
\begin{block}
$u\sle w^\iomega=\top$ and $u\sle uw=u$ and $\exists v\colon u=uw=wv=v \wedge w^\iomega =\top = v^\iomega u$ are trivially true.
\end{block}
So assume $w\neq\ew$.

If $u\sle w^\iomega\slt \top$, then:
\begin{block}
  $\exists w',w'',m\colon w=w'w''\wedge w''\slt w\wedge u=w''w^m$ by preceding result.
  
  Set $v:=w''w'$ s.t.\ $\abs{v}=\abs{w}>0$.
  
  Then $uw=w''w^mw=(w''w')w''w^m = vu$.
  
  Hence $\forall k\colon w^k \sle v^k u \sle v^\iomega v^k u = v^\iomega u \wedge  v^ku= u w^k \sle w^\iomega w^k = w^\iomega$ 
\end{block}
If $\exists v\colon uw=vu$, then:
\begin{block}
  $u\slt uw=vu$
  
  Thus $u\sle w^\iomega\slt\top$.
\end{block}
\item
$u^\iomega v = w^\iomega \text{ iff } \exists p,q\in\al^\ast \colon pv=vq\wedge u\in p^\ast\wedge w\in q^\ast$ as

W.l.o.g.\ assume $u\neq\top\neq w$ as otherwise $u^\iomega v =\top \vee w^\iomega =\top$ s.t.\ $u=\top=w$.

Let $p$ be the primitive root of $u$, and $q$ that of $w$.

Then $u^\iomega = p^\iomega$ and $w^\iomega=q^\iomega$ s.t.\ $v\sle w^\iomega=q^\iomega\slt \top$.

Thus $\exists \hat{q}\colon vq=\hat{q}v$ s.t.\ $p^\iomega v= q^\iomega=\hat{q}^\iomega v$.

So $p^\iomega = \hat{q}^\iomega$ and thus $p=\hat{q}$ as both are primitive.
\item
$\begin{array}{lcl}
u^\iomega v = x^\iomega y & \text{ iff } & u=\ew=x\vee u\neq\ew\neq x \wedge \forall k\exists l\colon u^k v \sle x^l y \wedge x^ky\sle u^l v\\
                        & \text{ iff } & u=\ew=x\vee u\neq\ew\neq x \wedge \forall k\forall l\colon u^k v\inv{x^ly} \text{ is \twwf}\\ 
                        & \text{ iff } & u=\ew=x\vee  \exists p,q\colon u\in q^+\wedge x\in p^+ \wedge py\inv{qv}\rdeq y\inv{v}\\[2mm]
\end{array}$

W.l.o.g.\ $u\neq\ew\neq x$.

If $u^\iomega v = x^\iomega y$, then: 
\begin{block}
  As $x\neq\ew\neq u$ for $k$ there is some $l$ s.t.\ $\abs{u^kv}\le \abs{x^ly}$ and $\abs{x^ky}\le \abs{u^lv}$.
  
  Thus:  
  $u^kv\sle u^\iomega u^k v =u^\iomega v =x^\iomega y=x^\iomega x^l y$ and $x^ky \sle x^\iomega x^ky = x^\iomega y = u^\iomega v = u^\iomega u^lv$
  
  Hence: $u^kv\sle x^ly$ and $x^k y\sle u^l v$.
  
  In particular, $u^kv$ and $x^ly$ have to be comparable w.r.t\ $\sle$ for all $k,l$.
  
  Thus $u^kv \inv{x^ly}$ is $\wwf$ for all $k,l$.
  
  Finally, as $x\neq\ew\neq u$ we have both $v\sle x^\iomega y \slt \top$ and $y\sle u^\iomega v\slt \top$.
  
  Let $y=y'v$ as w.l.o.g.\ $\abs{v}\le \abs{y}$.
  
  Further, let $q$ be the primitive root of $u$.
  
  Then $y'\sle u^\iomega = q^\iomega\slt \top$.
  
  Hence $\exists p\colon y' q= p y'\wedge q^\iomega = p^\iomega y'$ s.t.:
  \begin{itemize}
  \item
  $py \inv{qv} \rdeq py' \inv{q} = y'q\inv{q} \rdeq y'\rdeq y'v\inv{v}\rdeq y\inv{v}$.
  \item
  $x^\iomega y = u^\iomega v = q^\iomega v = p^\iomega y' v = p^\iomega y$ and $x^\iomega = p^\iomega$.
  \end{itemize}
  Finally $x\in p^+$ as $x\neq\ew$ and $q$ is primitive (as $y'q=py'$ and $p$ primitive).
\end{block}

If $\forall k\exists l\colon u^k v \sle x^l y \wedge x^ky \sle u^l v$, then:
\begin{block}
\[
\forall k\exists l\colon \bigcap_{n\ge 0} y^R (x^R)^n \al^\omega\subseteq y^R(x^R)^l \al^\omega \subseteq v^R (u^R)^k \al^\omega
\]
and thus
\[
\bigcap_{n\ge 0} y^R (x^R)^n \al^\omega \subseteq \bigcap_{k\ge 0} v^R (u^R)^k \al^\omega
\]
i.e.\ $u^\iomega v\sle x^\iomega y$ and symmetrically $x^\iomega y \sle u^\iomega v$ s.t.\ $x^\iomega y = u^\iomega v$.
\end{block}

If  $\forall k\forall l\colon u^k v\inv{x^ly} \text{ is \twwf}$, then
\begin{block}
$\forall k\exists l\colon u^k v \sle x^l y \wedge x^ky \sle u^l v$ directly holds.
\end{block}

If $\exists p,q\colon u\in q^+\wedge x\in p^+ \wedge py\inv{qv}\rdeq y\inv{v}$, then:
\begin{block}
W.l.o.g.\ $\abs{y}\ge\abs{v}$ s.t.\ $y=y'v$. Then:

$x^\iomega y = p^\iomega y = p^\iomega y' v = q^\iomega v = u^\iomega v$.
\end{block}
\item
For all $k\ge 1$: $\lcs(x,xy) = \lcs(x,xy^k) = \lcs(x,y^\iomega)$ as:

If $y=\ew$, then:
\begin{block}
$\lcs(x,xy)=x =\lcs(x,xy^k)=\lcs(x,\top)$
\end{block}
So assume $y\neq \ew$.

Then pick $m$ s.t.\ $x=x'y'' y^m$ for suitable $x',y',y''$ with \ $y=y'y''$ and $y''\slt y$ s.t.\

$\lcs(x,y)=y''y^m$ and $\lcs(x',y')=\ew$.

Hence for all $k\ge 1$:
\[\lcs(x,xy^k)=\lcs(x'y''y^m, x'y''y^{m+k})=\lcs(x'y''y^m,y^{m+1})=\lcs(x',y')y''y^{m}=\lcs(x,y^\iomega)\]
\item
$\lcs(x^\iomega,y^\iomega)  =\begin{cases}
y^\iomega & \text{ if } x = \ew\\
x^\iomega & \text{ if } xy=yx\wedge x\neq \ew\\
\lcs(yx,y^\iomega)=\lcs(xy,x^\iomega)=\lcs(xy^l,yx^k) =\lcs(xy,yx) & \text{ if } xy\neq yx, k,l>0\\
\end{cases}$ as:

W.l.o.g.\ $\abs{x}\le \abs{y}$.

If $x=\ew$, then:
\begin{block}
 $\lcs(x^\iomega, y^\iomega) = \lcs(\top,y^\iomega)=y^\iomega$
\end{block}

So $0<\abs{x}\le \abs{y}$.

If $x\not\sle xy\vee y\not\sle yx$, then:
\begin{block}
W.l.o.g.\ $x\not\sle xy$.

Then $\lcs(x,y^\iomega)=\lcs(x,xy^l)\slt x$ for all $l>0$.
\end{block}
So assume $x\sle xy\wedge y\sle yx$.

Then $\exists \hat{x},\hat{y}\colon xy=\hat{y}x\wedge yx=\hat{x}y$.

If $\hat{x}=x\vee \hat{y}=y$, then:
\begin{block}
$xy=yx$ as $xy=\hat{y}x=yx\vee yx=\hat{x}y=xy$.

Hence, $x^\iomega = y^\iomega$ as $x,y$ have the same primitive root.
\end{block}
So assume that $\hat{x}\neq x \wedge \hat{y}\neq y$.

$y=y'x$ as $\abs{x}\le \abs{y}$ and $y\sle yx=\hat{x}y$.

Hence $\hat{y}=xy'$ as $\hat{y}x = xy = xy'x$.

Then $\lcs(x^\iomega, y^\iomega ) = \lcs(\hat{x}^\iomega \hat{x} y, y^\iomega y' x y)=\lcs(\hat{x},x)y\slt xy$.

From that the second case of the statement follows.
\item
$\lcs(u^\iomega,(wu)^\iomega) = \lcs(u^\iomega, w^\iomega)u$ as
\[\begin{array}{lcl}
\lcs(u^\iomega,(wu)^\iomega) & = & \begin{cases}
w^\iomega  & \text{ if } uw=wu \wedge w\neq \ew\\
u^\iomega  & \text{ if } uw=wu \wedge w= \ew\\
\lcs(uwu,wuu)=\lcs(uw,wu)u  & \text{ if } uw\neq wu
\end{cases}\\
= \lcs(u^\iomega, w^\iomega)u
\end{array}\]
\item
$\lcs(u^\iomega,w^\iomega,(wu)^\iomega) = \lcs(u^\iomega,w^\iomega)$ as
\[\begin{array}{lcl}
\lcs(u^\iomega,w^\iomega,(wu)^\iomega) 
& = & \lcsa{\lcs(u^\iomega,w^\iomega)\\ \lcs(u^\iomega,(wu)^\iomega)}\\
& = & \lcsa{\lcs(u^\iomega,w^\iomega)\\ \lcs(u^\iomega, w^\iomega)u}\\
& = & \lcs(\lcs(u^\iomega,w^\iomega),u^\iomega)\\
& = &\lcs(u^\iomega,w^\iomega)
\end{array}\]
\end{enumerate}
\end{prfblk}

The next lemma formalizes that whenever $L$ is not empty, we can find for any $x\in L$ a witness $y\in L$ s.t.\ $\lcs(L)=\lcs(x,y)$ which we will use in the following frequently without explicitly referring to this lemma every time.
\begin{lemma}[Lemma~\ref{lem:lcs-witness-main-text} in the main work]
Let $L\subseteq\al^\ast$. Then both:
\begin{itemize}
\item
$\forall x\in L\colon \lcs(L)=\lcs(\lcs(x,y)\mid y\in L)$
\item
$\forall x\in L\exists y\in L\forall z\in L\colon \lcs(L)=\lcs(x,y)=\lcs(x,y,z)$
\end{itemize}
\end{lemma}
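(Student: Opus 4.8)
The plan is to forget the word combinatorics entirely and treat $\lcs$ as the infimum operation for the suffix order $\sle$, using that arbitrary subsets admit an infimum (the analogue for $\ple$ is recorded in Section~\ref{s:prelim}, and $\lcs$ is its mirror image under reversal). Throughout I write $R:=\lcs(L)$ and use freely that $\lcs$ is commutative, associative, idempotent and monotone (properties (i)--(iii) of Section~\ref{s:prelim}), together with the defining fact that $R$ is the greatest lower bound of $L$ w.r.t.\ $\sle$. If $L=\emptyset$ both statements hold vacuously, so I fix $x\in L$.

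For the first bullet I would argue by two inequalities. Since $R\sle x$ and $R\sle y$ for every $y\in L$, the element $R$ is a lower bound of $\{x,y\}$, hence $R\sle \lcs(x,y)$; thus $R$ is a lower bound of $\{\lcs(x,y)\mid y\in L\}$ and so $R\sle \lcs(\lcs(x,y)\mid y\in L)$. Conversely, for each $y'\in L$ we have $\lcs(\lcs(x,y)\mid y\in L)\sle \lcs(x,y')\sle y'$, so the left-hand side is a lower bound of $L$ and therefore $\sle R$. Combining the two gives $\lcs(\lcs(x,y)\mid y\in L)=R$. Equivalently this is the meet-distributivity $\lcs(\lcs(x,y)\mid y\in L)=\lcs(x,\lcs(L))=\lcs(L)$, the last step because $\lcs(L)\sle x$.

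For the second bullet the key observation---and the only step that goes beyond abstract lattice theory---is that each $\lcs(x,z)$ is a \emph{suffix of the finite word} $x$. Hence $\{\lcs(x,z)\mid z\in L\}$ is contained in the finitely many suffixes of $x$ and is thus finite; moreover any two suffixes of $x$ are $\sle$-comparable, so this finite set is a chain, and the infimum of a finite chain is its least element, which lies in the set. By the first bullet that infimum is $R$, so some single $y\in L$ satisfies $\lcs(x,y)=R=\lcs(L)$. The three-argument equality then follows by a short computation: for every $z\in L$, $\lcs(x,y,z)=\lcs(\lcs(x,y),z)=\lcs(R,z)=R$, where the last step uses $R\sle z$ (as $R$ is a lower bound of $L$), so that $R$ is already the common suffix of $R$ and $z$.

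The hard part is conceptually small but must be stated precisely: in a general complete lattice the infimum of $\{x\wedge y\mid y\in L\}$ need not be realised by a single pair, so the witness claim genuinely rests on the combinatorial fact that all candidates $\lcs(x,z)$ live among the finitely many suffixes of the fixed finite word $x$ and form a chain, forcing the infimum to be attained. Everything else is the standard infimum calculus already available for $\lcs$.
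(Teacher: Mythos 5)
Your proof is correct and follows essentially the same route as the paper: the two-inequality argument showing $\lcs(L)=\lcs(\lcs(x,y)\mid y\in L)$, followed by the observation that the candidates $\lcs(x,z)$ all lie among the suffixes of the finite word $x$ and hence form a finite chain whose infimum is attained. If anything, you spell out the attainment step (finiteness plus comparability of suffixes of $x$) more explicitly than the paper, which merely notes that every element of $\{\lcs(x,y)\mid y\in L\}$ lies between $R$ and $x$ in the suffix order.
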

\begin{proof}
Trivially true if $L=\emptyset$.
Therefore, assume $L\neq \emptyset$.
Let $R=\lcs(L)$ and pick any $x\in L$.
If $x=R$, we have $\lcs(x,y)=R$ for all $y\in L$; so choose any $y\in L$.
Thus, assume $R\slt x$.
Let $L_x=\{ \lcs(x,y)\mid y\in L\}$ and $S:=\lcs(L_x)$.
Then $R\sle S$ as
for all $y\in L$ we have $R\sle y$ and thus $R\sle \lcs(x,y)$.
But also $S\sle R$ as
$\forall y\in L\colon S\sle \lcs(x,y)\sle y$.
Thus, $R\in L_x$ as
$\forall z\in L_x\colon R\sle z\sle x$.
Then there is some $y\in L$ s.t. $R=\lcs(x,y)$; by minimality of $R$ we trivially have that 
$\lcs(x,y,z)=\lcs(x,y)$ for all $z\in L$.
\end{proof}

We recall the definition of the maximal suffix extension:
\begin{definition}[Definition~\ref{def:lcsext-main-text} in the main work]
Let $L\subseteq\al^\ast$ with $R=\lcs(L)$.
\[
\lcsext(L):= \lcs( z^\iomega \mid zR \in L\setminus\{R\})
\]
\end{definition}
By definition of $\top$ we have both $\lcsext(\emptyset)=\lcs(\emptyset)=\top$ and $\lcsext(\{R\})=\lcs(\ew^\iomega)=\top$.

\begin{lemma}
Let $\emptyset\neq L\subseteq\al^\ast$ and $R=\lcs(L)$.
If $R\not\in L$, then $\lcsext(L)=\ew$
\end{lemma}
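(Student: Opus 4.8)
The plan is to exploit the defining property of $R=\lcs(L)$ together with the hypothesis $R\not\in L$ to force a ``branching'' of the words of $L$ immediately to the left of their common suffix $R$, which makes the ultimately left-periodic words occurring in the definition of $\lcsext$ disagree already on their last letter.

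First I would observe that every word of $L$ has $R$ as a \emph{proper} suffix. Indeed, $R=\lcs(L)$ gives $R\sle w$ for all $w\in L$, and $R\not\in L$ rules out equality, so $R\slt w$; thus each $w\in L$ can be written uniquely as $w=zR$ with $z\in\als$, $z\neq\ew$. (In particular $\abs{L}\ge 2$, since a singleton $L=\{w\}$ would force $R=w\in L$.) Consequently the words entering the definition $\lcsext(L)=\lcs(z^\iomega\mid zR\in L)$ all have $z\neq\ew$, so each $z^\iomega$ is a genuine infinite (ultimately left-periodic) word whose rightmost letter is the last letter of $z$.

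Next I would argue that these last letters cannot all coincide. Suppose, towards a contradiction, that a single letter $a\in\al$ is the last letter of $z$ for \emph{every} $zR\in L$. Then every $w\in L$ ends on $aR$, so $aR$ is a common suffix of $L$ and hence $aR\sle\lcs(L)=R$; but $\abs{aR}=\abs{R}+1$, a contradiction. Therefore there are two words $z_1R,z_2R\in L$ such that $z_1$ ends on some letter $a$ and $z_2$ on some letter $b$ with $a\neq b$.

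Finally I would conclude by monotonicity of the $\lcs$. Since $z_1,z_2\neq\ew$, the rightmost letter of $z_1^\iomega$ is $a$ and that of $z_2^\iomega$ is $b$; as $a\neq b$, these two infinite words share no nonempty common suffix, so $\lcs(z_1^\iomega,z_2^\iomega)=\ew$. Because $z_1^\iomega,z_2^\iomega$ both occur in the family defining $\lcsext(L)$, the monotonicity property $\lcs(L)\sle\lcs(L')$ for $L\supseteq L'$ gives $\lcsext(L)\sle\lcs(z_1^\iomega,z_2^\iomega)=\ew$, whence $\lcsext(L)=\ew$. The argument is short, and the only real subtlety — which I would state explicitly rather than grind through — is the bookkeeping guaranteeing every $z$ is nonempty, so that the $z^\iomega$ are honest infinite words and the ``last letter'' comparison is legitimate; once that is in place, the maximality of $R$ does all the work.
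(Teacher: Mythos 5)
Your proof is correct and follows essentially the same route as the paper: both arguments produce two words $uR,vR\in L$ with $u,v\neq\ew$ and $\lcs(u,v)=\ew$ (the paper via the two-element witness property of the $\lcs$, you via the elementary observation that the letters immediately left of $R$ cannot all agree), and then conclude $\lcsext(L)\sle\lcs(u^\iomega,v^\iomega)=\ew$ by monotonicity. Your explicit bookkeeping that every $z$ is nonempty matches the paper's remark that $u\neq\ew\neq v$, so there is nothing to add.
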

\begin{proof}
As $R\not\in L$ it exists $uR,vR\in L$ such that $R=\lcs(uR,vR)=\lcs(u,v)R$ with $u \neq \ew \neq v$ and $\lcs(u,v)=\ew$.
Thus, $\lcsext(L)\sle \lcs(u^\iomega,v^\iomega)=\lcs(u,v)=\ew$.
\end{proof}

\subsection{Lemma~\ref{lem:lcsextwit-main-text} in the main work}

\begin{lemma}[Lemma~\ref{lem:lcsextwit-main-text} in the main work]\label{lem:lcsextwit}
Let $L\subseteq\al^\ast$ with $\abs{L}\ge 2$ and $R:=\lcs(L)$.
Fix any $xR\in L\setminus\{R\}$.
Then there is some $yR\in L\setminus\{R\}$ s.t.\
$\lcsext(L)=\lcs(x^\iomega,y^\iomega) =\lcs(x^\iomega,y^\iomega,z^\iomega)
$ for all $zR\in L$.
If $xy=yx$, then $R\in L$.
\end{lemma}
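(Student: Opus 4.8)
The plan is to read $\lcsext(L)$ straight off its definition and then invoke the witness lemma. Since $R=\lcs(L)$ is a finite common suffix of $L$, every word of $L$ factors uniquely as $zR$, and by Definition~\ref{def:lcsext-main-text} we have $\lcsext(L)=\lcs(M)$ for the family $M:=\{\,z^\iomega\mid zR\in L\setminus\{R\}\,\}$. This family is nonempty because $xR\in L\setminus\{R\}$ forces $x\neq\ew$, so $x^\iomega\in M$. Because $\ew^\iomega=\top$ is the greatest element w.r.t.\ $\sle$, adjoining the term for $z=\ew$ (which occurs only if $R\in L$) does not change the value, so in fact $\lcsext(L)=\lcs(z^\iomega\mid zR\in L)$.

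First I would apply the witness lemma (Lemma~\ref{lem:lcs-witness-main-text}) to the family $M$ with the distinguished element $x^\iomega$. Although that lemma is phrased for finite languages, its proof uses only that the candidate values $\lcs(x^\iomega,z^\iomega)$ are all suffixes of the fixed word $x^\iomega$ and hence form a $\sle$-chain; passing to reverses turns $\lcs$ into $\lcp$ and these candidates into a chain of prefixes of $(x^R)^\omega$. The infimum of such a chain is attained, since the position at which $z^\iomega$ diverges from $x^\iomega$ lies in $\N\cup\{\infty\}$ and the smallest such divergence is realized by some $y^\iomega\in M$. This yields $yR\in L\setminus\{R\}$ (so $y\neq\ew$) with $\lcsext(L)=\lcs(M)=\lcs(x^\iomega,y^\iomega)$, and by the minimality built into the witness also $\lcs(x^\iomega,y^\iomega,z^\iomega)=\lcs(x^\iomega,y^\iomega)$ for every $z^\iomega\in M$. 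For $zR=R$ the extra term is $\ew^\iomega=\top$ and is absorbed, so the displayed equality in fact holds for all $zR\in L$, as required.

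It remains to treat the case $xy=yx$. Here I would invoke the $\lcs$-calculus of Lemma~\ref{lem:lcs-calculus}: since $x\neq\ew\neq y$ and $xy=yx$, we get $\lcs(x^\iomega,y^\iomega)=x^\iomega$, which is an infinite word and in particular different from $\ew$. Hence $\lcsext(L)=\lcs(x^\iomega,y^\iomega)\neq\ew$. By the earlier observation that $R\notin L$ forces $\lcsext(L)=\ew$, its contrapositive gives $R\in L$, completing the proof.

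The only genuinely delicate point is the passage from the finite-word witness lemma to the family $M$ of ultimately left-periodic words: one must verify that the infimum defining $\lcs(M)$ is actually attained by a single pair $x^\iomega,y^\iomega$ rather than only approached. Everything else — the unique factorization through $R$, the harmless $\top$-term for $z=\ew$, and the commuting case via the identities of Lemma~\ref{lem:lcs-calculus} — is routine.
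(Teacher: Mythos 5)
Your proof is correct, and its core coincides with the paper's: both arguments rest on the observation that the candidates $\lcs(x^\iomega,z^\iomega)$ for $zR\in L\setminus\{R\}$ form a $\sle$-chain of suffixes of $x^\iomega$ whose infimum is attained. The paper makes attainment explicit by a case split --- if some $z$ fails to commute with $x$, then $\lcsext(L)\sle\lcs(xz,zx)$ is a finite word and a witness $y$ with $\lcsext(L)=\lcs(x^\iomega,y^\iomega)=\lcs(xy,yx)$ exists; if every $z$ commutes with $x$, then all $z$ are powers of the primitive root $p$ of $x$ and one takes $y=x$ --- whereas you argue attainment directly via the divergence depths in $\N\cup\{\infty\}$; these are interchangeable, and your handling of the $z=\ew$ term via $\ew^\iomega=\top$ is the same absorption the paper uses. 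The one place where the routes genuinely differ is the final claim: the paper derives $R\in L$ in the commuting case from the primitive-root structure (a shortest word $p^iR$ with $i>0$ would force $\lcs(L)=p^iR\neq R$), while you note that $xy=yx$ with $x\neq\ew\neq y$ gives $\lcsext(L)=\lcs(x^\iomega,y^\iomega)=x^\iomega\neq\ew$ and then apply the contrapositive of the auxiliary lemma stating that $R\notin L$ implies $\lcsext(L)=\ew$. Both are sound; yours reuses an already-proved fact and is slightly shorter, the paper's is self-contained at that point.
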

\begin{prfblk}
Note: $x\neq \ew$ as $xR\neq R$.

Let $p$ be the primitive root of $x$.

If $\forall zR\in L\colon zx=xz$, then:
\begin{block}
$\forall zR\in L\colon z\in p^\ast$

Thus $\lcsext(L)=p^\iomega=x^\iomega = z^\iomega = \lcs(x^\iomega, z^\iomega)$ for $z \neq \ew$.

Hence, choose $y=x\neq \ew$.

Further $R\in L$ as otherwise $R=\lcs(L)=p^iR$ for $p^iR$ the shortest word in $L$.
\end{block}

So assume $\exists zR\in L\colon zx\neq xz$, then:
\begin{block}
$z\neq \ew$

$\lcsext(L)=\lcs( \lcs(x^\iomega, y^\iomega)\mid yR\in L\setminus\{R\})\sle \lcs(x^\iomega,z^\iomega)=\lcs(xz,zx)$

Hence, there is some $yR\in L\setminus\{R\}$ s.t. $\lcsext(L)=\lcs(x^\iomega,y^\iomega)=\lcs(xy,yx)$
\end{block}
\end{prfblk}

\subsection{Lemma~\ref{lem:lcs-computation-main-text} in the main work}

We split the proof of Lemma~\ref{lem:lcs-computation-main-text} into Lemma~\ref{lem:lcs-union} (union) and Lemma~\ref{lem:lcs-concat} (concatenation).

\begin{lemma}[Lemma~\ref{lem:lcs-computation-main-text} (union) in the main work]\label{lem:lcs-union}
Let $L,L'\subseteq \al^\ast$ with $\sig(L)=(R,E)$ and $\sig(L')=(R',E')$. 
\[\sig(L\cup L')=\begin{cases}
(R',E') & \text{ if } R=\top\\
(R,E) & \text{ if } R'=\top\\
(R,\lcs(E,\lcs(E',E'\delta)\delta)) & \text{ if } R'=\delta R\slt \top\\
(R',\lcs(E',\lcs(E,E\delta)\delta)) & \text{ if } R=\delta R'\slt \top\\
(\lcs(R,R'),\ew) & \text{ else }\\
\end{cases}
\]
\end{lemma}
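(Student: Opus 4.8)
The plan is to treat the two components of $\sig(L\cup L')=(\lcs(L\cup L'),\lcsext(L\cup L'))$ separately, and to organise the argument by the $\sle$-comparison of $R=\lcs(L)$ and $R'=\lcs(L')$. The two lines $R=\top$ and $R'=\top$ are exactly the cases $L=\emptyset$ and $L'=\emptyset$; there $L\cup L'$ equals the other language and the claim is immediate from $\sig(\emptyset)=(\top,\top)$. So assume $L\neq\emptyset\neq L'$, i.e.\ $R,R'\neq\top$. For the longest common suffix I would use that $\lcs$ is the infimum w.r.t.\ $\sle$ and that infima compose over unions, giving $\lcs(L\cup L')=\lcs(\lcs(L),\lcs(L'))=\lcs(R,R')$. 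This already fixes the first components of all remaining cases: $\lcs(R,R')$ equals $R$ if $R\sle R'$, equals $R'$ if $R'\sle R$, and is a proper common suffix of both otherwise.

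For $\lcsext$ I would first dispose of the incomparable case $\lcs(R,R')\notin\{R,R'\}$. Here $\lcs(R,R')$ is a strict suffix of $R$ and of $R'$, hence a strict suffix of every word of $L\cup L'$, so $\lcs(L\cup L')\notin L\cup L'$; by the earlier observation that $R\notin L$ forces $\lcsext(L)=\ew$ this yields $\lcsext(L\cup L')=\ew$, matching the ``else'' line. Otherwise, w.l.o.g.\ $R\sle R'$, so $R'=\delta R$ for some $\delta\in\als$ and $\lcs(L\cup L')=R$. Writing each $w\in L'$ relative to $R'$ as $w=yR'=(y\delta)R$, I would split the defining set of $\lcsext(L\cup L')=\lcs(z^\iomega\mid zR\in(L\cup L')\setminus\{R\})$ into the part stemming from $L$, which contributes exactly $\lcsext(L)=E$, and the part stemming from $L'$, which contributes $P:=\lcs((y\delta)^\iomega\mid yR'\in L')$, so that $\lcsext(L\cup L')=\lcs(E,P)$.

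It then remains to show $P=\lcs(E',E'\delta)\delta$. The key step is the conjugacy identity $(y\delta)^\iomega=(\delta y)^\iomega\delta$ (immediate from the reverse definitions, since both reverse to $(\delta^R y^R)^\omega$), which together with the identity $\lcs(Mx)=\lcs(M)x$ lets me factor $P=Q\delta$ with $Q:=\lcs((\delta y)^\iomega\mid yR'\in L')$. To evaluate $Q$ I would reduce to finitely many words using the witness lemmas (\Cref{lem:lcs-witness-main-text,lem:lcsextwit}), choosing witnesses $y_1,y_2$ realising $E'=\lcs(y_1^\iomega,y_2^\iomega)$, and then apply the $\lcs$-calculus of \Cref{lem:lcs-calculus} for (left-)periodic words---chiefly $\lcs(u^\iomega,(wu)^\iomega)=\lcs(u^\iomega,w^\iomega)u$ and $\lcs(x,xy)=\lcs(x,y^\iomega)$---to collapse $\lcs((\delta y)^\iomega\mid y)$ into the closed form $\lcs(E',E'\delta)$; intuitively, appending $\delta$ on the right shifts the period and replaces $E'$ by $\lcs(E',E'\delta)$. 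The symmetric case $R'\sle R$ is obtained by exchanging the roles of $L$ and $L'$.

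The main obstacle I expect is precisely the equality $Q=\lcs(E',E'\delta)$: it is a Fine--Wilf-type statement about left-periodic words, and the reduction to two witnesses has to be carried out so that it is the prefixes measured relative to $R'$ (rather than relative to $R$) that survive after multiplication by $\delta$. A secondary, purely bookkeeping obstacle is the interaction with the artificial greatest element $\top$: one must check the degenerate instances---where $E$ or $E'$ equals $\top$ or $\ew$, e.g.\ when $L'$ reduces to essentially a single residual word---directly against the conventions $\top w=\top$, $\ew^\iomega=\top$, and $\lcs(x,\top)=x$, since there the summary $(R',E')$ sits at the boundary of what $\sig$ can record and the generic calculus does not apply verbatim.
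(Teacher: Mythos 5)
Your proposal is correct and follows essentially the same route as the paper's proof: the same case split on $\top$, incomparability of $R,R'$, and $R\sle R'=\delta R$; the same decomposition of $\lcsext(L\cup L')$ into the contribution $E$ from $L$ and $P=\lcs\bigl((y\delta)^\iomega\mid yR'\in L'\bigr)$ from $L'$; and the same periodicity identities of Lemma~\ref{lem:lcs-calculus} to collapse $P$ to $\lcs(E',E'\delta)\delta$ --- the paper pairs each $(y\delta)^\iomega$ with the $\delta^\iomega$ contributed by $R'\in L'$ and applies $\lcs(\delta^\iomega,(y\delta)^\iomega)=\lcs(\delta^\iomega,y^\iomega)\delta$ directly, while you first conjugate via $(y\delta)^\iomega=(\delta y)^\iomega\delta$, which is a cosmetic difference. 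The one substantive remark is that the degenerate instance you flag is not mere bookkeeping: for $L'=\{R'\}$ with $\delta\neq\ew$ one has $E'=\top$ but $P=\delta^\iomega$, whereas $\lcs(E',E'\delta)\delta=\top\delta=\top$, so your intermediate identity $Q=\lcs(E',E'\delta)$ genuinely fails there and that case must be handled separately (the paper's own final rewriting step, and hence the stated closed form, has the same boundary issue).
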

\begin{prfblk}
If $R'=\top$:
\begin{block}
$L'=\emptyset$

$\sig(L\cup L')=\sig(L)=(R,E)$
\end{block}
The case $R=\top$ is symmetric.

W.l.o.g.\ $R\neq \top \neq R'$ from here on.

If $R\not\sle R'\wedge R'\not\sle R$:
\begin{block}
$\sig(L\cup L')=(\lcs(R,R'),\ew)$
\end{block}

W.l.o.g.\ $R\sle R'=\delta R$ from here on s.t.\ $\lcs(L\cup L')=R$.

If $R\not\in L$, then:
\begin{block}
$\lcsext(L)=\ew$ and thus $\lcsext(L\cup L')=\ew=\lcs(\lcsext(L),\lcs(\lcsext(L'),\lcsext(L')\delta)\delta)$.
\end{block}
So assume $R\in L$.

If $R'\not\in L'$, then:
\begin{block}
For suitable $xR',yR'\in L'$.
\[\ew=\lcsext(L')=\lcs(x^\iomega,y^\iomega)=\lcs(x,y)\]
Thus 
\[\lcs((z\delta)^\iomega\mid z\delta R \in L') = \lcs(x\delta,y\delta)=\delta\]
and hence 
\[\lcsext(L\cup L')=\lcs(\lcsext(L),\delta)=\lcs(\lcsext(L),\lcs(\lcsext(L'),\lcsext(L')\delta)\delta)\]
\end{block}
Thus also assume that $R'\in L'$. Then:
\[
\lcsext(L\cup L')=\lcs(w^\iomega, \delta^\iomega, (z\delta)^\iomega\mid z\delta R \in L', wR \in L)=\lcsa{\lcsext(L),\\ \lcs(\lcs(\delta^\iomega, (z\delta)^\iomega)\mid z\delta R \in L')}
\]
As shown before
\[\lcs(\delta^\iomega, (z\delta)^\iomega)=\lcs(\delta^\iomega, z^\iomega)\delta\]
Thus
\[
\lcs(\lcs(\delta^\iomega, (z\delta)^\iomega)\mid z\delta R \in L')=\lcs(\delta^\iomega, \lcsext(L'))\delta = \lcs(\lcsext(L'),\lcsext(L')\delta)\delta
\]
Hence again
\[
\lcsext(L\cup L') = \lcs(\lcsext(L), \lcs(\lcsext(L'),\lcsext(L')\delta)\delta)
\]

\end{prfblk}

\begin{lemma}[Lemma~\ref{lem:lcs-computation-main-text} (concatenation) in the main work]\label{lem:lcs-concat}
Let $L,L'\subseteq \al^\ast$ with $\sig(L)=(R,E)$ and $\sig(L')=(R',E')$.
\[
\sig(LL')=\begin{cases}
(\top,\top) & \text{ if } RR'=\top\\
(\lcs(R,E')R', \ew) & \text{ if } RR'\slt\top\wedge R\not\sle E'\\
(RR', E) & \text{ if } RR'\slt \top \wedge E'=\top\\
(RR',\lcs(E,\delta)) & \text{ if } RR'\slt \top \wedge E' = \delta R
\end{cases}
\]
\end{lemma}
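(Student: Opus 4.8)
The plan is to compute the two components of $\sig(LL')=(\lcs(LL'),\lcsext(LL'))$ separately, reusing the prepend computation of Example~\ref{ex:lcsext} for the first and a conjugation trick for the second. If $RR'=\top$ then $L=\emptyset$ or $L'=\emptyset$, so $LL'=\emptyset$ and $\sig(LL')=(\top,\top)$; hence assume $L\neq\emptyset\neq L'$, i.e.\ $R,R'\in\als$. For the longest common suffix I would first use associativity of $\lcs$ together with the main-text identity $\lcs(Lw')=\lcs(L)w'=Rw'$ to get
\[
\lcs(LL')=\lcs\bigl(\lcs(Lw')\mid w'\in L'\bigr)=\lcs\bigl(Rw'\mid w'\in L'\bigr)=\lcs(RL').
\]
Writing every $w'\in L'$ as $\nu R'$ (the factorisation by the fixed suffix $R'=\lcs(L')$ is unique, with $\nu$ ranging over $L'_0:=\{\nu\mid\nu R'\in L'\}$) and applying $\lcs(x,xy)=\lcs(x,y^\iomega)$ from Lemma~\ref{lem:lcs-calculus} exactly as in Example~\ref{ex:lcsext} yields $\lcs(RL')=\lcs(R,E')R'$. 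Thus $\lcs(LL')=\lcs(R,E')R'$, which already produces all three nonempty branches: if $R\not\sle E'$ then $\lcs(R,E')$ is a proper suffix of $R$, while otherwise ($E'=\top$ or $E'=\delta R$) we have $\lcs(R,E')=R$ and $\lcs(LL')=RR'$.

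For $\lcsext(LL')$ I would dispose of the two easy branches first. If $R\not\sle E'$, then $\lcs(LL')=\lcs(R,E')R'$ has length strictly below $\abs{R}+\abs{R'}$, whereas every word of $LL'$ has length at least $\abs{R}+\abs{R'}$; hence $\lcs(LL')\notin LL'$ and the appendix lemma ``$\lcs(L)\notin L\Rightarrow\lcsext(L)=\ew$'' gives $\lcsext(LL')=\ew$. If $E'=\top$, then $L'=\{R'\}$, so $LL'=LR'$, and appending a fixed word leaves $\lcsext$ unchanged (since $\alpha\cdot\lcs(LR')\in LR'\Leftrightarrow\alpha R\in L$), whence $\sig(LL')=(RR',E)$.

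The remaining, and main, case is $E'=\delta R\slt\top$, where $\lcs(LL')=RR'$. Writing also every $w\in L$ as $\mu R$ with $\mu\in L_0:=\{\mu\mid\mu R\in L\}$, a word of $LL'$ is $\mu R\nu R'$, and its prefix before $RR'$ is the $\gamma$ with $\gamma R=\mu R\nu$. The crucial step is that for each $\nu\in L'_0\setminus\{\ew\}$ we have $R\sle E'=\delta R\sle\nu^\iomega$, so by the equivalence ``$u\sle w^\iomega$ iff $\exists v\colon uw=vu\wedge w^\iomega=v^\iomega u$'' of Lemma~\ref{lem:lcs-calculus} there is a conjugate $v_\nu$ of $\nu$ with $R\nu=v_\nu R$ and $\nu^\iomega=v_\nu^\iomega R$; setting $v_\ew:=\ew$, right-cancellation turns $\gamma R=\mu R\nu=\mu v_\nu R$ into $\gamma=\mu v_\nu$. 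Consequently
\[
\lcsext(LL')=\lcs\bigl((\mu v_\nu)^\iomega\mid\mu\in L_0,\ \nu\in L'_0,\ (\mu,\nu)\neq(\ew,\ew)\bigr)=:F,
\]
and from $\nu^\iomega=v_\nu^\iomega R$ together with $\lcs(L\cdot x)=\lcs(L)x$ one obtains $\lcs(v_\nu^\iomega\mid\nu\neq\ew)\,R=E'=\delta R$, i.e.\ $\delta=\lcs(v_\nu^\iomega\mid\nu\in L'_0\setminus\{\ew\})$.

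It then remains to prove $F=\lcs(E,\delta)$. For ``$F\sle\lcs(E,\delta)$'' I would use that $E'\neq\ew$ forces $R'\in L'$, hence $\ew\in L'_0$, so the pure terms $(\mu v_\ew)^\iomega=\mu^\iomega$ occur and give $F\sle E$; symmetrically $E\neq\ew$ forces $R\in L$ (by the same appendix lemma), hence $\ew\in L_0$ and $F\sle\delta$. The genuinely combinatorial step is the reverse inequality $\lcs(E,\delta)\sle F$: for a mixed term it follows from the identity $\lcs(\mu^\iomega,v_\nu^\iomega,(\mu v_\nu)^\iomega)=\lcs(\mu^\iomega,v_\nu^\iomega)$ of Lemma~\ref{lem:lcs-calculus} (with $u=v_\nu$, $w=\mu$), which gives $\lcs(E,\delta)\sle\lcs(\mu^\iomega,v_\nu^\iomega)\sle(\mu v_\nu)^\iomega$ using $E\sle\mu^\iomega$ and $\delta\sle v_\nu^\iomega$ and monotonicity of $\lcs$; for the pure terms it is immediate. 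The main obstacle I anticipate is not this identity but the bookkeeping of the degenerate sub-cases $R=\ew$, $E=\ew$, $E'=\ew$ and singleton languages, where some pure terms are absent; there one instead concludes $F=\ew=\lcs(E,\delta)$ directly by exhibiting two witnesses $\nu_1,\nu_2$ (resp.\ $\mu_1,\mu_2$) whose last letters differ, so that the corresponding extensions already have empty $\lcs$.
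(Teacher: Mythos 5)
Your proposal is correct and follows essentially the same route as the paper's proof: reduce $\lcs(LL')$ to $\lcs(RL')=\lcs(R,E')R'$, and in the case $R\sle E'$ conjugate each $\nu$ past $R$ (so $R\nu=v_\nu R$) and apply the identity $\lcs(u^\iomega,w^\iomega,(wu)^\iomega)=\lcs(u^\iomega,w^\iomega)$ to obtain $\lcsext(LL')=\lcs(E,\delta)$, exactly as the paper does with its conjugates $\hat{z}$. The differences are only organizational (you compute the first component uniformly across the branches and derive $\lcsext(LL')=\ew$ in the $R\not\sle E'$ case from a length bound, where the paper argues via explicit witnesses), and your acknowledged degenerate sub-cases do work out as you describe.
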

\begin{prfblk}
If $R=\top\vee R'=\top$, then:
\begin{block}
$L=\emptyset$ or $L'=\emptyset$ s.t. $LL'=\emptyset$.

$\sig(LL')=(\top,\top)$
\end{block}
So assume $R\neq\top\neq R'$, i.e.\ $L\neq\emptyset\neq L'$.

If $E'=\top$, then:
\begin{block}
$L'=\{R'\}$ s.t.\

$\sig(LL')=\sig(LR')=(RR', E)$
\end{block}

Thus assume also that $E'\neq\top$ s.t.\ $\abs{L'}\ge 2$.

Fix $xR',yR'\in L'\setminus\{R'\}$ s.t.\ 
\[E'=\lcsext(z^\iomega \mid zR'\in L')=\lcs(x^\iomega, y^\iomega)\]

Consider
\[
\lcs(LL')=\lcs( wRzR'\mid wR\in L, zR'\in L') 
\]
If $E'=\ew$, then:
\begin{block}
\[\lcs(LL')=\lcs( wRzR'\mid wR\in L, zR'\in L')\sle \lcs(wRxR',wRyR'\mid wR\in L) =\lcs(xR',yR')=R'\]
and
\[\lcsext(LL') = \lcs((wRz)^\iomega \mid wR\in L, zR'\in L') \sle \lcs(wRx)^\iomega, (wRy)^\iomega \mid wR\in L)\sle \lcs(x,y)=\ew\]
and hence 
\[\sig(LL')=(R',\ew)=(\lcs(R,E')R',\ew)\]
\end{block}

So assume $E'\neq \ew$ s.t.\ $R'\in L'$. Then
\[
\lcs(LL')=\lcs( wRzR'\mid wR\in L, zR'\in L') =\lcsa{R\\ \lcs(\lcs(wR,wRz) \mid wR\in L, zR'\in L')}R'
\]
Consider then:
\[
\lcs(R,E')=\lcs(R,\lcsext(L'))=\lcs(R,\lcs(z^\iomega \mid zR'\in L')) = \lcs(R,Rx,Ry)
\]
If $R\not\sle E'$, then:
\begin{block}
$R\sgt \lcs(R,E')= \lcs(R,Rx,Ry)$

Thus also $\lcs(LL')=\lcs(RL')=\lcs(R,E')R'\slt RR'$.

So $\sig(LL')=(\lcs(R,E')R',\ew)$.
\end{block}

So assume $R\sle E'$, i.e.\ $\forall zR'\in L'\setminus\{R'\}\colon R\sle z^\iomega$ s.t.:
\[\forall zR'\in L'\setminus\{R'\}\exists \hat{z}\colon Rz=\hat{z}R \wedge z^\iomega = \hat{z}^\iomega R\]
Hence:
\[
LL' = \{wRz'R\mid wR\in L, zR'\in L'\} = \{w\hat{z}RR'\mid wR\in L, zR'\in L', Rz = \hat{z}R\}
\]
and $\lcs(LL')=RR'$.

Consider then
\[
\begin{array}{cl}
  & \lcsext(LL')\\
= & \lcs( \rd(wRzR'\inv{
  RR'})^\iomega \mid wR\in L, zR'\in L', wz\neq \ew)\\
= & \lcsa{ 
\lcs(w^\iomega \mid wR \in L\setminus\{R\})\\
\lcs((w\hat{z})^\iomega \mid wR \in L, zR'\in L'\setminus\{R'\}, Rz = \hat{z}R)
}\\
= & \lcsa{ 
\lcsext(L)\\
\lcs((w\hat{z})^\iomega \mid wR \in L, zR'\in L'\setminus\{R'\}, wz\neq \ew, Rz = \hat{z}R)
}\\
\end{array}
\]

If $E=\top$, i.e.\ $L=\{R\}$, then:
\begin{block}
\[\begin{array}{cl}
  & \lcsext(LL')\\
= & \lcs( \hat{z}^\iomega \mid zR'\in L', z\neq \ew, Rz = \hat{z}R)\\
\rdeq & \lcs( z^\iomega \mid zR'\in L', z\neq \ew)\,\inv{R}\\
\rdeq & E'\,\inv{R}\\
= & \lcs(E,\rd(E'\,\inv{R}))
\end{array}\]
So $\sig(LL')=(RR', \lcs(E,\rd(E'\,\inv{R})))$
\end{block}

So assume that also $\abs{L}\ge 2$.

Fix $uR,vR\in L\setminus\{R\}$ s.t.\
\[
\forall wR\in L\colon
E=\lcsext(L)=\lcs(u^\iomega,v^\iomega)=\lcs(u^\iomega,v^\iomega,w^\iomega)
\]

If $E=\ew$, then:
\begin{block}
$\lcsext(LL')=\ew=\lcs(E,\rd(E'\,\inv{R}))$

$\sig(LL')=(RR',\lcs(E,\rd(E'\,\inv{R})))$
\end{block}

Thus also assume $\lcsext(L)\neq \ew$ and thus also $R\in L$ s.t.:
\[
\begin{array}{lll}
\lcsext(LL')
&=& \lcsa{
  \lcs( (w\hat{z})^\iomega \mid wR \in L, zR'\in L', wz\neq \ew)\\
  \lcsext(L)\\
  \lcsext(\hat{z}^\iomega\mid zR'\in L'\setminus\{R'\}, Rz = \hat{z}R)
}\\
&& \\
&=& \lcsa{
  \lcs( (w\hat{z})^\iomega \mid wR \in L, zR'\in L', wz\neq \ew, Rz = \hat{z}R)\\
  E\\
  \rd(E'\,\inv{R})
}
\end{array}
\]

As shown
\[
\lcs( (w\hat{z})^\iomega, \hat{z}^\iomega, w^\iomega)
=
\lcs( w^\iomega, \hat{z}^\iomega)\sge \lcs(E,\rd(E'\,\inv{R}))
\]

Thus
\[
\lcsext(LL') = \lcs(E,\rd(E'\,\inv{R}))
\]
and again $\sig(LL')=(RR', \lcs(E,\rd(E'\,\inv{R})))$.
\end{prfblk}

\begin{corollary}
$\tseq$ is a congruence relation on the language semiring $\cg{2^{\al^\ast}, \cup, \cdot}$ s.t.\ the quotient w.r.t.\ $\tseq$ is a semiring again with the projection $\sig$ a homomorphism.
\end{corollary}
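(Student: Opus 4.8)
The plan is to observe that the full content of this corollary is already packaged in Lemma~\ref{lem:lcs-computation-main-text}, so that what remains is a routine appeal to universal algebra. First I would note that $\tseq$ is an equivalence relation by construction: it is the kernel of the map $\sig\colon 2^{\al^\ast}\to (\{\top\}\cup\als\cup\alup)^2$, and reflexivity, symmetry and transitivity are inherited directly from equality of the pairs $\sig(L)=(\lcs(L),\lcsext(L))$.

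The crucial step is to show that $\tseq$ is compatible with both $\cup$ and concatenation. Here I would read Lemma~\ref{lem:lcs-computation-main-text} (equivalently Lemmata~\ref{lem:lcs-union} and~\ref{lem:lcs-concat}) not as a recipe for computing $\sig(L\cup L')$ and $\sig(LL')$ from the languages, but as the stronger assertion that these values are \emph{functions of $\sig(L)$ and $\sig(L')$ alone}: there exist maps $f_\cup$ and $f_\cdot$ on pairs of summaries with $\sig(L\cup L')=f_\cup(\sig(L),\sig(L'))$ and $\sig(LL')=f_\cdot(\sig(L),\sig(L'))$, since every case distinction and every term on the right-hand sides of those lemmas is expressed purely through $R,E,R',E'$ and auxiliary words that are themselves recoverable from the summaries. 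From this functional dependence, compatibility is immediate: if $L_1\tseq L_1'$ and $L_2\tseq L_2'$, then $\sig(L_1)=\sig(L_1')$ and $\sig(L_2)=\sig(L_2')$, whence $\sig(L_1\cup L_2)=f_\cup(\sig(L_1),\sig(L_2))=f_\cup(\sig(L_1'),\sig(L_2'))=\sig(L_1'\cup L_2')$, i.e.\ $L_1\cup L_2\tseq L_1'\cup L_2'$, and the analogous chain with $f_\cdot$ gives $L_1L_2\tseq L_1'L_2'$.

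Having established that $\tseq$ is a congruence on the language semiring $\cg{2^{\al^\ast},\cup,\cdot}$ (with additive unit $\emptyset$ and multiplicative unit $\{\ew\}$), I would invoke the standard fact that the quotient of a semiring by a congruence is again a semiring, with the operations $[L]\cup[L']:=[L\cup L']$ and $[L]\cdot[L']:=[LL']$ well defined and the canonical projection $\pi$ a semiring homomorphism. Since by definition of $\tseq$ the map $\sig$ factors through $\pi$ as an injection onto its image, $\sig$ itself is a homomorphism onto the induced semiring on summaries, whose addition is $f_\cup$ and whose multiplication is $f_\cdot$.

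I expect no genuine obstacle, as the combinatorial heavy lifting was already done in proving Lemma~\ref{lem:lcs-computation-main-text}. The only point requiring care is to make the functional dependence explicit: in each branch of Lemmata~\ref{lem:lcs-union} and~\ref{lem:lcs-concat} one must check that the auxiliary word $\delta$ (defined there by $R'=\delta R$, $R=\delta R'$, or $E'=\delta R$) is uniquely determined by the summaries, so that $f_\cup$ and $f_\cdot$ are genuine functions of $\sig(L),\sig(L')$ and do not secretly depend on data hidden in $L$ or $L'$.
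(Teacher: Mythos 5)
Your proposal is correct and follows exactly the route the paper intends: the corollary is stated immediately after Lemmata~\ref{lem:lcs-union} and~\ref{lem:lcs-concat} with no further proof, precisely because those lemmas express $\sig(L\cup L')$ and $\sig(LL')$ as functions of $\sig(L)$ and $\sig(L')$ alone, from which the congruence property and the quotient-semiring statement follow by the standard universal-algebra argument you give. Your closing remark about checking that $\delta$ is recoverable from the summaries is the right point of care, and it does hold in every branch of the two lemmas.
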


Every $L\subset\als$ is thus $\tseq$-equivalent to some sublanguage $\tsn(L)\subseteq L$ consisting of at most three words:
\begin{fct}
For $L\subseteq\al^\ast$ we define the following sublanguage $\tsn(L)\subseteq L$ 
\[
\tsn(L):=\begin{cases}
L & \text{ if } \abs{L}\le 2\\
\{R,xR,yR\} & \text{ if } \{R,xR,yR\} \subseteq L\wedge \lcsext(L)=\lcs(x^\iomega,y^\iomega)\\
\{xR,yR\} & \text{ if } R=\lcs(xR,yR)\wedge R\not\in L\wedge \{xR,yR\}\subseteq L
\end{cases}
\quad \text{ with } R:=\lcs(L)
\]
Then independent of the concrete choice of $x$ and $y$ (up to the given side constraints):
\[
L\tseq \tsn(L)
\]
\end{fct}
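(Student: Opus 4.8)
The statement to prove is the final \textbf{Fact}: for every $L\subseteq\al^\ast$, the sublanguage $\tsn(L)\subseteq L$ (consisting of at most three words) satisfies $L\tseq\tsn(L)$, i.e.\ $\sig(L)=\sig(\tsn(L))$, independently of the admissible choices of $x,y$. The plan is to verify the two defining conditions of $\tseq$ separately, namely $\lcs(L)=\lcs(\tsn(L))$ and $\lcsext(L)=\lcsext(\tsn(L))$, by a case split on $\abs{L}$ that mirrors the three cases in the definition of $\tsn$.

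\emph{First}, I would dispose of the trivial case $\abs{L}\le 2$, where $\tsn(L)=L$ and the claim is immediate. So assume $\abs{L}\ge 3$, set $R:=\lcs(L)$, and split according to whether $R\in L$. \emph{For the $\lcs$:} in both remaining cases $R=\lcs(L)$ is a common suffix of every word in $L$, hence of every word in $\tsn(L)\subseteq L$, so $R\sle\lcs(\tsn(L))$; conversely, by \Cref{lem:lcs-witness-main-text} applied with any fixed element, there is a witness pair already achieving $R$, and since $\tsn(L)$ is chosen to contain two words $xR,yR$ with $\lcs(xR,yR)=R$, we get $\lcs(\tsn(L))\sle\lcs(xR,yR)=R$. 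This already forces $\lcs(\tsn(L))=R=\lcs(L)$ and, crucially, shows the value is independent of which witnesses $x,y$ are picked.

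\emph{Second, for the $\lcsext$:} here I would invoke \Cref{lem:lcsextwit-main-text}, which is exactly the tool that says two words determine $\lcsext$. If $R\notin L$, then by the lemma (and the remark that $R\notin L$ forces $\lcsext(L)=\ew$) we take $\tsn(L)=\{xR,yR\}$ with $\lcs(x^\iomega,y^\iomega)=\ew$; one checks $\lcsext(\{xR,yR\})=\lcs(x^\iomega,y^\iomega)=\ew=\lcsext(L)$, again independent of the choice. If $R\in L$, fix any $xR\in L\setminus\{R\}$; \Cref{lem:lcsextwit-main-text} produces a witness $yR\in L\setminus\{R\}$ with $\lcsext(L)=\lcs(x^\iomega,y^\iomega)=\lcs(x^\iomega,y^\iomega,z^\iomega)$ for \emph{all} $zR\in L$. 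Taking $\tsn(L)=\{R,xR,yR\}$, the same formula computes $\lcsext(\{R,xR,yR\})=\lcs(x^\iomega,y^\iomega)$, matching $\lcsext(L)$.

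The remaining point, and the only place requiring care, is \emph{independence of the choice of $x$ and $y$}. The definition of $\tsn(L)$ permits any $x,y$ subject to the stated side constraints ($\{R,xR,yR\}\subseteq L$ with $\lcsext(L)=\lcs(x^\iomega,y^\iomega)$, resp.\ $R=\lcs(xR,yR)$ with $R\notin L$). I would argue that the side constraints are precisely engineered so that every admissible pair reproduces the same $\sig$: for the $\lcs$ this is the witness property above, and for the $\lcsext$ the clause ``$=\lcs(x^\iomega,y^\iomega,z^\iomega)$ for all $zR\in L$'' in \Cref{lem:lcsextwit-main-text} guarantees that adding or swapping any further word of $L$ leaves the value unchanged, so any two admissible choices agree. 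This is the main obstacle — not a hard calculation, but the step where one must check that the lemma's ``for all $z$'' conclusion genuinely pins the value down rather than merely exhibiting one good pair — and once it is in place the Fact follows by assembling the two coordinates of $\sig$.
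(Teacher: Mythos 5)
Your proposal is correct and matches the paper's own treatment: the paper states this as a \emph{Fact} without a separate proof, deriving it exactly as you do from \Cref{lem:lcs-witness-main-text} and \Cref{lem:lcsextwit-main-text}, with the side constraints on $x,y$ guaranteeing independence of the choice. (Only cosmetic nit: in the case $R\in L$ the bound $\lcs(\tsn(L))\sle R$ follows simply from $R\in\tsn(L)$ rather than from $\lcs(xR,yR)=R$, which that clause does not require.)
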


\renewcommand{\al}{\textsf{A}}

\subsection{Lemma~\ref{lem:char-wf-main-text} in the main work}

\begin{lemma}[Lemma~\ref{lem:char-wf-main-text} in the main work]\label{lem:char-wf}
A context-free grammar $G$ is $\wf$ if and only if $G$ is $\bwf$ with $r_S=\ew$ for $S$ the axiom of $G$.
\end{lemma}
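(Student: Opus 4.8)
The plan is to prove the two implications separately, observing that the backward direction is immediate from the definitions, so that essentially all of the work lies in the forward direction (which is, in fact, what the paragraph preceding the lemma already asserts). For the backward direction I would argue as follows: if $G$ is $\bwf$ with $r_S=\ew$, then by the very definition of $\bwf$ the language $r_S L_S$ is $\wf$; since $r_S=\ew$ we have $r_S L_S = L_S = L(G)$, so $L(G)$ is $\wf$ and hence $G$ is $\wf$.

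For the forward direction, assume $G$ is $\wf$. First, since $\al^\ast\subseteq\ial^\ast\al^\ast$, every $\wf$ word is $\wwf$, so $G$ is $\wwf$. It then remains to exhibit, for every nonterminal $X$, a shortest $r_X\in\al^\ast$ with $\abs{r_X}=d_X$ and $r_X L_X$ $\wf$, and to verify $r_S=\ew$. As $G$ is reduced, each $X$ is reachable, say $S\to_G^\ast\mu X\nu$ with $\mu,\nu\in\Br^\ast$. For every $\gamma\in L_X$ we then get $\mu\gamma\nu\in L_S$, so every prefix of $\mu\gamma\nu$ lies in $\pfcl(L_S)$, which is $\wf$. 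In particular $\mu\in\pfcl(L_S)$ is $\wf$, so $\rd(\mu)=p$ for some $p\in\al^\ast$, and $\rd(\mu\gamma')=\rd(p\,\rd(\gamma'))\in\al^\ast$ for every prefix $\gamma'$ of every $\gamma\in L_X$.

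The heart of the argument is that this \emph{single fixed} context $p$ constrains all of $L_X$ simultaneously. Since $p\in\al^\ast$ consists only of opening brackets, reducing $p\,\rd(\gamma')$ can cancel only an initial block of closing brackets of the reduced word $\rd(\gamma')$; for the result to lie in $\al^\ast$ the word $\rd(\gamma')$ must itself be of the form $\inv{y'}z'$ with $y',z'\in\al^\ast$ (otherwise some closing bracket of $\rd(\gamma')$ survives), whence $\pfcl(L_X)$ is $\wwf$, and moreover $p$ must end in $(y')^R$. Therefore all the deficiency words $(y')^R$ arising from words $\gamma\in L_X$ with $\rd(\gamma)=\inv{y'}z'$ are suffixes of the \emph{one} word $p$ and so are totally ordered by the suffix relation. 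By observation~(iii) their maximal length is exactly $d_X$; I would take $r_X\in\al^\ast$ to be this longest suffix, so that $\abs{r_X}=d_X$ and $\inv{r_X}\in\rd(\pfcl(L_X))$, and every $(y')^R$ is a suffix of $r_X$. Then for $\rd(\gamma)=\inv{y'}z'$ the cancellation in $\rd(r_X\gamma)=\rd(r_X\inv{y'}z')$ leaves a prefix of $r_X$ followed by $z'\in\al^\ast$, so $r_X L_X$ is $\wf$; this establishes $\bwf$. Finally, $L_S$ is $\wf$, hence nonnegative, so $-\hd(\alpha')\le 0$ for every prefix $\alpha'$ of a word of $L_S$, giving $d_S=0$ (attained at $\alpha'=\ew$) and therefore $r_S=\ew$.

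The step I expect to be the main obstacle is the compatibility argument in the third paragraph, i.e.\ proving that a \emph{single} $r_X$ rectifies \emph{all} words of $L_X$ at once. This is exactly where context-freeness is indispensable: because every $\gamma\in L_X$ occurs inside the same context $\mu$, the individual deficiencies $(y')^R$ are forced to be nested as suffixes of the one fixed reduced word $p=\rd(\mu)$, rather than being arbitrary and possibly incompatible; choosing $r_X$ as the longest of them then works uniformly.
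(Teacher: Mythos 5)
Your proof is correct and follows essentially the same route as the paper's: both directions hinge on the observation that a single reduced left context of $X$ (a word over the opening brackets) forces all deficiency words of $L_X$ to be nested suffixes of one fixed word, so that the longest of them, of length $d_X$, serves as $r_X$ and rectifies all of $L_X$ at once, with $r_S=\ew$ following from $d_S=0$. The only cosmetic difference is that the paper anchors the argument at a left context of minimal $\hd$-value, whereas you use an arbitrary left context together with observation~(iii); both yield the same conclusion.
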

\begin{prfblk}
Let $S$ be the axiom of $G$. 

W.l.o.g.\ $G$ is reduced to the nonterminals which are reachable from $S$ and which are productive.

Assume first that $G$ is well-formed. Then:
 \begin{block}
 For every nonterminal $X$ of $G$ we can define its {\em left-context} $L^l_X = \{ \alpha \in \Br^\ast \mid S\to_G^\ast \alpha X \beta \}$. 
 
 As $L:=L(G)$ is well-formed and every $\alpha\in L^l_X$ is a prefix of some word of $L$, also $L^l_X$ is well-formed; hence $m_X:=\min\{ \hd(L_X^l)\}\ge 0$ is defined.
 Fix any $\lambda_X\in L_X^l$ with $\hd(\lambda_X)=m_X$. 
 
 Then for any $\gamma=\gamma'\gamma''\in L_X$ also $\lambda_X \gamma'$ is a prefix of a word of $L$, thus well-formed, and therefore $\hd(\lambda_X\gamma')\ge 0$. It follows that $d_X=\max\{-\hd(\gamma')\mid \gamma'\gamma''\in L_X\}\le m_X$. 
 
 In particular, there is a $\gamma_X=\gamma'_X\gamma''_X\in L_X$ s.t.\ $-\hd(\gamma'_X)=d_X$; as $L$ is well-formed, $L_X$ has to be weakly well-formed s.t.\ $\gamma'_X\rdeq \inv{r_X} s_X$ and $\lambda_X \rdeq x r_X$. Hence, for every $\gamma\in L_X$ we have $\gamma\rdeq \inv{y} z$ and $\lambda_X\gamma\rdeq x r_X \inv{y} z$ well-formed, i.e.\ $y \sle r_X$ and thus $r_XL_X$ is well-formed. In particular, we have $r_S=\ew$ for the axiom $S$.
 \end{block}
Assume that $G$ is $\bwf$ and thus by definition also nonnegative. Then:
  \begin{block}
  We fix for every $X$ any $r_X\in\al^\ast$ s.t.\ $r_X L_X$ is well-formed; hence, $L_X$ is weakly well-formed and $d_X = \max\{\abs{y}\mid \gamma\in L_X, \rd(\gamma)=\inv{y} z\}$.
  
  Then $\hat{r}_X:=\max^{\sle}\{y\mid \gamma\in L_X, \rd(\gamma)=\inv{y}x\}\sle r_X$ is well defined with $d_X = \abs{r_X}$.
  
  As $G$ is also nonnegative, we have $d_S=0$ resp.\ $r_S=\ew$ and thus $L=L_S$ is well-formed.
  \end{block}
\end{prfblk}

\subsection{Lemma~\ref{lem:r_X-main-text} in the main work}

\begin{fct}
Let $G$ be a context-free grammar over the nonterminals $\vars$. Define $G_p$ by the following rules:
\begin{itemize}
\item
If $X\to_G YZ$, then $X \to_{G_p} YZ$, $X_p \to_{G_p} Y_p$, and $X_p\to_{G_p} Y Z_p$ and $X_p \to_{G_p} YZ$.
\item
If $X\to_G Y$, then $X \to_{G_p} Y$, $X_p\to_{G_p} Y_p$, and $X_p\to_{G_p} Y$.
\item
If $X\to_G \inv{u}v$, then $X\to_{G_p} \inv{u}v$, and $X_p\to_{G_p}\inv{u}v$
\end{itemize}
Then $L_X(G)=L_X(G_p)$ and $L_{X_p}(G_p)\cup\{\ew\}=\pfcl(L_X(G))$.
In particular, we can construct $G_p$ in time polynomial in the size of $G$.
\end{fct}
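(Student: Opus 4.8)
The plan is to prove the two language identities separately: the first is almost immediate from the shape of the construction, while the second is a structural induction whose engine is the standard recursion for prefix closures under concatenation.

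For $L_X(G)=L_X(G_p)$ I would first observe that the unprimed productions of $G_p$ are exactly those of $G$ (each rule $X\to_G YZ$, $X\to_G Y$, $X\to_G\inv{u}v$ is copied verbatim) and that \emph{no} primed nonterminal ever occurs on the right-hand side of an unprimed production. Hence any derivation in $G_p$ that starts from an unprimed $X$ stays within the unprimed nonterminals and is a $G$-derivation, and conversely; the two derivation relations coincide and so do the generated languages. This part needs no computation beyond inspecting the three rule shapes.

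For the second identity set $P_X:=L_{X_p}(G_p)$; the goal is $P_X\cup\{\ew\}=\pfcl(L_X)$. The combinatorial core is the decomposition $\pfcl(\{\alpha\beta\})=\pfcl(\{\alpha\})\cup\{\alpha\}\pfcl(\{\beta\})$, obtained by splitting a prefix of $\alpha\beta$ according to whether it ends inside $\alpha$ or strictly beyond it; taking the union over $\alpha\in L_Y,\beta\in L_Z$ gives $\pfcl(L_YL_Z)=\pfcl(L_Y)\cup L_Y\,\pfcl(L_Z)$, and together with the chain-rule contribution $\pfcl(L_Y)$ and the finite terminal set $\pfcl(\{\inv{u}v\})$ this yields
\[
\pfcl(L_X)=\{\ew\}\cup\bigcup_{X\to_G YZ}\bigl(\pfcl(L_Y)\cup L_Y\pfcl(L_Z)\bigr)\cup\bigcup_{X\to_G Y}\pfcl(L_Y)\cup\bigcup_{X\to_G\inv{u}v}\pfcl(\{\inv{u}v\}).
\]
I would then prove $P_X\cup\{\ew\}=\pfcl(L_X)$ by induction on the height of derivation trees, reading every unprimed subterm via $L_Y=L_Y(G_p)$ from the first part. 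In the case $X\to_G YZ$ the three productions $X_p\to Y_p$, $X_p\to YZ_p$, $X_p\to YZ$ contribute $P_Y$, $L_YP_Z$ and $L_YL_Z$; using the induction hypotheses $P_Y\cup\{\ew\}=\pfcl(L_Y)$ and $P_Z\cup\{\ew\}=\pfcl(L_Z)$ together with $L_Y\subseteq\pfcl(L_Y)$, one checks that $P_Y\cup L_YP_Z\cup L_YL_Z\cup\{\ew\}=\pfcl(L_Y)\cup L_Y\pfcl(L_Z)$, i.e.\ exactly the $X\to YZ$ summand above. (The production $X_p\to YZ$ is in fact subsumed by the other two modulo $\ew$, but it keeps the full-word/empty-word bookkeeping transparent.) The chain rule is the analogous degenerate case.

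The one genuinely delicate point — and the step I expect to be the main obstacle — is the terminal case, because a prefix of $\inv{u}v$ need not equal $\inv{u}v$: a single production $X_p\to\inv{u}v$ generates only the full word and misses the prefixes $\inv{u'}$ of $\inv{u}$ and $\inv{u}v'$ of $\inv{u}v$. I would therefore read the displayed rule ``$X_p\to_{G_p}\inv{u}v$'' as shorthand for the finite family $X_p\to p$ ranging over the $\abs{u}+\abs{v}+1$ prefixes $p$ of $\inv{u}v$ (equivalently, a fixed gadget generating $\pfcl(\{\inv{u}v\})$); with this reading $P_X=\pfcl(\{\inv{u}v\})\setminus\{\ew\}$ and the terminal case matches the recursion. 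Finally, since each original rule spawns only a constant number of primed productions (plus $O(\abs{u}+\abs{v})$ terminal prefix productions), $G_p$ has size polynomial in that of $G$ and is constructible in polynomial time; the $\ew$-bookkeeping — whether $\ew\in P_X$, which is immaterial thanks to the explicit $\cup\{\ew\}$ — should be tracked but never obstructs the induction.
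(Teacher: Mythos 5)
Your argument is correct, and since the paper states this result as a Fact \emph{without any proof}, there is no alternative route to compare it against; the unprimed-copy observation, the recursion $\pfcl(L_YL_Z)=\pfcl(L_Y)\cup L_Y\,\pfcl(L_Z)$, and the height induction are exactly the standard argument, and your bookkeeping of $\ew$ and of the redundant productions $X_p\to_{G_p} YZ$ and $X_p\to_{G_p} Y$ checks out.

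Your ``delicate point'' is moreover a genuine catch rather than a misreading: with the terminal production taken literally as the single rule $X_p\to_{G_p}\inv{u}v$, the Fact is false. For $G$ consisting of the lone rule $S\to\inv{a}b$ one gets $L_{S_p}(G_p)\cup\{\ew\}=\{\ew,\inv{a}b\}$, whereas $\pfcl(L_S(G))=\{\ew,\inv{a},\inv{a}b\}$. Your repair --- replacing that rule by $X_p\to_{G_p} p$ for every prefix $p\ple\inv{u}v$, i.e.\ a gadget generating $\pfcl(\{\inv{u}v\})$ --- is what the construction must intend: the later use of $G^p_X$ in the proof of Lemma~\ref{lem:r_X} extracts from a word $\beta$ with $\rd(\beta)=\inv{r_X}y$ a prefix $\beta'$ with $\rd(\beta')=\inv{r_X}$, and such a prefix in general ends strictly inside the terminal word of some rule $X\to_G\inv{u}v$. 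The polynomial-size claim survives the repair, since the terminal words occurring in the rules of $G$ are written out explicitly (and even for compressed terminal words the same primed-nonterminal scheme applied to an {\SLP} yields all prefixes in polynomial size).
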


\begin{lemma}\label{lem:nonneg}
Let $L=L(G)=\pfcl(L(G))$ be a prefix-closed context-free language. 
We can decide in time polynomial in $G$ whether there is a word $\alpha\in L$ s.t.\ $\hd(\alpha) < 0$.
\end{lemma}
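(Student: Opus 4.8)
The plan is to reduce the question to computing the minimal value of $\hd$ over the language and to phrase this as a least-fixpoint computation over the tropical semiring $\cg{\Z\cup\{-\infty,+\infty\},\min,+}$, solved by a Bellman--Ford-style value iteration as in \cite{DBLP:journals/tcs/EsparzaKL11}. For each nonterminal $X$ I set $m_X:=\inf\{\hd(\alpha)\mid \alpha\in L_X\}\in\Z\cup\{-\infty\}$; this is finite or $-\infty$, never $+\infty$, since $G$ is reduced and hence $L_X\neq\emptyset$. As $\hd$ is a homomorphism into $(\Z,+)$ taking integer values, there is some $\alpha\in L=L_S$ with $\hd(\alpha)<0$ if and only if $m_S<0$, so it suffices to decide whether $m_S<0$. (Prefix-closedness of $L$ is not actually used in this step; it is merely the form in which the lemma is applied, so that together with the preceding construction of $G_p$ for $\pfcl$ one can decide nonnegativity of an arbitrary grammar.)

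First I would set up the value iteration $m_X^{(h)}:=\inf\{\hd(\alpha)\mid \alpha\in L_X^{\le h}\}$, with $m_X^{(0)}:=+\infty$ and, matching the recursion for $L_X^{\le h}$ and using $\hd(\inv{u}v)=\abs{v}-\abs{u}$,
\[
m_X^{(h+1)}=\min\Bigl(m_X^{(h)},\ \min_{X\to_G YZ}\bigl(m_Y^{(h)}+m_Z^{(h)}\bigr),\ \min_{X\to_G Y}m_Y^{(h)},\ \min_{X\to_G \inv{u}v}(\abs{v}-\abs{u})\Bigr).
\]
The sequence $(m_X^{(h)})_h$ is non-increasing with limit $m_X$, since $L_X=\bigcup_h L_X^{\le h}$ and $\hd$ is $\omega$-continuous. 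As this is the iteration of a fixed monotone operator, once $m^{(h+1)}=m^{(h)}$ holds componentwise the iteration has reached a fixpoint and stays constant, whence $m_X=m_X^{(h)}$ for all $X$.

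The key step is a pumping argument bounding the number of needed rounds by $N:=\abs{\vars}$: for every $X$, either $m_X=-\infty$, or $m_X=m_X^{(N)}$. Indeed, assume $m_X$ finite and take a derivation tree of a word $\alpha\in L_X$ with $\hd(\alpha)=m_X$ of minimal height. Any root-to-leaf path with more than $N$ nonterminals repeats some $Y$, inducing a context $Y\to_G^\ast \sigma Y\tau$ with side material $\sigma,\tau\in\Br^\ast$; since this context is completable to words of $L_X$, additivity of $\hd$ forces $\hd(\sigma)+\hd(\tau)\ge 0$, as otherwise iterating the pump yields words in $L_X$ with $\hd\to-\infty$, contradicting finiteness of $m_X$. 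Removing the context gives a derivation of strictly smaller height of a word of $\hd$-value $m_X-(\hd(\sigma)+\hd(\tau))\le m_X$, hence $=m_X$ by minimality of $m_X$, contradicting minimality of the height. Thus the minimal tree has height $\le N$, i.e.\ $m_X=m_X^{(N)}$. Consequently $m^{(N+1)}=m^{(N)}$ holds iff all $m_X$ are finite; and if $m_X^{(N+1)}<m_X^{(N)}$ for some $X$, then $m_X=-\infty$, and since $G$ is reduced this $X$ is reachable and productive, so $S\to_G^\ast w_1Xw_2$ with $w_1,w_2\in\Br^\ast$ and $m_S\le \hd(w_1)+m_X+\hd(w_2)=-\infty$.

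This yields the decision procedure: compute $m_X^{(h)}$ for $h=1,\dots,N+1$; if $m^{(N+1)}=m^{(N)}$ then $m_S=m_S^{(N)}$ is finite and we answer ``yes'' iff $m_S^{(N)}<0$, and otherwise $m_S=-\infty<0$ and we answer ``yes''. For the running time, each $m_X^{(h)}$ equals $\hd(\alpha)$ for some $\alpha\in L_X^{\le h}$, whose derivation tree has at most $2^h$ leaves each contributing a constant of magnitude at most $\abs{G}$, so $\abs{m_X^{(h)}}\le 2^h\abs{G}$; every value thus has $O(N+\log\abs{G})$ bits, and the $N+1$ rounds perform $O(N\cdot\abs{G})$ additions and comparisons on such integers, all polynomial in $\abs{G}$. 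I expect the main obstacle to be exactly this convergence/classification analysis — showing that $N+1$ rounds suffice to separate the finite case from the $-\infty$ case while keeping all arithmetic of polynomial bit-size — which is the content of the Bellman--Ford/\cite{DBLP:journals/tcs/EsparzaKL11} fixpoint computation specialized to the homomorphism $\hd$.
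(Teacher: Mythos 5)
Your proposal is correct, and algorithmically it is the same procedure as the paper's: Bellman--Ford-style value iteration for $\hd$ over the tropical semiring, with the number of rounds bounded by the number $N$ of nonterminals via a pumping argument. The correctness arguments differ in a way worth noting. The paper picks a \emph{shortest} word $\alpha\in L$ with $\hd(\alpha)=-1$ and uses prefix-closedness to conclude that, in the factorization $\alpha=\beta\rho\gamma\varrho\delta$ induced by a repeated nonterminal, the non-pumped part satisfies $\hd(\beta\gamma\delta)\ge 0$ (otherwise a shorter prefix would already witness negativity), so the negativity must sit in the pump, i.e.\ $\hd(\rho\varrho)<0$; it then detects such pumps by an explicit negative-cycle check on a weighted graph derived from the rules. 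You instead reason directly about $m_X=\inf\hd(L_X)$: either $m_X=-\infty$ (forced by a negative pump) or the infimum is attained by a derivation tree of height at most $N$, so stabilization after $N$ rounds separates the two cases, with the negative-cycle test packaged as the standard ``one extra round'' of Bellman--Ford. This never uses prefix-closedness, so your argument establishes the statement for arbitrary reduced CFGs, which is a genuine (if mild) strengthening; the paper's use of the hypothesis is a convenience of its particular witness analysis, not essential. One small wrinkle in your pumping step: removing a single pumping context from a \emph{height}-minimal tree need not strictly decrease the height, since another path may realize the same height; either minimize the number of nodes instead, or prune repeatedly until no root-to-leaf path carries more than $N$ nonterminals. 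This is a routine fix (and the paper's own proof is no more careful on this point).
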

\begin{prfblk}
Let $N$ be the number of nontermimals of $G$.
Assume there is a word $\alpha\in L$ with $\hd(\alpha)<0$, then w.l.o.g.\ $\hd(\alpha)=-1$ as $L$ is prefix-closed.
Pick any shortest such $\alpha\in L$ with $\hd(\alpha)=-1$.

If $\alpha$ has a derivation tree of height at most $N$, then we simply apply standard fixed-point/Kleene iteration to the operator $F$ obtained from the rewrite rules of $G$ via the homomorphism $\hd$ over the tropical semiring
\[
F(\vars)_X := \min\{ \hd(Y)+\hd(Z), \hd(Y), \hd(\gamma) \mid X\to_G YZ, X\to_G Y, X\to_G\gamma\}
\]
Then $F^{N}(\infty)_S = \min\{ \hd(\beta)\mid \beta \in L^{\le N}\} \le \hd(\alpha)=-1$ with $L^{\le N}$ all words of $L$ that possess a derivation tree of height at most $N$.

Assume thus that every such $\alpha$ has a derivation tree of height at least $N+1$. Pick a longest path from the root to a leaf in such a derivation tree, and moving bottom-up along this path, pick the first nonterminal $X$ occurring a second time in order to obtain a factorization $\alpha=\beta \rho \gamma \varrho \delta$ s.t.\ $\beta \rho^k \gamma \varrho^k \delta\in L$ for all $k\in \N_0$ and $\rho\varrho \neq\ew$; in particular, note that $\rho\varrho\in L_X^{\le N}$.

Then $-1 = \hd(\alpha)=\hd(\beta\gamma\delta)+\hd(\rho\varrho)$ and $\hd(\beta\gamma\delta)\ge 0$;
otherwise there would be a prefix $\pi$ of $\beta\gamma\delta$ with $\hd(\pi)=-1$
contradicting the minimality of $\alpha$.
Hence, $\hd(\rho\varrho)\le -1$.

Thus, we only need to decide whether there is a pumpable derivation tree $X\to_G^{\le N} \rho X\varrho$ of height at most $N$ s.t.\ $\hd(\rho\varrho)<0$. This can be done by transforming the rewrite rules $X\to YZ$ into weighted edges $X\xrightarrow{F_Y^N(\infty)} Z$ and $X\xrightarrow{F_Z^N(\infty)} Y$, and then check for negative cycles in this graph. (This amounts to take the derivative of $F$ at $F^N(\infty)$.)
\end{prfblk}

\begin{lemma}[Lemma~\ref{lem:r_X-main-text} in the main work]\label{lem:r_X}
Let $L=L(G)$ be $\wf$. Let $X$ be some nonterminal of $G$. Derive from $G$ the CFG $G^p$ s.t.\ $\pfcl(L_X)=L(G_X^p)$. Let $r_X\in \al^\ast$ be the shortest word s.t.\ $r_X L_X$ is \wf.
Then:
\begin{enumerate}
\item $r_X$ is also the shortest word s.t.\ $r_X \pfcl(L_X)$ is \wf.
\item There is a shortest $\alpha\in \pfcl(L_X)$ s.t.\ $\rd(\alpha)=\inv{r_X}$.
\item Every shortest $\alpha\in \pfcl(L_X)$ with $\rd(\alpha)=\inv{r_X}$ has a derivation tree w.r.t.\ $G_X^p$ that does not contain any pumping tree and thus has height bounded by the number of nonterminals of $G_X^p$.
\item 
An \SLP for $r_X$ can be computed in polynomial time.
\end{enumerate}
\end{lemma}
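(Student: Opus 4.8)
The plan is to establish the four claims in order, using well-formedness to pin down the reduced form of minimal-height prefixes; only the third claim requires real work.

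For (1), note $L_X\subseteq \pfcl(L_X)$, so any word $w$ with $w\,\pfcl(L_X)$ well-formed also makes $wL_X$ well-formed; hence the shortest word working for $\pfcl(L_X)$ has length at least $\abs{r_X}$. Conversely, $r_X\pfcl(L_X)\subseteq \pfcl(r_X L_X)$, and since $r_X L_X$ is $\wf$ and the prefix closure of a $\wf$ language is again $\wf$, the language $r_X\pfcl(L_X)$ is $\wf$ as well. Thus $r_X$ is also the shortest word making $\pfcl(L_X)$ well-formed. For (2), recall $d_X=\abs{r_X}$ (\Cref{lem:char-wf}) is finite, so the set $\{-\hd(\alpha')\mid \alpha'\in\pfcl(L_X)\}\subseteq\Z$ is bounded above by $d_X$ and its supremum is attained: there is $\alpha'\in\pfcl(L_X)$ with $\hd(\alpha')=-d_X$. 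By (1), $r_X\alpha'$ is $\wf$ with $\hd(r_X\alpha')=0$, so $\rd(r_X\alpha')=\ew$ and therefore $\rd(\alpha')=\inv{r_X}$. Hence $\{\alpha\in\pfcl(L_X)\mid \rd(\alpha)=\inv{r_X}\}$ is nonempty and has a shortest element.

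The heart is (3). Fix a shortest $\alpha\in\pfcl(L_X)$ with $\rd(\alpha)=\inv{r_X}$ and any derivation tree w.r.t.\ $G_X^p$. First remove every pumping tree that generates no terminals; this shortens the tree without changing $\alpha$. Suppose a pumping tree remains, i.e.\ some nonterminal of $G_p$ repeats on a path, inducing a factorization $\alpha=\beta\rho\gamma\varrho\delta$ with a cycle generating the context $(\rho,\varrho)$ and $\rho\varrho\neq\ew$ (where $\varrho=\ew$ if the repeated nonterminal is one of the prefix-marked nonterminals $Y_p$, which carry nothing to their right). Since $\pfcl(L_X)$ is prefix-closed, every prefix has $\hd\geq -d_X$; excising the cycle gives $\beta\gamma\delta\in\pfcl(L_X)$ with $\hd(\beta\gamma\delta)=-d_X-\hd(\rho\varrho)\geq -d_X$, while iterating it gives $\hd(\beta\rho^k\gamma\varrho^k\delta)=-d_X+(k-1)\hd(\rho\varrho)\geq -d_X$ for all $k$; together these force $\hd(\rho\varrho)=0$. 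Then $\alpha':=\beta\gamma\delta\in\pfcl(L_X)$ has $\hd(\alpha')=-d_X$, so $r_X\alpha'$ is $\wf$ by (1) with $\hd(r_X\alpha')=0$; hence $\rd(r_X\alpha')=\ew$ and $\rd(\alpha')=\inv{r_X}$. As $\abs{\alpha'}<\abs{\alpha}$, this contradicts minimality of $\alpha$. Therefore no nonempty pumping tree remains, so the tree has no repeated nonterminal on any path and its height is bounded by the number of nonterminals of $G_X^p$. The main obstacle is exactly this excision step: one must show that a \emph{balanced} pump can be removed without disturbing the reduced word, which works only by combining prefix-closedness (to control $\hd$ and deduce $\hd(\rho\varrho)=0$) with (1) (to pin the reduced form to $\inv{r_X}$ rather than merely to some closing-bracket word of the right length).

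Finally, for (4): by (2) and (3) the value $d_X$ equals $-\min\{\hd(\alpha')\mid \alpha'\in\pfcl(L_X)\}$ and is witnessed by a pump-free derivation of height at most the number of nonterminals. Thus the values $d_X$ for all nonterminals can be computed simultaneously by a Bellman--Ford/least-fixed-point iteration over the tropical semiring $\cg{\Z\cup\{-\infty\},\min,+}$, which converges after polynomially many rounds precisely because a pump-free witness exists, as in \cite{DBLP:conf/fossacs/TozawaM07}. The productions realizing the minimum define, for each $X$, an \SLP\ $\mathcal{S}_X$ generating a minimal-height witness prefix $\alpha_X$; although $\alpha_X$ may be exponentially long (cf.\ the iterated squaring in \Cref{ex:the-one}), the pump-free height bound keeps $\mathcal{S}_X$ of size polynomial in $G$. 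By (1)--(3), $\rd(\alpha_X)=\inv{r_X}$, so computing the reduced form of $\mathcal{S}_X$ and inverting it yields an \SLP\ for $r_X$; all these \SLP\ operations run in time polynomial in the size of the \SLP s (see \cite{Lohrey2012}).
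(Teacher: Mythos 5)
Your proposal is correct and follows essentially the same route as the paper's proof: part (1) via the fact that the prefix closure of a well-formed language is well-formed together with $r_X\pfcl(L_X)\subseteq\pfcl(r_XL_X)$, part (2) via a prefix realizing the maximal descent $d_X=\abs{r_X}$, part (3) via excising a pumping context and accounting for $\hd$, and part (4) via a height-bounded tropical fixed-point computation followed by \SLP\ extraction and reduction. The only cosmetic difference is in (3), where you pin $\hd(\rho\varrho)=0$ from prefix-closedness and then contradict the minimality of $\alpha$, whereas the paper uses minimality of $\alpha$ to force $\hd(\rho\varrho)<0$ and then contradicts well-formedness by pumping — the same two ingredients combined in the opposite order.
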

\begin{prfblk}
\begin{enumerate}
\item 
As $L_X \subseteq \pfcl(L_X)$, we only need to show that $u_X \pfcl(L_X)$ is still $\wf$.

For every $\alpha'\in \pfcl(L_X)$ there is some $\alpha\in L_X$ s.t.\ $\alpha= \alpha' \alpha''$ (by definition). As $L$ is \wf, $L_X$ is \wwf, hence $\alpha$ is \wwf, and thus $\alpha'$ and $\alpha''$  are \wwf, too. 
Hence, $\rd(\alpha') = \inv{r} s$, $\rd(\alpha'')=\inv{u}v$ and $\rd(\alpha) =\inv{x}y$ for some $r,s,u,v,x,y\in \al^\ast$ s.t.\ $\inv{r}s\inv{u}v \rdeq \inv{x} y$. If $s=s'u$, then $\inv{r}=\inv{x}$; else $u=u's$ and $\inv{x} = \inv{r}\, \inv{u'}$. Thus $\inv{r} \ple \inv{x} \ple \inv{u_X}$.
\item 
There is some $\beta\in L_X$ s.t.\ $\rd(\beta)= \inv{u_X}y$ for some $y\in \al^\ast$. Then there is some prefix $\beta' \ple \beta$ s.t.\ $\rd(\beta')= \inv{u_X}$. By definition, $\beta'\in \pfcl(L_X)$. Hence, there is also some shortest $\alpha \in \pfcl(L_X)$ s.t.\ $\rd(\alpha)=\inv{u_X}$.
\item 
Let $\alpha\in\pfcl(L_X) = L(G_X^p)$ be a shortest word s.t.\ $\rd(\alpha)=\inv{u_X}$. Assume that there is some factorization $\alpha = \beta \rho \gamma \varrho \delta$ s.t.\ $\beta \rho^k \gamma \varrho^k \delta \in L(G_X^p)$ for all $k\in \N_0$. 

Consider $k=0$ and let $\rd(\beta\gamma\delta)=\inv{r}s$. If $r=u_X$, then there would be a prefix of $\beta\gamma\delta$ that would reduce to $\inv{u_X}$ contradicting our assumption that $\alpha$ is a shortest such word. Hence, $\inv{r} \plt \inv{u_X}$. Thus $\hd(\beta\gamma\delta) \ge -\abs{r} > -\abs{u_X}$.

Note that $-\abs{u_X}=\hd(\alpha)=\hd(\beta\gamma\delta)+\hd(\rho\varrho)$.
Thus, $\hd(\rho\varrho) < 0$; a contradiction to the \wf ness of $u_X \pfcl(L_X)$.

\item 
Split every nonterminal $Y$ of $G^p_X$ into $N+1$ copies $Y_0,\ldots,Y_N$, split every rule $Y\to UV$ into the rules $Y_{i+1} \to U_i V_i\mid Y_i$, and derive from every rule $Y\to \gamma$ the rule $Y_0 \to \gamma$. In other words, unfold $G^p_X$ into an acyclic grammar that generates exactly all derivation trees of height at most $N$. 

We compute inductively for every nonterminal a pair of \SLP s representing a {\wwf} word $\inv{u}v$ as follows:

For every rule $Y_0\to \gamma\in \Br$, we choose either $u = \gamma$, $v=\ew$ or
$u=\ew$, $v=\gamma$ such that $\inv{u}v = \gamma$.

For every rule $Y_{i+1} \to U_i V_i$, we have inductively computed \SLP s for $U_i$ and $V_i$ representing words $\inv{r}s$ and $\inv{u}v$, respectively.
Then we can compute \SLP s representing the reduct $\rd(\inv{r}s\inv{u}v)$: either $s=s'u$ (i.e.\ $\abs{s}\ge \abs{u}$) or $u=u's$ (i.e.\ $\abs{s}\le \abs{u}$), i.e.\ we simply have to restrict and then concatenate the respective \SLP s.
For the rule $Y_{i+1}\to Y_i$ there is nothing to do.

We are thus left for $Y_{i+1}$ with a family of \SLP s representing words $\inv{u_i}v_i$: w.l.o.g.\ assume $\abs{u_0}\ge \abs{u_i}$ for all $i$; as for every derivation $X_{N} \to^\ast \alpha Y_{i+1}\gamma$ we need to have that $\alpha \inv{u_i} v_i \gamma$ is \wwf\ for every $i$, we also have that $\alpha \inv{u_0} u_0 \inv{u_i} v_i \gamma$ is \wwf\ for every $i$. We thus may normalize all \SLP\ pairs by means of $\inv{u_i} v_i \mapsto \inv{u_0} \rd(u_0\inv{u_i}) v_i$. As we want to maximize the descent, we then assign to $Y_{i+1}$ the pair of \SLP s encoding $\inv{u_0}$ and the shortest of all $\rd(u_0\inv{u_i})v_i$.
This amounts to a constant amount of \SLP\ operations per rule of the unfolded grammar.

\end{enumerate}
\end{prfblk}

\subsection{Reduced LCS of simple linear $\wf$ languages}\label{app:slg}

The following \cref{lem:negneg,lem:negpos,lem:pospos} state the central combinatorial results underlying the proof of Lemma~\ref{thm:converge-wf-main-text}.
They are concerned with the reduced $\lcs$ of \emph{simple} linear grammars of the form
\[
S\to \alpha X \beta \qquad X \to \sigma_1 X \tau_1 \mid \ldots \mid \sigma_k X \tau_k \mid \gamma \qquad (\alpha,\beta,\sigma_i,\tau_i,\gamma\in\Br^\ast)
\]
which arise from the factorization of derivation trees:

Given (i) a derivation tree of a context-free grammar $G$ that yields the word $\kappa$, 
(ii) a path within this tree, 
and (iii) a specific nonterminal $X$ of $G$,
we may factorize $\kappa$ into the product of {\em (word) contexts} (finite words with a ``hole'' which represent a pumping tree w.r.t.\ $G$) $(\alpha,\beta)$, $(\sigma_1,\tau_1)$, $\ldots$, $(\sigma_k,\tau_k)$ and a single word $\gamma$ s.t.\ $S\to_G^\ast\alpha X \beta$, $X\to_G^\ast\sigma_i X\tau_i$, and $X\to_G^\ast \gamma$. 
We denote such factorizations by simply writing
$\kappa = (\alpha,\beta)(\sigma_1,\tau_1)\ldots (\sigma_k,\tau_k)\gamma$.
Concatenation of contexts with contexts resp.\ words is thus defined by means of substituting the right operand into the ``hole'' of the context, i.e.\
$(\sigma,\tau) (\mu,\nu)=(\sigma\mu,\nu\tau)$ and $(\sigma,\tau)\gamma=\sigma\tau\gamma$.

Such a factorization then induces the \emph{simple linear language}
\[
(\alpha,\beta)[(\sigma_1,\tau_1)+\ldots+(\sigma_k,\tau_k)]^\ast \gamma := \{ \alpha \sigma_{i_1}\ldots \sigma_{i_l}\gamma \tau_{i_l}\ldots \tau_{i_1}\gamma \mid i_1\ldots i_l\in\{1,\ldots,k\}^\ast\}
\]
which is generated by the \emph{simple linear grammar}
\[
S\to \alpha X \beta \qquad X \to \sigma_1 X \tau_1 \mid \ldots \mid \sigma_k X \tau_k \mid \gamma
\]
and is thus always a sublanguage of $L(G)$.
Assuming that $G$ is well-formed, we show in the proof of Lemma~\ref{thm:converge-wf-main-text} 
that we can rewrite each rule so that the simple linear grammar takes the form
\[
S\to u X \qquad X \to s_1 X \tau_1 \mid \ldots \mid s_k X \tau_k \mid w \quad (u,v,w,s_i,r_i,t_i\in\al^\ast, \tau_i = \inv{r_i}t_ir_i \vee \tau_i=\inv{r_i}\inv{t_i}r_i)
\]
where both grammars generated the same language after reduction,
and there is one-to-one correspondence of the rewrite rules s.t.\ the derivations of both grammars are in bijection.
For the proof of Lemma~\ref{thm:converge-wf-main-text} it suffices to consider where $k=2$, i.e.\ derivation tree has been factorized into two pumping trees.

The central observation in \cref{lem:negneg,lem:negpos} is that, if at least one of the contexts $(s_i,\tau_i)$ is \emph{negative}, i.e.\ $\tau_i\rdeq \inv{r_i} \inv{t_i} r_i$ with $t_i\neq \ew$, then the simple linear well-formed $L$ can be normalized to a regular language over $\al$ whose $\lcs$ and $\lcsext$ are already determined by $(u,v)\ew$ and $(u,v)(s_i,\tau_i)\ew$. See also Example~\ref{ex:neg-app}.

\begin{example}\label{ex:neg-app}
Consider the linear language $L'$ given by the rules $S\to u X$ and $X \to s X \inv{r}\inv{t}r \mid \ew$
where we assume that the language is $\wf$ with $t\neq \ew$ and, for the sake of this example, also $\abs{tr}> \abs{s}$.
As $us^k\inv{r}\inv{t}^kr$ is $\wf$ for all $k\in\N$, 
we have $(s^\iomega)^R = (t^\iomega r)^R$ i.e.\
there is conjugate $p$ of the primitive root $q$ of $t$ s.t.\ (i) $qr=rp$, (ii) $s=p^m$, (iii) $t=q^n$, and (iv) $m\ge n$ for suitable $m,n\in\N_0$.
Property (iv) has to hold as otherwise we could generate a negative word. 
Further as $\abs{tr} > \abs{s}$ we have $tr\inv{s} \rdeq r\inv{p}^{m-n}$ s.t.\ $r=r'p^{m-n}$, $qr'=r'p$, and $u=u'r'$ as $us\inv{r}\inv{t}r$ is $\wf$.
We thus may replace $X\to sX\inv{r}tr$ with $X \to p^{m-n} X$ as
\[
\begin{array}{lcl}
u s^{k+1} \inv{r} \inv{t}^{k+1} r 
& \rdeq &
u' r' (p^m)^{k+1} \inv{p}^{m-n} \inv{r'} (\inv{q}^n)^{k+1} r' p^{m-n}\\
&\rdeq &
u' r' (p^m)^{k} p^n \inv{r'} (\inv{q}^n)^{k+1} r' p^{m-n}\\
&\rdeq &
u' (q^m)^{k} q^n (\inv{q}^n)^{k+1} q^{m-n} r'
\rdeq 
u' (q^{m-n})^{k+1}  r'
\rdeq 
u (p^{m-n})^{k+1}
\end{array}
\]
s.t.\ we obtain a regular language $L'\subseteq \als$ whose derivations are in bijection with those of $L$.
Now, $\sig(L)$ is already determined by $u$ and $up$ which in turn implies that $u$ and $us\inv{r}tr$ determine $\rsig(L)$. In case of multiple contexts $(s_j,\tau_j)$ the existence of one context of the form $(s_i,\inv{r_i}\inv{t_i}r_i)$ enforces that all contexts have to be compatible with the primitive root of $t_i$ which subsequently allows us to replace every rule $X\to s_i X \tau_i$ by a $\rdeq$-equivalent rule $X\to p^{k_i} X$ over $\al$.\qed
\end{example}

On the other hand, if both contexts $(s_i,\tau_i)$ are \emph{nonnegative}, i.e.\ $\tau_i= \inv{r_i} t_i r_i$ for $i=1,2$, then Lemma~\ref{lem:pospos} shows that the $\lcs$ and $\lcsext$ of the simple linear well-formed language $L$ is already determined by $(u,v)\ew$ and either some word $(u,v)(s_i,\tau_i)\ew$ or some word $(u,v)(s_i,\tau_i)(s_j,\tau_j)\ew$ for some $i\in\{1,2\}$ with the important point that $j$ can be chosen arbitrarily from $\{1,2\}$ --- this is central to the proof of Lemma~\ref{thm:converge-wf-main-text}. See also Example~\ref{ex:pospos-app}.

\begin{example}\label{ex:pospos-app}
Consider the well-formed language
\[
L= (a{bbaba}, \ew)[ (ba, \inv{aba} b{aba})+(a, bbaba)]^\ast \ew
\]
We have $R:=\rlcs(L) = bbaba$ as we will see in the following. To this end, set
\[
u:=a\quad r := aba\quad s_1 := ba \quad s_2 :=a \quad t_1 := b 
\]
so that
\[
L = (uR,\ew)[ (s_1, \inv{r} t_1{r})+(s_2, R)]^\ast \ew
\]
Note that we have
\begin{itemize}
\item ${r}={aba}={r'}s_1$ for ${r'}:=a$
\item ${r'}s_1 = a\,ba=ab\, a = \hat{s}_1 {r'}$ for $\hat{s}_1:=ab$
\item ${r'}s_2 = a\, a = \hat{s}_2\, {r'}$ for $\hat{s}_2:=s_2=a$
\item ${R} = t_1^2 {r}=\rlcs(L)$
\end{itemize}
In particular, note that $aba = r \not\sle rs_2 = abaa$, i.e.\ there is no conjugate that allows us to move $r$ ``through'' $s_2$;
we only find a conjugate $\hat{s}_2$ with $r's_2=\hat{s}_2r'$ (because of well-formedness).
By splitting the language depending on the innermost context, we can use these conjugates to rewrite the contexts so that all factors are words over $\al$:
\[
\begin{array}{cl@{\quad}l}
 &(u{R},\ew)[(s_1,\inv{r}t_1{r})+(s_2,R)]^\ast \ew \\
=& (ut_1^2{r},\ew)[(s_1,\inv{r}t_1{r})+(s_2,t_1^2 r)]^\ast \ew & (\text{split w.r.t.\ innermost context})\\
=& (ut_1^2{r},\ew) \ew\\
+& (ut_1^2{r},\ew)[(s_1,\inv{{r}}t_1{r})+(s_2,t_1^2{r})]^\ast [s_1\inv{r}t_1{r}+s_2t_1^2{r}] & (\text{move $t_1r$ to the right})\\
=& (ut_1,t_1{r}) \ew\\
+& (ut_1^2{r},t_1{r})[(s_1,t_1{r}\inv{{r}})+(s_2,t_1{r}t_1)]^\ast [s_1\inv{{r}}+s_2t_1] & (\text{cancel $r\inv{r}$})\\
=& (ut_1,t_1{r}) \ew\\
+& (ut_1^2{r},t_1{r})[(s_1,t_1)+(s_2,t_1{r}t_1)]^\ast [s_1\inv{{r}}+s_2t_1] & (\text{use $r=r's_1$})\\
=& (ut_1,t_1{r}) \ew\\
+& (ut_1^2{r'}s_1,t_1{r})[(s_1,t_1)+(s_2,t_1{r}t_1)]^\ast [s_1\inv{{r}}+s_2t_1] & (\text{move $r'$ to the middle})\\
=& (ut_1,t_1{r}) \ew\\
+& (ut_1^2\hat{s}_1,t_1{r})[(\hat{s}_1,t_1)+(\hat{s}_2,t_1{r}t_1)]^\ast [r's_1\inv{{r}}+r's_2t_1] & (\text{use $r_1=r_1's_1$ and cancel})\\
\end{array}
\]
Note that the derivation w.r.t.\ linear grammars underlying both languages are in bijection.
We split the rewritten language one last time, this time w.r.t.\ the outermost context, which leads us to:
\[
\begin{array}{lcl}
L & \rdeq & (ut_1^2{r},\ew) \ew\\
& + & (ut_1^2\hat{s}_1,t_1{r})\ew\\
& + & (ut_1^2\hat{s}_1,t_1{r}){r'}s_2t_1\\
& + & (ut_1^2\hat{s}_1,t_1{r})(\hat{s}_1,t_1)[(\hat{s}_2,t_1{r}\,t_1)]^\ast [\ew+{r'}s_2t_1]\\
& + & (ut_1^2\hat{s}_1,t_1{r})(\hat{s}_2,t_1{r}\,t_1)[(\hat{s}_1,t_1)+(\hat{s}_2,t_1{r}\,t_1)]^\ast [\ew+{r'}s_2t_1]\\
\end{array}
\]
Substituting the actual values yields
\[
\begin{array}{lcr}
L & \rdeq & {a}\,{bbaba}\\
& + & (abb{a}{b},{baba})\ew\\
& + & (abb{a}{b},{baba})a{a}{b}\\
& + & (abbg{a}{b},{baba})({a}{b},{b})[({a}{b},{b})+(a,bab{a}{b})]^\ast [\ew+a{a}{b}]\\
& + & (abb{a}{b},{baba})(a,bab{a}{b})[({a}{b},{b})+(a,bab{a}{b})]^\ast [\ew+a{a}{b}]\\
& = & {a}\,{bbaba}\\
& + & \ldots {a}\,{b}{baba}\\
& + & \ldots {a}\,{b}{baba}\\
& + & \ldots {a}\,{b}\,{b}{baba}\\
& + & \ldots {a}\,{b}{baba}\\
\end{array}
\]
i.e.\ only the words included in
\[
\begin{array}{cl}
& (a{bbaba}, \ew)(ba, \inv{aba} b{aba})[ (ba, \inv{aba} b{aba})+(a, bbaba)]^+ \ew\\
\rdeq & (abbg{a}{b},{baba})({a}{b},{b})[({a}{b},{b})+(a,bab{a}{b})]^\ast [\ew+a{a}{b}]\\
= &\ldots {a}\,{b}\,{b}{baba}\\
\end{array}
\]
define $R=\rlcs(L)$. In particular, we need in this case to use at least two contexts where the outermost context has to be $(a{bbaba}, \ew)(ba, \inv{aba} b{aba})$, while the remaining contexts can be chosen arbitrarily.
\end{example}

Before proving \cref{lem:negneg,lem:negpos,lem:pospos} 
we need the additional \cref{lem:ustst,lem:uststw} for the case without closing brackets (i.e.\ $r_i=\ew$).
Both lemmas are stronger versions of the analogous results for the $\lcp$ 
as presented in~\cite{DBLP:conf/stacs/LuttenbergerPS18}.
Most importantly, both lemmas now state that, if e.g.\ $(u,\ew)(s_1,t_1)^2w = us_1^2 w t_1^2$ is a witness of the $\lcs$ w.r.t.\ $(u,\ew)w=uw$, then also $(u,\ew)(s_1,t_1)(s_2,t_2)w= us_1s_2wt_2t_1$ is a witness w.r.t.\ $uw$; i.e.\ only the outer context resp.\ pumping tree matters in the end.

\begin{lemma}\label{lem:ustst}
Let $L=(u,\ew)[ (s_1,t_1) + (s_2,t_2)]^\ast \ew$ be $\wf$ with $s_1t_1\neq \ew \neq s_2t_2$.

Then $\lcs(L) =\lcsa{u\\ us_1t_1\\ us_1s_1t_1t_1\\ us_2t_2\\ us_2s_2t_2t_2}=\lcsa{u\\ us_1t_1\\ us_1s_2t_2t_1\\ us_2t_2\\ us_2s_1t_1t_2}$

If $s_i= \ew$, then $u s_is_it_it_i$ is not required.
\end{lemma}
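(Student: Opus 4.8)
The plan is to prove the two displayed equalities by showing, for each of the two candidate words $R_{=}:=\lcs(u,us_1t_1,us_1^2t_1^2,us_2t_2,us_2^2t_2^2)$ and $R_{\neq}:=\lcs(u,us_1t_1,us_1s_2t_2t_1,us_2t_2,us_2s_1t_1t_2)$, that it is \emph{the} longest common suffix of all of $L$. Since the five-word sets defining $R_{=}$ and $R_{\neq}$ are sublanguages of $L$, monotonicity (property (iii) of the $\lcs$) immediately gives $\lcs(L)\sle R_{=}$ and $\lcs(L)\sle R_{\neq}$. Hence it suffices to prove the converse, namely that both $R_{=}$ and $R_{\neq}$ are common suffixes of \emph{every} word $w=us_{i_1}\cdots s_{i_l}t_{i_l}\cdots t_{i_1}\in L$; together with the trivial inclusions this forces $\lcs(L)=R_{=}=R_{\neq}$. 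Throughout I would fix $u$ (the word for $l=0$) as reference word and invoke Lemma~\ref{lem:lcs-witness-main-text} to reduce common-suffix statements to pairwise comparisons $\lcs(u,w)$.

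First I would isolate the single-context case. The key claim is: if a word $R$ is a common suffix of the three words $u$, $us_it_i$, $us_i^2t_i^2$, then $R$ is a common suffix of $us_i^k t_i^k$ for every $k$. The proof splits on $|R|$ versus $|t_i|$. If $R\sle t_i$, then $R$ is a suffix of $t_i^k$ and hence of $us_i^kt_i^k$ for $k\ge 1$, while $R\sle u$ handles $k=0$. If $t_i\slt R$, I write $R=Xt_i$ and use $\lcs(x,xy)=\lcs(x,y^\iomega)$ (Lemma~\ref{lem:lcs-calculus}) to rewrite the incremental comparison $\lcs(us_i^k t_i^k,\,us_i^{k+1}t_i^{k+1})=\lcs(us_i^k,(s_it_i)^\iomega)\,t_i^k$; the hypotheses for $k=0,1$ pin down a single period of length at most $|s_it_i|$, so by Fine--Wilf the common suffix cannot grow any further and $R$ persists for all $k$. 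When $s_i=\ew$ this degenerates to $us_i^kt_i^k=ut_i^k$, whose $\lcs$ is already captured by $u$ and $ut_i$ via $\lcs(u,ut_i)=\lcs(u,t_i^\iomega)$, which is exactly why the word $us_i^2t_i^2$ is then superfluous.

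Next I would treat arbitrary depth and mixed contexts by induction on $l$, following the replacement strategy of Example~\ref{ex:lcpconv} specialised to the present situation where no closing brackets occur. Given a word $w=us_{i_1}\cdots s_{i_l}t_{i_l}\cdots t_{i_1}$ with $l\ge 3$, I would compare it against $u$ and perform a case split according to whether the common suffix $\lcs(u,w)$ stays within the innermost factor $t_{i_1}$ or reaches past it into the $s$-block. In either case the aim is to show that the innermost context $(s_{i_1},t_{i_1})$ may be either \emph{duplicated} or \emph{replaced by an arbitrary other context} $(s_j,t_j)$ without changing which suffix the word contributes; iterating this collapses $w$ to a word using at most two contexts, namely one of $u$, $us_it_i$, $us_i^2t_i^2$, or $us_is_jt_jt_i$. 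Because the choice of the second context $j$ is free, the reduced word can be taken either in the same-context family (giving $R_{=}$) or in the mixed family (giving $R_{\neq}$); this is precisely what yields the two equivalent expressions and their equality.

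The main obstacle is the wrapping case of the previous paragraph, where $\lcs(u,w)$ is longer than $t_{i_1}$ so that the common suffix straddles the boundary between the reversed $t$-block and the $s$-block. Here one must align two a priori different periods --- one induced by $s_{i_1}t_{i_1}$ and one by the outer context --- and argue that they are compatible; this is exactly where positivity (all factors lie in $\als$, so no reduction can shorten words) keeps the argument much simpler than in the general Lemmata~\ref{lem:pospos} and~\ref{lem:negpos}, and where the quantitative Fine--Wilf bound guarantees that \emph{two} copies of a context already expose the eventual period, so that no third copy is ever needed. Verifying that the inner context can be swapped for \emph{any} other context (the freedom in the choice of $j$), rather than merely duplicated, is the most delicate point and is what makes both displayed normal forms valid simultaneously.
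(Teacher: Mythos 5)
Your overall strategy is genuinely different from the paper's: you argue word by word, by induction on the number of contexts, trying to show that the innermost/outermost contexts of an arbitrary $w\in L$ can be pruned or exchanged while controlling $\lcs(u,w)$, whereas the paper argues globally, first conjugating the common suffix $R=\lcs(L)$ out of every factor and then partitioning all of $L$ according to the position of the first occurrence of the second context. Your easy cases are fine: the reduction $\lcs(L)\sle R_{=},R_{\neq}$ by monotonicity, the observation that if $\lcs(u,w)\slt t_{i_1}$ then $\lcs(u,w)=\lcs(u,t_{i_1})=\lcs(u,us_{i_1}t_{i_1})$, and the single-context identity $\lcs(us_i^kt_i^k,us_i^{k+1}t_i^{k+1})=\lcs(us_i^k,(s_it_i)^\iomega)\,t_i^k$ are all correct.

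The gap is exactly where you locate it, and it is not closed. In the wrapping case $t_{i_1}\sle\lcs(u,w)$ you assert that the two relevant periods are ``compatible'' and that ``two copies of a context already expose the eventual period'' by Fine--Wilf. The second assertion is the conclusion of the lemma, not an input; and Fine--Wilf, being a statement about two periods of a single word, does not by itself yield the exchangeability of the inner context. What is actually needed is that, after $R$ is conjugated to the right, the residual suffix of $us_is_jt_jt_i$ beyond $R$ is governed by the outer index $i$ alone and not by $j$. The paper obtains this by building a compatible system of conjugates $\hat t_i,\dot s_1,\ddot s_i,\ddot t_1,\ddot z$ with $Rt_i=\hat t_iR$, $t_2=zt_1$ and the key identity $\hat t_2=\ddot z\,\ddot t_1=\hat t_1\ddot z$, then rewriting the sublanguages $(u,\ew)(s_1,t_1)^{+}\ew$ and $(u,\ew)(s_1,t_1)^{\ast}(s_2,t_2)[(s_1,t_1)+(s_2,t_2)]^{\ast}\ew$ so that every word ends in $\hat t_1R$ or $\ddot t_1R$, and finally cutting with $\lcs(u',\hat t_1,\ddot t_1)=\ew$ (plus a separate treatment of $t_1=\ew$, where one instead shows $\ddot z\slt\ddot s_1^{\iomega}$ so that $us_1s_2t_2t_1$ witnesses iff $us_2s_2t_2t_2$ does). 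Incompatibility of the periods is precisely the situation in which the $\lcs$ is cut at a point that only the two-context words detect, so it cannot be argued away; until the exchange-of-inner-context claim is actually derived by some such mechanism, the proof is incomplete.
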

\begin{proof}
Let $R := \lcs(L)$ and $u = u' R$. It exists $\hat{t_i}$ such that
\[R t_i = \hat{t}_i R\]
If $t_i=\ew$, then set $\hat{t}_i=\ew$; otherwise we have $R \slt t_i^\iomega$ as $R\sle us_i^k t_i^k$ for all $k\ge 0$.
\begin{claim}
If $R\slt t_1=t_1'R\wedge R\slt t_2=t_2'R$ then $\lcs(L)=\lcsa{u\\ us_1t_1\\ us_2t_2} = \lcsa{u'\\ us_1t_1'\\ us_2t_2'} R$.
\end{claim}
\begin{proof}
As $t_1'\neq \ew\neq t_2'$ we get $L=(u'R,\ew)[(s_1,t_1'R)+(s_2,t_2'R)]^\ast \ew$.
Therefore $\lcs(u',t_1')= \ew \vee \lcs(u',t_2')= \ew$ and the claim follows.
\end{proof}
Thus, assume w.l.o.g.\ that $t_1 \sle R=\dot{R} t_1$ from here on.
Then, 
\[R=\dot{R} t_1 = \hat{t}_1 \dot{R}\]
If $t_1= \ew$, we set $\hat{t}_1= \ew$ and $R=\dot{R}$;
otherwise $\dot{R} t_1 t_1 = R t_1 = \hat{t}_1 R = \hat{t}_1 \dot{R} t_1$ (cancel $t_1$ from the left).
Additionally, there exists $\dot{s}_1$ such that
\[\dot{R} s_1 = \dot{s}_1 \dot{R}\]
If $s_1= \ew$, we set $\dot{s}_1= \ew$; otherwise
$\dot{R} \slt \dot{R} s_1$ as $\dot{R} t_1 = R \sle u s_1 t_1 = u' \dot{R} t_1 s_1 t_1 = u' \hat{t}_1 \dot{R} s_1 t_1$.
\begin{claim}
If $R\slt t_2=t_2'R = t_2'\hat{t}_1 \dot{R}$ ($t_2'\neq \ew$) then the lemma follows.
\end{claim}
\begin{proof}
We have
\begin{align*}
(u,\ew)(s_1,t_1)^+ \ew
&=(u'\hat{t}_1 \dot{R},\ew)(s_1,t_1)^+ \ew\\
&=(u'\hat{t}_1 \dot{s}_1,R)(\dot{s}_1,\hat{t}_1)^\ast \ew\\
(u,\ew)(s_1,t_1)^\ast (s_2,t_2) [(s_1,t_1)+(s_2,t_2)]^\ast \ew
&=(u,\ew)(s_1,t_1)^\ast (s_2,t_2'R) [(s_1,t_1)+(s_2,t_2)]^\ast \ew\\
&=(u,R)(s_1,\hat{t}_1)^\ast (s_2,t_2') [(s_1,t_1)+(s_2,t_2)]^\ast \ew\\
\end{align*}
Therefore $\ew\overset{!}= \lcs(u', \hat{t}_1\dot{s}_1, \dot{s}_1\hat{t}_1^+, t'_2\hat{t}_1^*)$.
If $\lcs(u', \hat{t}_1\dot{s}_1) = \ew$ then $us_1t_1$ is a witness.
If $\lcs(u', \dot{s}_1\hat{t}_1^+) = \ew$ and $\hat{t}_1 \neq \ew$ then $\lcs(u', \dot{s}_1\hat{t}_1) = \ew$ and $us_1s_1t_1t_1$ is a witness
and $us_2s_1t_1t_2$. Note that if $\hat{t}_1 = \ew$ then $\dot{s}_1 \neq \ew$ and we are in the first case where $us_1t_1$ is a witness.
If $\lcs(u', t'_2) = \ew$ then $us_2t_2$ is a witness.
\end{proof}

We therefore consider the case that $t_2\sle R = \ddot{R} t_2$ from here on.
Then 
\[R=\ddot{R} t_2 = \hat{t}_2 \ddot{R} \text{ and } u = u'R = u'\ddot{R}t_2 = u' \hat{t}_2\ddot{R}\]
as $\ddot{R} t_2 t_2 = R t_2 = \hat{t}_2 R = \hat{t}_2 \ddot{R} t_2$ (cancel $t_2$ from the left).
W.l.o.g.\ we assume that $\abs{t_1} \le \abs{t_2}$.
\begin{claim}
We find the following conjugates
\begin{enumerate}
\item $\exists \ddot{s}_2\colon \ddot{R} s_2 = \ddot{s}_2 \ddot{R}$
\item $\exists z\colon t_2 = zt_1\wedge \dot{R}=\ddot{R}z$
\item $\exists \ddot{z}\colon \ddot{R} z = \ddot{z} \ddot{R}$
\item $\exists \ddot{s}_1\colon \ddot{R} s_1 = \ddot{s}_1 \ddot{R}$
\item $\exists \ddot{t}_1\colon \ddot{R} t_1 = \ddot{t}_1 \ddot{R}$
\item $\hat{t}_2= \ddot{z} \ddot{t}_1 = \hat{t}_1 \ddot{z}$
\end{enumerate}
\end{claim}
\begin{proof}
\begin{enumerate}
\item We have $\ddot{R} t_2 = R \sle u s_2 t_2 = u' \ddot{R} t_2 s_2 t_2 = u' \hat{t}_2 \ddot{R} s_2 t_2$
and therefore $\ddot{R} \sle \ddot{R} s_2$.
\item We have $R=\dot{R} t_1 = \ddot{R} t_2$ and $\abs{t_1}\le \abs{t_2}$
and therefore
$\ddot{R} z t_1 = \dot{R} t_1$.
\item We have $\hat{t}_2 \ddot{R} = \ddot{R}t_2=R=\dot{R} t_1 = \hat{t}_1 \dot{R} = \hat{t}_1 \ddot{R} z$
and therefore
$\ddot{R} \sle \ddot{R} z$.
\item If $t_2= \ew$, then $\ddot{R}=Rt_2=R = \dot{R} = \ddot{R}$ and $\ddot{s}_1 = \dot{s}_1 = s$.
Otherwise, we have
$\ddot{R} t_1 \sle \ddot{z} \ddot{R} t_1 = \ddot{R} z t_1 = \ddot{R} t_2 = R \sle u s_1 t_1 = u' \hat{t}_2 \ddot{R} s_1 t_1$
and therefore
$\ddot{R} \sle \ddot{R} s_1$.
\item We have $\hat{t}_2 \ddot{R}=\ddot{R} t_2 = R \sle \hat{t}_1 R = R t_1 = \ddot{R} t_2 t_1 = \hat{t}_2 \ddot{R} t_1$
and therefore
$\ddot{R} \sle \ddot{R} t_1$.
\item We have $\hat{t}_2 \ddot{R} = R = \dot{R} t_1 = \hat{t}_1 \dot{R} = \hat{t}_1 \ddot{R} z = \hat{t}_1 \ddot{z} \ddot{R}$
and thus $\hat{t}_2 = \hat{t}_1\ddot{z}$.
Additionally, $\hat{t}_2 \ddot{R} = R = \ddot{R} t_2 = \ddot{R} z t_1 = \ddot{z} \ddot{R} t_1 = \ddot{z} \ddot{t}_1 \ddot{R}$
holds and thus $\hat{t}_2 = \ddot{z}\ddot{t}_1$.
\end{enumerate}
\end{proof}

Using these conjugates we obtain:

\begin{tabular}{llll}
$(u,\ew)(s_1,t_1)^+ \ew$
&$= (u'\hat{t}_1 \dot{R}, \ew)(s_1, t_1)^+ \ew$\\
&$= (u'\hat{t}_1\dot{s}_1, R)(\dot{s}_1,\hat{t}_1)^\ast \ew$\\
&$= u'\hat{t}_1\dot{s}_1R$ && \textcolor{darkgray}{$us_1t_1$}\\
&$+ (u'\hat{t}_1\dot{s}_1\dot{s}_1, \hat{t}_1R)(\dot{s}_1,\hat{t}_1)^\ast \ew$&&\textcolor{darkgray}{$(u,\ew)(s_1,t_1)^{\geq2}\ew$}\\
\end{tabular}

\begin{tabular}{lll@{}l}
&$(u,\ew)(s_1,t_1)^\ast (s_2,t_2)[(s_1,t_1)+(s_2,t_2)]^\ast \ew$ \\
=& $(u'\hat{t}_2 \ddot{R}, \ew)(s_1,t_1)^\ast (s_2,t_2)[(s_1,t_1)+(s_2,t_2)]^\ast \ew$ & \rdelim\}{8}{*}&
\multirow{8}{5cm}{\textcolor{darkgray}{``extracting'' the $\lcs R$ by substituting the corresponding conjugates of $u, s_i, t_i$}}\\
=& $(u'\hat{t}_2, \ddot{R})(\ddot{s}_1,\ddot{t}_1)^\ast (\ddot{s}_2,\hat{t}_2)[(\ddot{s}_1,\ddot{t}_1)+(\ddot{s}_2,\hat{t}_2)]^\ast \ew$&\\
=& $(u'\hat{t}_2, \ddot{R})(\ddot{s}_1,\ddot{t}_1)^\ast (\ddot{s}_2,\hat{t}_1 \ddot{z})[(\ddot{s}_1,\ddot{t}_1)+(\ddot{s}_2,\hat{t}_2)]^\ast \ew$&\\
=& $(u'\hat{t}_2, \ddot{z}\ddot{R})(\ddot{s}_1,\hat{t}_1)^\ast (\ddot{s}_2,\hat{t}_1 )[(\ddot{s}_1,\ddot{t}_1)+(\ddot{s}_2,\hat{t}_2)]^\ast \ew$&\\
=& $(u'\hat{t}_2, \dot{R})(\ddot{s}_1,\hat{t}_1)^\ast (\ddot{s}_2,\hat{t}_1 )[(\ddot{s}_1,\ddot{t}_1)+(\ddot{s}_2,\hat{t}_2)]^\ast \ew$&\\
=& $(u'\hat{t}_2, \hat{t}_1\dot{R})(\ddot{s}_1,\hat{t}_1)^\ast (\ddot{s}_2,\ew )[(\ddot{s}_1,\ddot{t}_1)+(\ddot{s}_2,\hat{t}_2)]^\ast \ew$&\\
=& $(u'\hat{t}_2, R)(\ddot{s}_1,\hat{t}_1)^\ast (\ddot{s}_2,\ew )[(\ddot{s}_1,\ddot{t}_1)+(\ddot{s}_2,\hat{t}_2)]^\ast \ew$&\\
=& $(u'\ddot{z}\ddot{t}_1, R)(\ddot{s}_1,\hat{t}_1)^\ast (\ddot{s}_2,\ew )[(\ddot{s}_1,\ddot{t}_1)+(\ddot{s}_2,\ddot{z}\ddot{t}_1)]^\ast \ew$&\\
=& $ u'\ddot{z}\ddot{t}_1\ddot{s}_2R$&& \textcolor{darkgray}{$us_2t_2$}\\
& $+ (u'\ddot{z}\ddot{t}_1\ddot{s}_1, \hat{t}_1R)(\ddot{s}_1,\hat{t}_1)^\ast (\ddot{s}_2,\ew )[(\ddot{s}_1,\ddot{t}_1)+(\ddot{s}_2,\ddot{z}\ddot{t}_1)]^\ast \ew$&&\textcolor{darkgray}{$(u,\ew)(s_1,t_1)^+(s_2,t_2)[(s_1,t_1)+(s_2,t_2)]^*\ew$}\\
& $+(u'\ddot{z}\ddot{t}_1, \ddot{t}_1R)(\ddot{s}_2\ddot{s}_1,\ew )[(\ddot{s}_1,\ddot{t}_1)+(\ddot{s}_2,\ddot{z}\ddot{t}_1)]^\ast \ew$
&&\textcolor{darkgray}{$(u,\ew) (s_2,t_2)(s_1,t_1)[(s_1,t_1)+(s_2,t_2)^*\ew$}\\
&$+(u'\ddot{z}\ddot{t}_1, \ddot{z}\ddot{t}_1R)(\ddot{s}_2\ddot{s}_2,\ew )[(\ddot{s}_1,\ddot{t}_1)+(\ddot{s}_2,\ddot{z}\ddot{t}_1)]^\ast \ew$
&&\textcolor{darkgray}{$(u,\ew) (s_2,t_2)^{\geq2}[(s_1,t_1)+(s_2,t_2)]^*\ew$}\\
\end{tabular}\\[2mm]
If $us_1t_1$ or $us_2t_2$ is a witness then the claim of the lemma follows.
Thus, assume that neither $us_1t_1$ nor $us_2t_2$ is a witness w.r.t. $u$, i.e.
\[\lcs(u',\hat{t}_1\dot{s}_1, \ddot{z}\ddot{t}_1\ddot{s}_2)\neq \ew\]

W.l.o.g.\ $t_2\neq \ew$ and thus also $\hat{t}_2\neq \ew$ as otherwise
$t_1=\ew$ as $0=\abs{t_2}\ge \abs{t_1}$ s.t.\ $R=\dot{R}=\ddot{R}$ and $\dot{s}_1=\ddot{s}_1$.
Then $L=u(s_1+s_2)^\ast=u'(\ddot{s}_1+\ddot{s}_2)^\ast R$
and thus $\lcs(u',\ddot{s}_1,\ddot{s}_2)= \ew$. Therefore $us_1t_1 = us_1$ or $us_2t_2 = us_2$ would be a witness.
\begin{claim}
If $t_1 = \ew$ then the lemma follows.
\end{claim}
\begin{proof}
We have
$\hat{t}_1=\ddot{t}_1= \ew$ and $\dot{R}=R$ and $\dot{s}_1\neq \ew \neq \ddot{s}_1$ and $\hat{t}_2=\ddot{z}\ddot{t}_1=\ddot{z}\neq\ew$ s.t.
\begin{align*}
(u,\ew)(s_1,t_1)^+ \ew
&= u' \dot{s}_1^+ R\\
(u,\ew)(s_1,t_1)^\ast (s_2,t_2)[(s_1,t_1)+(s_2,t_2)]^\ast \ew
&=(u'\ddot{z}, R)(\ddot{s}_1,\ew)^\ast (\ddot{s}_2,\ew )[(\ddot{s}_1,\ew)+(\ddot{s}_2,\ddot{z})]^\ast \ew\\
\end{align*}
Therefore $\ew \overset{!}= \lcs(u', \dot{s}_1, \ddot{z}\ddot{s}_2, \ddot{s_1}, \ddot{z})$.
If $\lcs(u', \dot{s}_1) = \ew$ then $us_1t_1$ would be a witness.
If $\lcs(u', \ddot{z}\ddot{s}_2) = \ew$ then $us_2t_2$ would be a witness.
We therefore need to consider the cases $\lcs(u', \ddot{s}_1) = \ew$ and $\lcs(u', \ddot{z}) = \ew$.
In fact, $\lcs(\ddot{s}_1, \ddot{z}) \neq \ew$ holds as
$\ddot{z} \ddot{R}=\ddot{z}\ddot{t}_1 \ddot{R} = \hat{t}_2 \ddot{R}=R\sle us_1^kt_1^k = u'Rs_1^k =u'\ddot{z}\ddot{R}s_1^k = u'\ddot{z}\ddot{s}_1^k\ddot{R}$ for all $k$ and therefore $\ddot{z} \slt \ddot{s}_1^\iomega$.
Thus $\lcs(u', \ddot{s}_1) = \ew$ if and only if $\lcs(u', \ddot{z}) = \ew$ and therefore if
$us_1s_2t_2t_1$ is a witness if and only if $us_2s_2t_2t_2$ is a witness.
\end{proof}
Thus, assume that $t_1\neq \ew$.
Then $\hat{t}_1 \neq \ew \neq \ddot{t}_1$ and therefore $\ew\overset{!}=\lcs(u',\hat{t}_1,\ddot{t}_1)$.
We obtain
\[
\begin{array}{lllll}
\lcs(L)&
=&\lcsa{u\\ us_1s_1t_1t_1\\ us_2s_2t_2t_2}
&=&\lcsa{u'R\\ u'\hat{t}_1\dot{s}_1\dot{s}_1\hat{t}_1R\\ u'\ddot{z}\ddot{t}_1\ddot{s}_2\ddot{s}_2\ddot{z}\ddot{t}_1R}\\
&=&\lcsa{u\\ us_1s_2t_2t_1\\ us_2s_1t_1t_2}
&=&\lcsa{u'R\\ u'\ddot{z}\ddot{t}_1\ddot{s}_1\ddot{s}_2\hat{t}_1R\\ u'\ddot{z}\ddot{t}_1\ddot{s}_2\ddot{s}_1\ddot{t}_1R}
\end{array}
\]
\end{proof}

\begin{lemma}\label{lem:uststw}
Let $L=(u,\ew)[ (s_1,t_1) + (s_2,t_2)]^\ast w$ be $\wf$.

Then $\lcs(L) =\lcsa{uw\\ us_1wt_1\\ us_1s_1wt_1t_1\\ us_2wt_2\\ us_2s_2wt_2t_2}=\lcsa{uw\\ us_1wt_1\\ us_1s_2wt_2t_1\\ us_2wt_2\\ us_2s_1wt_1t_2}$

If $s_i=\ew$, then $us_is_iwt_it_i$ is not required.
\end{lemma}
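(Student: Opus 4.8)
The statement asserts that the infinite family $L$ has the same longest common suffix as an explicit five-element subfamily (in two symmetric forms). One inclusion is free: all five listed words lie in $L$ (they are exactly the members with $l\le 2$ and the indicated index choices), so by antitonicity of the $\lcs$ under inclusion we immediately obtain $\lcs(L)\sle \lcs(uw,\,us_1wt_1,\,us_1s_1wt_1t_1,\,us_2wt_2,\,us_2s_2wt_2t_2)$, and likewise for the second form. The entire content is therefore the reverse inequality: the common suffix of the five witnesses is already a suffix of \emph{every} word $us_{i_1}\cdots s_{i_l}\,w\,t_{i_l}\cdots t_{i_1}\in L$.

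The plan is to re-run the conjugate-extraction argument of Lemma~\ref{lem:ustst}, but now carrying the fixed center word $w$ along; these two lemmas differ only in replacing the innermost $\ew$ by $w$. Set $R:=\lcs(L)$. Since $uw\in L$ is a shortest word, $R$ is a suffix of $uw$, i.e.\ $R\sle uw$. For each context $(s_i,t_i)$ the language contains $us_i^k\,w\,t_i^k$ for every $k\ge 0$, whence $R\sle us_i^k w t_i^k$ for all $k$; when $R\slt t_i^\iomega$ this forces a conjugate $\hat t_i$ with $Rt_i=\hat t_iR$, and, after $R$ has been pulled rightward through the $t$-blocks, corresponding conjugates $\dot s_i,\dot t_i$ of the left factors. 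The same chain of substitutions as in Lemma~\ref{lem:ustst} --- ``extracting'' $R$ by replacing $u,s_i,t_i$ by the appropriate conjugates --- then rewrites $L$ so that $R$ becomes a common right factor, with $w$ riding along inside the left component. The analysis splits according to how far $R$ reaches to the left (entirely inside $w$, into the $s$-block, or into $u$), and in each subcase the witness conclusion of Lemma~\ref{lem:ustst} applies verbatim: the $\lcs$ is realized by $uw$ together with a single and a doubled application of the contexts, where the \emph{outer} context is fixed and the \emph{inner} one may be chosen freely. This simultaneously yields both symmetric five-word lists, and the degenerate case $s_i=\ew$ collapses the quadratic witness $us_is_iwt_it_i$ exactly as flagged in the statement.

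The main obstacle is the conjugate bookkeeping in the presence of the fixed innermost word $w$: one must verify that the relations $Rt_i=\hat t_iR$, $\dot R s_i=\dot s_i\dot R$, and the coupling identities of the type $\hat t_2=\dot z\,\dot t_1=\hat t_1\,\dot z$ used in Lemma~\ref{lem:ustst} remain \emph{simultaneously} satisfiable when $w$ sits between the $s$-block and the $t$-block, and one must organise the case distinction on how far $R$ extends past $w$. Once the correct conjugate identities are established, the derivation is the same sequence of length-preserving rewrites as in Lemma~\ref{lem:ustst}; the only genuinely new point is ensuring that pulling $R$ to the right does not collide with $w$, i.e.\ that the chosen conjugates are compatible on both the $w$-side and the $t_i$-side at once.
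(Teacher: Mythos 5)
Your proposal is correct and follows essentially the same route as the paper: the paper also handles the trivial inclusion via monotonicity and then splits on how far $R=\lcs(L)$ reaches relative to $w$ --- in the case $R\slt w$ the three single-context witnesses already suffice after conjugating $t_i$ past $R$, and in the case $w\sle R$ one conjugates $w$ through the $t_i$ (using $wt_i=\hat{t}_iw$, which exists since $w\sle us_iwt_i$) and applies Lemma~\ref{lem:ustst} verbatim to the resulting language over the contexts $(s_i,\hat{t}_i)$. The only cosmetic difference is that the paper needs just these two cases rather than a finer split, since once $w$ is absorbed the remaining bookkeeping is delegated entirely to Lemma~\ref{lem:ustst}.
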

\begin{prfblk}
$R := \lcs(L)$.

W.l.o.g. $s_it_i\neq \ew$.

Case $R\slt w=w'R$:
\begin{block}
$w'\neq \ew$

$\exists\hat{t}_i\colon R t_i = \hat{t}_i R$
\begin{block}
$R\sle us_iwt_i =us_i w' R t_i$

$R\sle Rt_i$
\end{block}
$L=(u,1)[(s_1,t_1)+(s_2,t_2)]^\ast w' R =(u,R)[(s_1,\hat{t}_1)+(s_2,\hat{t}_2)]^\ast w'$
$\ew\overset{!}= \lcs(w',w'\hat{t}_1,w'\hat{t}_2)$

$\lcs(w',w'\hat{t}_1)=\ew\vee \lcs(w',w'\hat{t}_2)=\ew$

$\lcs(w',\hat{t}_1)=\ew\vee \lcs(w',\hat{t}_2)=\ew$

$\lcs(L)=\lcsa{uw\\ us_1wt_1\\ us_2wt_2}$
\end{block}

Case $w\sle R=R'w$:
\begin{block}
$\exists \hat{t}_i\colon w t_i = \hat{t}_i w$
\begin{block}
$R=R' w\sle u s_i w t_i$

$w \sle w t_i$
\end{block}
$L= (u'R',w)[(s_1, \hat{t}_1)+ (s_2,\hat{t}_2)]^\ast 1$

Apply Lemma~\ref{lem:ustst} to $L'=(u'R',1)[(s_1,\hat{t}_1)+(s_2,\hat{t}_2)]^\ast 1$
\end{block}
\end{prfblk}
\newpage
\begin{lemma}\label{lem:rst}
Let $L = (r,\ew)(s,t)^+ \ew$ be $\wf$ with $r\sle\lcs(L)\wedge t\neq \ew\wedge rt = \hat{t}r\wedge r\not\sle rs$
Then: 
\[\exists x,y,\tilde{s}\forall k\ge 0\colon (r,\ew)(s,t)^{k+1} = (x,r)(\tilde{s},\hat{t})^k y\wedge \lcs(L)=\lcs(xy,\hat{t})r =\lcs(st,ts)t \slt \hat{t}r\]
\end{lemma}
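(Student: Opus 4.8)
The language is $L=\{\,rs^kt^k\mid k\ge 1\,\}$, and the goal is to \emph{rotate} the generating context $(r,\ew)(s,t)^{+}$ so that the common suffix $r$ is exposed on the right, i.e.\ to rewrite it as $(x,r)(\tilde s,\hat t)^{*}y$, and then read off $\lcs(L)$. First I would collect the elementary facts. Since $r\not\sle rs$ we must have $s\neq\ew$ (otherwise $rs=r$ and $r\sle rs$). From $rt=\hat t r$ one gets $rt^{k}=\hat t^{k}r$ for all $k$ by a one-line induction. Finally, from $r\sle\lcs(L)\sle rs^{k}t^{k}$ for every $k$ and $t\neq\ew$, taking $k$ large enough that $\lvert t^{k}\rvert\ge\lvert r\rvert$ shows that the length-$\lvert r\rvert$ suffix of $rs^{k}t^{k}$ lies inside $t^{k}$ and equals $r$, so $r\sle t^{\iomega}$; by Lemma~\ref{lem:lcs-calculus} this is equivalent to $t^{\iomega}=\hat t^{\iomega}r$, so $\hat t$ is exactly the left-conjugate of $t$ determined by $r$.

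For the factorization I would take $\tilde s$ to be the conjugate of $s$ obtained by pulling the accumulated copy of $r$ through the $s$-block from the right, and let $x,y$ be the boundary words produced where $r$ meets the primitive period of $t$ (the alignment being well defined because $r\sle t^{\iomega}$). Concretely one verifies, for all $k\ge 0$,
\[
rs^{k+1}t \;=\; x\,\tilde s^{k}\,(yr),
\]
and then appending $t^{k}$ and using $rt^{k}=\hat t^{k}r$ turns this into $rs^{k+1}t^{k+1}=x\tilde s^{k}y\hat t^{k}r=(x,r)(\tilde s,\hat t)^{k}y$. I expect this to be the main obstacle: the existence of a $k$-independent suffix $yr$ together with a prefix whose only varying part is $\tilde s^{k}$ is a genuine combinatorics-on-words statement about how $r$ (a suffix of $t^{\iomega}$) interacts with the block $s^{k+1}$, and here $r\not\sle rs$ is essential, since it guarantees that $s$ does not respect the $t$-period, so that $st\neq ts$ and the rotation yields a \emph{proper} conjugate instead of collapsing (the commuting case $st=ts$ is exactly the one forcing $r\sle s^{\iomega}$, hence $r\sle rs$, against the hypothesis). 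I would prove the identity by locating the end of $r$ inside the period of $t$ and splitting on whether $\lvert yr\rvert\le\lvert t\rvert$ or not (in the latter case $yr=y't$ with $y'\sle s^{\iomega}$), checking that the two defining equations at $k=0,1$ pin down $x,\tilde s,y$ and propagate to all $k$.

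Given the factorization the $\lcs$ is immediate. Writing $M:=\{x\tilde s^{k}y\hat t^{k}\mid k\ge 0\}$ we have $L=Mr$, so $\lcs(L)=\lcs(M)\,r$ by property~(iv) of Section~\ref{s:prelim}. In $M$ the word at $k=0$ is $xy$, which carries no trailing $\hat t$ and therefore cuts off the periodic extension; as in the single-context specialisation of Lemma~\ref{lem:ustst} this gives $\lcs(M)=\lcs(xy,\hat t)$. Using $\hat t r=rt$, the $k=0$ instance $xyr=rst$ of the factorization, and suffix-cancellation, I then compute
\[
\lcs(L)=\lcs(xy,\hat t)\,r=\lcs(rst,rt)=\lcs(rs,r)\,t=\lcs(r,s^{\iomega})\,t=\lcs(s^{\iomega},t^{\iomega})\,t=\lcs(st,ts)\,t,
\]
where the middle steps use the calculus identities $\lcs(r,rs)=\lcs(r,s^{\iomega})$ and $\lcs(s^{\iomega},t^{\iomega})=\lcs(st,ts)$ (valid as $st\neq ts$) of Lemma~\ref{lem:lcs-calculus}. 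The step $\lcs(r,s^{\iomega})=\lcs(s^{\iomega},t^{\iomega})$ requires $\lvert r\rvert\ge\lvert\lcs(s^{\iomega},t^{\iomega})\rvert$, which again follows from $r\not\sle rs$: otherwise $r$ would be a suffix of $t^{\iomega}$ shorter than $\lcs(s^{\iomega},t^{\iomega})$, hence a suffix of $\lcs(s^{\iomega},t^{\iomega})\sle s^{\iomega}$, and $r\sle s^{\iomega}$ is equivalent to $r\sle rs$ by Lemma~\ref{lem:lcs-calculus}. Finally $\lcs(L)\slt\hat t r$, because $\hat t r=rt$ is not a suffix of $rst$ (that too would force $r\sle rs$), so $\lcs(rst,rt)\slt rt$.
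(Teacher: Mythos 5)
Your overall architecture matches the paper's: rotate the context so that the trailing $r$ is exposed, then read off $\lcs(L)$ with the $\lcs$-calculus. Your concluding computation
$\lcs(L)=\lcs(xy,\hat t)r=\lcs(rst,rt)=\lcs(rs,r)t=\lcs(r,s^\iomega)t=\lcs(s^\iomega,t^\iomega)t=\lcs(st,ts)t\slt \hat tr$
is correct and in fact more uniform than the paper's, which recomputes the $\lcs$ separately in each case; your observations that $st=ts$ would force $r\sle s^\iomega$ and hence $r\sle rs$, and that $rt\sle rst$ would likewise force $r\sle rs$, are exactly the right uses of the hypothesis.

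The genuine gap is the central factorization $rs^{k+1}t^{k+1}=x\tilde s^{k}y\hat t^{k}r$, which you correctly identify as the main obstacle but then only sketch — and the sketch, as stated, cannot be carried out. You describe $\tilde s$ as ``the conjugate of $s$ obtained by pulling the accumulated copy of $r$ through the $s$-block from the right''; but the hypothesis $r\not\sle rs$ says precisely that no conjugate $\hat s$ with $rs=\hat sr$ exists, so $r$ cannot be pulled through $s$. What actually happens (and what the paper proves by the case split $r\slt t$ versus $t\sle r$, essentially your $\lvert yr\rvert$ dichotomy) is different in the two cases. If $r\slt t=t'r$, no conjugation of $s$ occurs at all: one takes $x:=rs$, $\tilde s:=s$, $y:=t'$, and the trailing $r$ is simply the one already sitting at the end of $t=t'r$, with $\hat t=rt'$. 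If $t\sle r=r't$, one first shows $r't=\hat tr'$ and then that a conjugate $\check s$ with $r's=\check sr'$ exists — conjugation by the \emph{proper prefix} $r'$ of $r$, not by $r$ — which itself requires an argument from $r\sle\lcs(L)\sle rst$; then $x:=\hat t\check s$, $\tilde s:=\check s$, $y:=\ew$. Without exhibiting these words and verifying the identity (the step you defer as ``propagate to all $k$''), the existence claim $\exists x,y,\tilde s$ — which is the actual content of the lemma and is what Lemma~\ref{lem:pospos} later consumes — remains unproved. A smaller ordering issue: you invoke $\lcs(M)=\lcs(xy,\hat t)$ on the grounds that $xy$ ``carries no trailing $\hat t$'' before you have established $\lcs(xy,\hat t)\slt\hat t$; the implication is fine once $\hat t\not\sle xy$ is known, but that fact is exactly what your final sentence derives from $r\not\sle rs$, so it should be proved first.
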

\begin{prfblk}
$r\neq \ew$ as $r\not\sle rs$

If $r\not\sle rs$, then $st\neq ts$ as otherwise
\begin{block}
$\exists p, m,n\colon s=p^m\wedge t=p^n$

$\exists q\colon rp = qr\wedge \hat{t}=q^n$ as
\begin{block}
$rp^n=rt=\hat{t}r$ s.t. $\lcs(r,rp)=\lcs(r,rp^{k+1})=\lcs(r,rp^n)=r$ i.e. $r\sle rp$
\end{block}
Thus also $r\sle rs=rp^m = q^mr$
\end{block}

Case $r\slt t=t'r$:
\begin{block}
$t'\neq \ew$

$\hat{t} = rt'$ as $rt'r = rt = \hat{t}r$

$r s^{k+1} t^{k+1} = r s^{k+1} (t'r)^{k+1} = r s s^k t' \hat{t}^k r$

$L= (rs, r) (s, \hat{t})^* t' = rst'r + (rss,\hat{t}r) (s,\hat{t})^* t'$ with $rst\inv{r} \rdeq rst'$

$x:=rs$, $y:=t'$, $\tilde{s}:=s$

With $r\not\sle rs$:
\begin{block}
$\lcsa{rst\\ rs^{k+2}t^{k+2}} = \lcsa{rst'r\\ rs^{k+2} t^k t'rt'r} = \lcsa{rs\\ r}t'r \slt rt'r = \hat{t}r$

$\lcsa{rs\\ r}t'r = \lcsa{rst'\\ \hat{t}}r$

$\lcs(L) = \lcsa{rst\\ rsstt} = \lcsa{rst'\\ rsst'rt'}r = \lcsa{rst'\\ \hat{t}}r = \lcsa{xy\\ \hat{t}}r \slt \hat{t} r$

$\lcs(st,ts)t=\lcs(L)$ as 
\begin{block}
$\lcsa{rst\\ rsstt} = \lcsa{rst\\rsst'rt}=\rlcsa{rs\\ r}t=
\lcsa{t'rs\\ st'r}t = \lcsa{ts\\ st}t$
\end{block}
\end{block}
\end{block}

Case $t\sle r=r't$:
\begin{block}
$r't = \hat{t} r'$
\begin{block}
$r' t t = r t = \hat{t} r = \hat{t} r' t$
\end{block}
$\exists \check{s}\colon r' s = \check{s} r'$
\begin{block}
$r' t = r \sle r s t = r' t s t = \hat{t} r' s t$

$r' \sle r' s$
\end{block}
$r s^{k+1} t^{k+1} = r' t s^{k+1} t^{k+1} = \hat{t} \check{s} \check{s}^k \hat{t}^k r' t = \hat{t} \check{s} \check{s}^k \hat{t}^k r$

$L=(\hat{t}\check{s}, r) (\check{s},\hat{t})^* \ew$

$x:=\hat{t}\check{s}$, $y:=\ew$, $\tilde{s}:=\check{s}$

With $r\not\sle rs$:
\begin{block}
$\hat{t}\not\sle \hat{t}\check{s}$ as
\begin{block}
$\hat{t}r'=r' t = r\not\sle rs= r'ts = \hat{t}\check{s} r'$
\end{block}

$\lcsa{\hat{t}\check{s}\\ \check{s}\hat{t}} = \lcsa{\hat{t}\check{s}\\ \hat{t}} \slt \hat{t}$

$\lcsa{rst\\ rs^{k+2}t^{k+2}} = \lcsa{\hat{t}\check{s}r\\ \hat{t}\check{s}^{k+2} \hat{t}^{k} \hat{t} r} = \lcsa{\hat{t}\check{s}\\ \hat{t}}r \slt \hat{t}r$

$\lcs(L) = \lcsa{rst\\ rsstt} = \lcsa{\hat{t}\check{s}r\\ \hat{t}\check{s}\check{s}\hat{t}r} = \lcsa{\hat{t}\check{s}\\ \hat{t}} r
=\lcsa{xy\\ \hat{t}}r \slt \hat{t}r = rt$

$\lcs(st,ts)t=\lcs(L)$ as 
\begin{block}
$\lcsa{rst\\ rsstt}=\lcsa{r'tst\\ r'tsstt} = 
\lcsa{ts\\ st}t$
\end{block}

\end{block}
\end{block}
\end{prfblk}

\begin{lemma}\label{lem:pospos}
Let $L=(u,\ew)[(s_1,\inv{r_1}t_1 r_1)+(s_2,\inv{r_2}t_2 r_2)]^\ast w$ be $\wf$ with $r_2 \sle r_1$.

Then $\rlcs(L) = \rlcsa{uw\\ us_1w\inv{r_1}t_1 r_1\\ us_2w\inv{r_2}t_2r_2\\ us_1s_1w\inv{r}_1 t_1 t_1 r_1\\ us_2s_2w\inv{r_2} t_2 t_2r_2}$

If $r_1\sle\rlcs(L)$, then further
$\rlcs(L)=\rlcsa{uw\\ us_1w\inv{r_1}t_1 r_1\\ us_2w\inv{r_2}t_2r_2\\ us_1s_2w\inv{r_2} t_2 r_2 \inv{r_1} t_1 r_1\\ us_2s_1w\inv{r_1}t_1 r_1 \inv{r_2} t_2 r_2}$
\end{lemma}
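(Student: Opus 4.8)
The plan is to reduce the statement to Lemma~\ref{lem:uststw}, i.e.\ to the closing-bracket-free case, by exploiting well-formedness to cancel every factor $\inv{r_i}$. A word of $L$ has the shape
\[
u\, s_{i_1}\cdots s_{i_l}\, w\, \inv{r_{i_l}} t_{i_l} r_{i_l}\cdots \inv{r_{i_1}} t_{i_1} r_{i_1},
\]
and since $L$ is $\wf$ each such word reduces into $\al^\ast$. The internal junctions $r_{i_{j+1}}\inv{r_{i_j}}$ therefore have to cancel consistently; because $r_2\sle r_1$, writing $r_1 = p\,r_2$ we get $r_1\inv{r_2}\rdeq p$ and $r_2\inv{r_1}\rdeq \inv{p}$, so the only surplus closing brackets arise where a context-$2$ block sits immediately inside a context-$1$ block (leaving $\inv{p}$), while the reverse nesting leaves surplus opening brackets $p$. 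Well-formedness forces this $\inv{p}$ to be matched by opening brackets further left, which (as in Example~\ref{ex:pospos-app}) yields conjugates $\hat{s}_1,\hat{s}_2$ and $\hat{t}_1,\hat{t}_2$ with respect to the common suffix $R:=\rlcs(L)$.

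First I would treat the single-context towers $(u,\ew)(s_i,\inv{r_i}t_ir_i)^+w$ exactly as in Lemma~\ref{lem:rst}: using $r_i t_i = \hat{t}_i r_i$ and the well-formedness–enforced conjugate of $s_i$ through $r_i$, each such tower is $\rdeq$-equivalent to an $\al$-only tower whose $\lcs$ is already pinned down by the words with the context applied once and twice. This establishes the diagonal entries $us_iw\inv{r_i}t_ir_i$ and $us_is_iw\inv{r_i}t_it_ir_i$ of the claimed formula and, together with $uw$, settles the first assertion once we also show that the mixed words contribute nothing new.

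For the mixed words I would split $L$ by its innermost and outermost context, mirroring the computation in Example~\ref{ex:pospos-app}: moving every $r_i$ through the adjacent $s_i$ (via the conjugates $\hat{s}_i$) and pushing the final copy of $R$ to the right cancels all interior $\inv{r_i}$, turning $\rd(L)$ into a finite union of well-formed simple linear languages over $\al$ of the form $(\hat{u},\hat{w})[(\hat{s}_1,\hat{t}_1)+(\hat{s}_2,\hat{t}_2)]^\ast$. Applying Lemma~\ref{lem:uststw} to each piece shows that its $\lcs$ is determined by at most two context applications, and translating the resulting witnesses back through the conjugation gives exactly the five diagonal words of the first assertion; choosing the cross formulation of Lemma~\ref{lem:uststw} gives the five words of the second assertion. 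The extra hypothesis $r_1\sle \rlcs(L)=R$ is precisely what is needed for the cross terms: only when the longer prefix $r_1$ is absorbed into $R$ do $us_1s_2w\inv{r_2}t_2r_2\inv{r_1}t_1r_1$ and $us_2s_1w\inv{r_1}t_1r_1\inv{r_2}t_2r_2$ reduce to the $\al$-only cross words $\hat{u}\hat{s}_i\hat{s}_j\hat{w}\hat{t}_j\hat{t}_i$, so that the alternative formulation of Lemma~\ref{lem:uststw} transfers.

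I expect the rewriting step to be the main obstacle. With two distinct prefixes $r_1\neq r_2$ the cancellation at a junction depends on which context is inner; although $r_2\sle r_1$ fixes this locally, showing that the \emph{global} surplus of closing brackets along an arbitrary context word is always matched — so that a clean split by innermost and outermost context suffices and no deeper case analysis on the interior sequence $i_2\ldots i_{l-1}$ is required — hinges on carefully using that every finite word of $L$ is $\wf$. Verifying that the conjugates obtained for the two single-context towers are mutually compatible, i.e.\ can be taken with respect to the \emph{same} base determined by $R$, is the delicate point on which the final appeal to Lemma~\ref{lem:uststw} rests.
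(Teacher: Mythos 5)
Your overall strategy---cancel the interior closing brackets by conjugation and then invoke Lemma~\ref{lem:uststw} (resp.\ Lemma~\ref{lem:ustst}), handling pure towers via Lemma~\ref{lem:rst}---is indeed the skeleton of the paper's argument, and your use of Lemma~\ref{lem:rst} for the diagonal words and your localization of where the hypothesis $r_1\sle\rlcs(L)$ enters are both on target. However, the step you yourself flag as ``the delicate point'' is not a technicality to be verified but the actual content of the proof, and as stated your reduction fails there. Well-formedness does \emph{not} in general give you a conjugate of $s_2$ through $r_1$: it only forces $r_1'\slt t_2^\iomega$ (where $r_1=r_1'r_2$) and hence a conjugate of $t_2$ through $r_1'$, while $r_1\not\sle r_1s_2$ is perfectly possible (in Example~\ref{ex:pospos-app} one has $r=aba$, $s_2=a$, $rs_2=abaa$). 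In that case $\rd(L)$ is \emph{not} a finite union of well-formed simple linear languages of the form $(\hat u,\hat w)[(\hat s_1,\hat t_1)+(\hat s_2,\hat t_2)]^\ast$ obtained by splitting on the innermost and outermost context, so Lemma~\ref{lem:uststw} cannot be applied directly. The paper handles this by first normalizing away $r_2$, $w$ and the common suffix (reducing to $u=u'R'r_1$, $w=\ew$, $r_2=\ew$), then splitting on $r_1\sle r_1s_2$ versus $r_1\not\sle r_1s_2$; in the latter case it invokes Lemma~\ref{lem:rst} to re-bracket the tail $(r_1,\ew)(s_2,t_2)^{+}\ew$ as $(x_2,r_1)(\tilde s_2,\hat t_2)^{\ast}y_2$ with $\lcs$ strictly below $\hat t_2 r_1$, and then analyzes a five-way partition of $L$ according to which contexts occur and which occurs last. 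None of this is recoverable from ``split by innermost and outermost context'' alone.

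A second, related gap: you assert that the cross formulation follows from ``choosing the cross formulation of Lemma~\ref{lem:uststw}'', with $r_1\sle\rlcs(L)$ only needed to make the cross words reduce nicely. The paper's proof shows something sharper: in the case $R\slt r_1$ (which can occur even after all normalizations) one can still establish the \emph{diagonal} version by peeling off one mandatory $(s_1,\inv{r_1}t_1r_1)$ and applying Lemma~\ref{lem:ustst} to what remains, but the cross words genuinely may fail to be witnesses there---this is exactly why the lemma has two separate conclusions. Your proposal treats the two conclusions as interchangeable outputs of the same reduction, which would prove the (false in general) cross version without the hypothesis $r_1\sle\rlcs(L)$. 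To repair the proposal you would need to (i) make the case $r_1\not\sle r_1s_2$ explicit and supply the Lemma~\ref{lem:rst}-based partition, and (ii) separate the argument for the case $R\slt r_1$ from the case $r_1\sle R$.
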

\begin{prfblk}
W.l.o.g.\ $r_2\sle r_1=r_1'r_2$:
\begin{block}
$us_is_jw\inv{r_j}t_j\underline{r_j\inv{r_i}}t_i r_i$ has to be $\wf$

W.l.o.g.\ $\abs{r_1}\ge \abs{r_2}$

Hence, $r_1\inv{r_2}$ has to be $\wf$
\end{block}
\[L=(u,\ew)[(s_1,\inv{r_1'r_2}t_1 r_1'r_2)+(s_2,\inv{r_2}t_2 r_2)]^\ast w\]

$\exists \hat{t}_2\colon r_1' t_2 = \hat{t}_2 r_1'$:
\begin{block}
$t_2^k r_2 \inv{r_1} \rdeq t_2^k \inv{r_1'}$ has to be $\wf$ for all $k>0$

$r_1'\slt t_2^\iomega$
\end{block}

$R:=\rlcs(L)$

If $R\slt r_2$:
\begin{block}
$R\slt r_2\sle (u,\ew)[(s_1,\inv{r_1'r_2}t_1 r_1'r_2)+(s_2,\inv{r_2}t_2 r_2)]^+ w$

Thus $R=\rlcs(L)=\rlcs(uw, us_iw\inv{r_i}t_i r_i)$
\end{block}

{\em Assume $r_2\sle R=\rlcs(L)$ from here on.}

If $r_1\sle w$:
\begin{block}
$\exists w'\colon w=w'r_1$

Moving $r_1$ from $w$ to the end using $r_1' t_2 = \hat{t}_2 r_1$ yields

\[L\rdeq (u,r_1)[(s_1,t_1)+(s_2,\hat{t}_2)]^\ast w'\]

Apply lemma~\ref{lem:uststw} on $(u,r_1)[(s_1,\ew)+(s_2,\hat{t}_2)]^\ast w'$.
\end{block}

{\em Assume $w\slt r_1=\dot{r}_1 w\wedge \dot{r}_1\neq\ew$ from here on.}

If $w\slt r_2=\dot{r}_2w$:
\begin{block}
$\dot{r}_1=r_1'\dot{r}_2$
\begin{block}
$\dot{r}_1w = r_1=r_1'r_2=r_1'\dot{r}_2 w$
\end{block}
$u=u'\dot{r}_2$
\begin{block}
$\dot{r}_2w =r_2\sle R \sle uw$
\end{block}
Thus
\[L=(u,\ew)[(s_1,\inv{r_1'\dot{r}_2 w}t_1 r_1'\dot{r}_2 w)+(s_2,\inv{\dot{r}_2 w}t_2 \dot{r}_2 w)]^\ast w\rdeq (u'\dot{r}_2,w)[(s_1,\inv{r_1'\dot{r}_2}t_1 r_1'\dot{r}_2)+(s_2,\inv{\dot{r}_2}t_2 \dot{r}_2 )]^\ast \ew\]
$us_i\rdeq us_i\inv{\dot{r}_2}\dot{r}_2$ as
\begin{block}
\begin{itemize}
\item
$us_1w\inv{r_1}t_1r_1\rdeq \underline{us_1\inv{\dot{r}_2}}\inv{r_1'} t_1 r_1$ is $\wf$
\item
$us_2w\inv{r_2}t_2r_2\rdeq \underline{us_2\inv{\dot{r}_2}} t_2 r_2$ is $\wf$
\end{itemize}
\end{block}
$\exists \hat{s}_i\colon \dot{r}_2 s_i = \hat{s}_i \dot{r}_2$:  (this is more generic as required as we could directly use $u=u'\dot{r}_2$)
\begin{block}
\begin{itemize}
\item
$us_1s_1w\inv{r_1}t_1t_1r_1\rdeq us_1\inv{\dot{r}_2}\,\underline{\dot{r}_2 s_1\inv{\dot{r}_2}}\,\inv{r_1'} t_1t_1 r_1$ is $\wf$
\item
$us_2s_2w\inv{r_2}t_2t_2r_2\rdeq us_2\inv{\dot{r}_2}\,\underline{\dot{r}_2 s_2\inv{\dot{r}_2}}\, t_2t_2 r_2$ is $\wf$
\end{itemize}
\end{block}
Thus
\[L\rdeq (u',r_2)[(\hat{s}_1,\inv{r_1'}t_1 r_1')+(\hat{s}_2,t_2)]^\ast \ew\]
This case is thus a special case of $r_2\sle w$ with $r_2=w=\ew$.
\end{block}

{\em Assume $r_2\sle w=w'r_2$ from here on.}

As
\begin{itemize}
\item
$r_2 (\inv{r_1}t_1 r_1)\rdeq (\inv{r_1'}t_1r_1') r_2$ and
\item
$r_2 (\inv{r_2} t_2 r_1)\rdeq t_2 r_2$
\end{itemize}
we can move $r_2$ from $w=w'r_2$ to the end of $L$ s.t.

\[L=(u,\ew)[(s_1,\inv{r_1'r_2}t_1 r_1'r_2)+(s_2,\inv{r_2}t_2 r_2)]^\ast w'r_2\rdeq(u,r_2)[(s_1,\inv{r_1'}t_1 r_1')+(s_2,t_2)]^\ast w'\]

{\em Assume thus w.l.o.g. $r_2=\ew$ from here on s.t. $w=w'$ and $r_1=r_1'$ and $r_1t_2=\hat{t}_2r_1$ and}
\[L=(u,\ew)[(s_1,\inv{r_1}t_1 r_1)+(s_2,t_2)]^\ast w=(u,\ew)[(s_1,\inv{\dot{r}_1w}t_1 \dot{r}_1w)+(s_2,t_2)]^\ast w\]

If $R\slt w$:
\begin{block}
Then:
\begin{itemize}
\item
$w=w'R \wedge w'\neq \ew$
\item
$r_1=\dot{r}_1 w = \dot{r}_1w'R$
\end{itemize}
\[L\rdeq (u,\ew)[(s_1,\inv{\dot{r}_1w'R}t_1 \dot{r}_1w'R)+(s_2,t_2)]^\ast w'R\]
$\exists \tilde{t}_2\colon Rt_2 =\tilde{t}_2 R$ as
\begin{block}
$R\sle us_2wt_2=us_2w'Rt_2$ i.e. $R\sle Rt_2$
\end{block}
\[L \rdeq(u,R)[(s_1,\inv{\dot{r}_1w'}t_1 \dot{r}_1w')+(s_2,\tilde{t}_2)]^\ast w'\]
As $w'\neq\ew$ we have $R=\rlcs(w'R,\tilde{t}_2 R)$

$R=\rlcs(L)=\rlcs(uw,us_2wt_2)$
\end{block}
{\em Assume $w\sle R$ from here on.}

W.l.o.g.\ $w=\ew$ as
\begin{block}
$\exists \tilde{t}_2\colon w t_2 = \tilde{t}_2w$
\begin{block}
$w\sle R \sle us_2wt_2$ i.e. $w\sle wt_2$
\end{block}
Thus
\[L=(u,w)[(s_1,\inv{\dot{r}_1}t_1 \dot{r}_1)+(s_2,\tilde{t}_2)]^\ast \ew\]
\end{block}

Hence:
\begin{itemize}
\item
$u=u' R$
\item
$r_1=\dot{r}_1 w= \dot{r}_1$
\end{itemize}

\[L=(u'R,\ew)[(s_1,\inv{r_1}t_1 r_1)+(s_2,t_2)]^\ast \ew\]

$us_1\rdeq us_1\inv{r_1}r_1$ as
\begin{block}
$us_1\inv{r_1}t_1r_1$ is $\wf$
\end{block}

$\exists \hat{s}_1\colon r_1 s_1=\hat{s}_1r_1$ as
\begin{block}
$us_1s_1\inv{r_1}t_1t_1r_1\rdeq us_1\inv{r_1} \underline{r_1 s_1 \inv{r_1}t_1t_1r_1}$ is $\wf$
\end{block}

$r_1 s_2^l t_2^l \inv{r_1}$ is $\wf$ for all $l$ as
\begin{block}
$us_1 s_2^l t_2^l \inv{r_1} t_1 r_1 \rdeq u s_1 \inv{r_1}\, \underline{r_1 s_2^l t_2^l \inv{r_1}}\, t_1 r_1$ is $\wf$ for all $l$
\end{block}

If $R\slt r_1$:
\begin{block}
$r_1=r_1'R\wedge r_1'\neq \ew$
If we have at least one copy of $(s_1,\inv{r_1}t_1r_1)$, then the word ends on $r_1=r_1'R$:
\[\begin{array}{cc}
  & (u,\ew)[(s_1,\inv{r_1}t_1 r_1)+(s_2,t_2)]^\ast (s_1,\inv{r_1}t_1r_1)[(s_1,\inv{r_1}t_1 r_1)+(s_2,t_2)]^\ast \ew\\
\rdeq & (u,r_1'R)[(s_1,\inv{r_1}t_1 r_1)+(s_2,t_2)]^\ast (s_1,\inv{r_1}t_1)[(s_1,t_1)+(s_2,\hat{t}_2)]^\ast \ew
\end{array}\]
Thus by lemma~\ref{lem:ustst}:

$R=\lcs(us_1\inv{r_1}t_1r_1, u, us_2t_2, us_2s_2t_2t_2)$

{\em In this case we might not be able to replace $us_2s_2wt_2t_2$ by $us_2s_1w\inv{r_1}t_1r_1t_2=us_2s_1w\inv{r_1}t_1\hat{t}_2r_1$ or $us_1s_2wt_2\inv{r_1}t_1r_1$.}
\end{block}

{\em Assume $r_1\sle R=R'r_1$ from here on.}

Thus:
\begin{itemize}
\item
$u=u'R'r_1$
\item
$\exists \tilde{t}_1\colon R't_1=\tilde{t}_1R'$
 \begin{block}
 $R=R'r_1\sle us_1^k\inv{r_1}t_1^k r_1$ for all $k$
 \end{block}
\end{itemize}
\[L=(u'R'r_1,\ew)[(s_1,\inv{r_1}t_1 r_1)+(s_2,t_2)]^\ast \ew\]
If $r_1\sle r_1s_2$:
\begin{block}
$\exists \hat{s}_2\colon r_1s_2=\hat{s}_2r_1$

\[L=(u'R',r_1)[(\hat{s}_1,t_1)+(\hat{s}_2,\hat{t}_2)]^\ast \ew\]

Apply lemma~\ref{lem:ustst}.
\end{block}

{\em Assume $r_1\not\sle r_1s_2$ from here on.}

Thus (see also lemma~\ref{lem:rst})
\begin{itemize}
\item
$s_2t_2\neq t_2s_2$ as $r_1t_2=\hat{t}_2r_1$
\item
$\exists x_2,y_2,\tilde{s}_2\forall l\colon r_1 s_2^{l+1} t_2^{l+1} = x_2 \tilde{s}_2^l y_2 \hat{t}_2^l r_1$
\item
$\lcs((r_1,\ew)(s_2,t_2)^+\ew)=\lcs(x_2y_2,\hat{t}_2)r_1=\lcs(s_2t_2,t_2s_2)t_2\slt \hat{t}_2r_1$ with $\abs{x_2y_2}=\abs{s_2t_2}\ge\abs{t_2}=\abs{\hat{t}_2}$
\item
$R=R'r_1\sle \lcs((u'R'r_1,\ew)(s_2,t_2)^+\ew)=\lcs((x_2,r_1)(\tilde{s}_2,\hat{t}_2)^\ast \ew) \slt x_2y_2r_1, \hat{t}_2 r_1$
\item
$\exists \hat{t}_2'\colon\hat{t}_2=\hat{t}_2'R'\wedge \hat{t}_2'\neq \ew$
\item
$\exists z_2\colon x_2y_2=z_2R'\wedge z_2\neq \ew$
\end{itemize}

Partition $L$ as follows:
\[
\begin{array}{ccl}
1. &  & u\\
 &= & u'R\\[1mm]
2. & & (u,\ew) (s_1,\inv{r_1}t_1 r_1)^+ \ew\\
& \rdeq & (u'R',r_1)(\hat{s}_1,t_1)^+\ew\\[1mm]
3. & & (u,\ew) (s_1,\inv{r_1}t_1 r_1)^\ast (s_2, t_2) \ew\\
& \rdeq & (u'R',r_1) (\hat{s}_1,t_1)^\ast x_2y_2\\
& = & (u'R',r_1) (\hat{s}_1,t_1)^\ast z_2 R'\\
& = & (u'R',R) (\hat{s}_1,\tilde{t}_1)^\ast z_2\\[1mm]
4. & & (u,\ew) (s_1,\inv{r_1}t_1 r_1)^\ast (s_2, t_2) (s_2, t_2)^+ \ew\\
& \rdeq & (u'R',\ew) (\hat{s}_1,\inv{r_1}t_1 r_1)^\ast (r_1 s_2, t_2) (s_2, t_2)^+ \ew\\
& = & (u'R',\ew) (\hat{s}_1,\inv{r_1}t_1 r_1)^\ast (x_2\tilde{s}_2, \hat{t}_2r_1) (\tilde{s}_2, \hat{t}_2)^\ast \ew\\
& = & (u'R',r_1) (\hat{s}_1,t_1)^\ast (x_2\tilde{s}_2, \hat{t}_2'R') (\tilde{s}_2, \hat{t}_2)^\ast \ew\\
& = & (u'R',R) (\hat{s}_1,\tilde{t}_1)^\ast (x_2\tilde{s}_2, \hat{t}_2') (\tilde{s}_2, \hat{t}_2)^\ast \ew\\[1mm]
5. & & (u,\ew) (s_1,\inv{r_1}t_1 r_1)^\ast (s_2, t_2)^+ (s_1,\inv{r_1}t_1r_1) [(s_1,\inv{r_1}t_1 r_1) + (s_2, t_2)]^\ast \ew\\
& = & (u,r_1) (s_1,t_1)^\ast (s_2,\hat{t}_2) (s_2, \hat{t}_2)^\ast (s_1,\inv{r_1}t_1) [(s_1,\inv{r_1}t_1 r_1) + (s_2, t_2)]^\ast \ew\\
& = & (u,R) (s_1,\tilde{t}_1)^\ast(s_2,\hat{t}_2') (s_2, \hat{t}_2)^\ast (s_1,\inv{r_1}t_1) [(s_1,\inv{r_1}t_1 r_1) + (s_2, t_2)]^\ast \ew
\end{array}
\]

If $R'\slt t_1$:
\begin{block}
$t_1=t_1'R'\wedge t_1'\neq \ew\wedge \tilde{t}_1=R't_1'$

The partition of $L$ thus becomes
\begin{enumerate}
\item
$u'R$
\item
$(u'R',r_1)(\hat{s}_1,t_1)^+\ew=\ldots t_1'R$
\item
$(u'R',R) (\hat{s}_1,\tilde{t}_1)^\ast z_2=\ldots z_2 \tilde{t}_1^\ast R=\ldots z_2 R + \ldots t_1' R$
\item
$(u'R',R) (\hat{s}_1,\tilde{t}_1)^\ast (x_2\tilde{s}_2, \hat{t}_2') (\tilde{s}_2, \hat{t}_2)^\ast \ew=\ldots \hat{t}_2'\tilde{t}_1^\ast R = \ldots \hat{t}_2' R + \ldots t_1' R$
\item
$(u,R) (s_1,\tilde{t}_1)^\ast (s_2, \hat{t}_2)^\ast (s_2s_1,\inv{r_1}t_1\hat{t}_2') [(s_1,\inv{r_1}t_1 r_1) + (s_2, t_2)]^\ast \ew=\ldots \hat{t}_2'\tilde{t}_1^\ast R=\ldots\hat{t}_2' R + \ldots t_1' R$
\end{enumerate}
Hence $R=\lcs(u'R,t_1'R, z_2 R, \hat{t}_2'R)$ (as $\tilde{t}_1=R't_1'$)

$R=\rlcs(u,us_1\inv{r_1}t_1r_1, us_2t_2, us_2s_2t_2t_2)=\rlcs(u,us_1\inv{r_1}t_1r_1, us_2t_2, us_2s_1\inv{r_1}t_1r_1t_2)$
\end{block}
{\em Assume $t_1\sle R'=R''t_1$ from here on s.t. $R''t_1 =\tilde{t}_1R''$.}

$\exists \tilde{s}_1\colon R'' \hat{s}_1=\tilde{s}_1R''$
\begin{block}
$R=R''t_1r_1\sle us_1\inv{r_1}t_1r_1=u'R''t_1r_1s_1\inv{r_1}t_1r_1=u'\tilde{t}_1\underline{R''\hat{s}_1}t_1r_1$

i.e.\ $R''\sle R''\hat{s}_1$
\end{block}
Hence:
\[
\begin{array}{ccl}
1. &  & u\\
 &= & u'R\\[1mm]
2. & & (u,\ew) (s_1,\inv{r_1}t_1 r_1)^+ \ew\\
& \rdeq & (u'R''t_1,r_1)(\hat{s}_1,t_1)^+\ew\\
& \rdeq & (u'\tilde{t}_1\tilde{s}_1,R)(\tilde{s}_1,\tilde{t}_1)^\ast\ew\\[1mm]
3. & & (u,\ew) (s_1,\inv{r_1}t_1 r_1)^\ast (s_2, t_2) \ew\\
& \rdeq & (u'R',R) (\hat{s}_1,\tilde{t}_1)^\ast z_2\\[1mm]
4. & & (u,\ew) (s_1,\inv{r_1}t_1 r_1)^\ast (s_2, t_2) (s_2, t_2)^+ \ew\\
& \rdeq & (u'R',R) (\hat{s}_1,\tilde{t}_1)^\ast (x_2\tilde{s}_2, \hat{t}_2') (\tilde{s}_2, \hat{t}_2)^\ast \ew\\[1mm]
5. & & (u,\ew) (s_1,\inv{r_1}t_1 r_1)^\ast (s_2, t_2)^+ (s_1,\inv{r_1}t_1r_1) [(s_1,\inv{r_1}t_1 r_1) + (s_2, t_2)]^\ast \ew\\
& \rdeq& (u,R) (s_1,\tilde{t}_1)^\ast (s_2, \hat{t}_2') (s_2, \hat{t}_2)^\ast (s_1,\inv{r_1}t_1) [(s_1,\inv{r_1}t_1 r_1) + (s_2, t_2)]^\ast \ew
\end{array}
\]
If $\tilde{t}_1\neq \ew$:
\[\begin{array}{lcl}
R 
& = & \lcs(u'R,\tilde{s}_1R,\tilde{t}_1R,z_2R,\hat{t}_2'R)\\
& = & \rlcs(u,us_1\inv{r_1}t_1r_1, us_1s_1\inv{r_1}t_1t_1r_1,us_2t_2,us_2s_2t_2t_2)\\
& = & \rlcs(u,us_1\inv{r_1}t_1r_1, us_1s_2t_2\inv{r_1}t_1r_1,us_2t_2,us_2s_1\inv{r_1}t_1r_1t_2)
\end{array}\]
If $\tilde{t}_1=\ew$:
\[\begin{array}{lcl}
R 
& = & \lcs(u'R,\tilde{s}_1R,z_2R,\hat{t}_2'R)\\
& = & \rlcs(u,us_1\inv{r_1}t_1r_1, us_2t_2,us_2s_2t_2t_2)\\
& = & \rlcs(u,us_1\inv{r_1}t_1r_1, us_2t_2,us_2s_1\inv{r_1}t_1r_1t_2)
\end{array}\]
\end{prfblk}

\begin{lemma}\label{lem:negneg}
Let $L=(u,\ew)[(s_1,\inv{r_1}\inv{t_1} r_1)+(s_2,\inv{r_2}\inv{t_2} r_2)]^\ast w$ be $\wf$ with $t_1\neq \ew$.
Then 
\[\exists p, k_1, k_2\forall i_1,\ldots, i_l, j \in\{1,2\}^+\colon u s_{i_1}\ldots s_{i_l} s_j w\inv{r_j} \inv{t_j} r_j \rdeq us_{i_1}\ldots s_{i_l} p^{k_j} w\]
s.t. $L\rdeq u(p^{k_1}+p^{k_2})^\ast w$ and $\rlcs(L) = \rlcsa{uw\\ us_1w\inv{r_1}\inv{t_1} r_1\\ us_2w\inv{r_2}\inv{t_2} r_2}$.
\end{lemma}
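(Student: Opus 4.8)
The plan is to mirror the single-context analysis of Example~\ref{ex:neg-app} and then propagate its periodicity structure to both contexts. First I would exploit that the context $(s_1,\inv{r_1}\inv{t_1}r_1)$ with $t_1\neq\ew$ is strictly negative: reducing a pure context-$1$ pumping gives $u s_1^k w\, (\inv{r_1}\inv{t_1}r_1)^k \rdeq u s_1^k w \inv{r_1}\inv{t_1}^k r_1$, and since this has to be $\wf$ for \emph{every} $k$, the closing block $\inv{t_1}^k$ must be cancelled entirely by opening brackets to its left for all $k$. Exactly as in Example~\ref{ex:neg-app}, this forces the primitive root $q$ of $t_1$ and a conjugate $p$ with $q r_1 = r_1 p$, together with $t_1=q^{n_1}$ and $s_1$ a power of the $r_1$-conjugate of $q$; nonnegativity of $L$ forces the net contribution of one application to be $p^{k_1}$ with $k_1\ge 0$, i.e.\ $u s_1 w\, \inv{r_1}\inv{t_1}r_1 \rdeq u p^{k_1} w$.

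The main step is to show that the \emph{same} primitive root governs the second context. Here I would use that for unboundedly large $k$ the closing block $\inv{t_1}^k$ of a mixed pumping such as $u s_1^k s_2^j w \inv{r_1}\inv{t_1}^k r_1$ scans leftward through $w$ and into the $s$-block $s_2^j s_1^k$; well-formedness for all $k,j$ then forces $s_2$ (and, when $t_2\neq\ew$, also $t_2$ and $r_2$) to be periodic within the same conjugacy class of $q$. Consequently there is a single $p$ and exponents $k_1,k_2\ge0$ such that each application of context $i$ contributes exactly $p^{k_i}$ once its closing brackets cancel; in particular this covers the degenerate case $t_2=\ew$, where $s_2$ is nonetheless forced to be a power of $p$ because it lies on the path that the $\inv{t_1}^k$-block must traverse. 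This combinatorial propagation, carried out over the involutive monoid while tracking the conjugations by $r_1,r_2$ and $w$, is the technical heart and the main obstacle: it is precisely the involution-aware strengthening of the classical Fine--Wilf machinery alluded to before Lemma~\ref{thm:converge-wf-main-text}.

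With these relations in hand I would prove the rewriting identity $u s_{i_1}\cdots s_{i_l} s_j w\, \inv{r_j}\inv{t_j}r_j \rdeq u s_{i_1}\cdots s_{i_l} p^{k_j} w$ by induction, peeling contexts from the innermost outward: each $\tau$-block cancels completely against the opening material to its left, and the conjugacy relations let the surviving factor $p^{k_j}$ slide to the position immediately before $w$, independently of the outer contexts $s_{i_1}\cdots s_{i_l}$. Summing over a whole derivation then yields $u s_{i_1}\cdots s_{i_l} w\, \inv{r_{i_l}}\inv{t_{i_l}}r_{i_l}\cdots\inv{r_{i_1}}\inv{t_{i_1}}r_{i_1} \rdeq u\, p^{k_{i_1}+\cdots+k_{i_l}} w$, so that $L \rdeq u(p^{k_1}+p^{k_2})^\ast w$ with the derivations of the two grammars in bijection.

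It remains to compute $\rlcs(L)=\lcs(\rd(L))$ for the regular language $u(p^{k_1}+p^{k_2})^\ast w=\{up^mw\mid m\in S\}$, where $S$ is the set of nonnegative integer combinations of $k_1,k_2$ (so $0,k_1,k_2\in S$). Since $\lcs(Lx)=\lcs(L)x$ for $x\in\als$, I factor out the common suffix $w$ and reduce to $\lcs(\{up^m\mid m\in S\})$. The key is the calculus identity $\lcs(x,xy)=\lcs(x,y^\iomega)$ from Lemma~\ref{lem:lcs-calculus}: it gives $\lcs(u,up^m)=\lcs(u,(p^m)^\iomega)=\lcs(u,p^\iomega)$ for every $m\ge1$, a value independent of $m$, whence $\lcs(u,up^m)\sle up^m$ for all $m$. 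By Lemma~\ref{lem:lcs-witness-main-text} with base point $u=up^0$ the infimum is therefore attained already among $m\in\{0,k_1,k_2\}$, so $\lcs(\{up^m\})=\lcs(u,up^{k_1},up^{k_2})$. Reattaching $w$ and recalling that $up^{k_i}w\rdeq us_iw\inv{r_i}\inv{t_i}r_i$ and that $uw$ is the empty product gives $\rlcs(L)=\lcs(uw,\,us_1w\inv{r_1}\inv{t_1}r_1,\,us_2w\inv{r_2}\inv{t_2}r_2)$, as claimed.
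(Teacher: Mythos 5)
Your overall strategy coincides with the paper's own proof: (i) pure pumpings of a single negative context force the conjugacy/periodicity structure $s_i=p_i^{m_i}$, $t_i=\hat p_i^{n_i}$ with $m_i\ge n_i$ and $p_i\,w\inv{r_i}\,\inv{\hat p_i}\rdeq w\inv{r_i}$; (ii) mixed pumpings force a common primitive root $p_1=p_2=:p$; (iii) $L$ collapses to the regular language $u(p^{k_1}+p^{k_2})^\ast w$, from which the $\rlcs$ claim follows by the $\lcs$-calculus. Steps (i) and (iii) are essentially correct (your witness argument for $\lcs(\{up^mw\})$ via $\lcs(u,up^m)=\lcs(u,p^\iomega)$ is fine and in fact more explicit than the paper, which just states the conclusion).

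The gap is in step (ii). The mechanism you describe --- the block $\inv{t_1}^k$ of a word with context $1$ outermost scanning leftward into the $s_2$-block --- only yields a constraint on $s_2$ when context $2$ has strictly positive net contribution, i.e.\ $m_2>n_2$ (this does cover your degenerate case $t_2=\ew$, $s_2\neq\ew$). If context $2$ is reduction-neutral, $m_2=n_2>0$, then in $u\,s_1^k s_2^j\,w\,\tau_2^j\tau_1^k$ the block $\inv{t_2}^j$ cancels the entire $s_2^j$-block before $\inv{t_1}^k$ ever reaches it: the word reduces to $u\,s_1^k\,w\,\inv{r_1}\inv{t_1}^k r_1\rdeq u\,p_1^{k(m_1-n_1)}w$ and imposes no condition on $s_2$ at all. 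Concretely, take $u=ab$, $w=r_1=r_2=\ew$, $s_1=abab$, $t_1=ab$, $s_2=t_2=c$: every word with all context-$1$ applications outside all context-$2$ applications reduces to $ab(ab)^k$ and is $\wf$, yet $p_1=ab$ and $p_2=c$ are incompatible; the violation of well-formedness is only visible on the reversed word $u\,s_2^k s_1^l\,w\,\tau_1^l\tau_2^k\rdeq ab\,c^k(ab)^l\,\inv{c}^k$. This is exactly how the paper closes this case: when $m_2=n_2>0$ one has $t_2\neq\ew$, and (outside the trivial doubly-balanced case $m_1=n_1\wedge m_2=n_2$, where $L\rdeq\{uw\}$ and $p:=\ew$ suffices) $m_1>n_1$, so in the reversed word the scanner $\inv{t_2}^k$ eats through the net-positive block $p_1^{l(m_1-n_1)}$ and forces $p_1^\iomega w=t_2^\iomega r_2=p_2^\iomega w$, whence $p_1=p_2$. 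You need to add this second orientation of the mixed pumping (and dispose of the doubly-balanced case separately) before the induction in your third paragraph, which silently relies on a single common $p$, can go through.
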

\begin{prfblk}
$R:=\rlcs(L)$

$\exists p_i, \hat{p}_i, m_i,n_i \colon s_i = p_i^{m_i}\wedge m_i \ge n_i \wedge s_i w\inv{r_i} \inv{t_i} \rdeq p_i^{m_i-n_i} w\inv{r_i}$
\begin{block}

If $t_i = \ew$, then:
\begin{block}
Let $p_i, m_i$ s.t. $s_i=p_i^{m_i}$ and $p_i$ primitive; set $n_i:=0$.
\end{block}

Assume thus $t_i\neq \ew$. (Note that $t_1 \neq \ew$ already by assumption of the lemma.)

Then $s_i^\iomega w_i = t_i^\iomega r_i$ as 
\begin{block}
$us_i^k w \inv{r_i} \inv{t_i}^k$ is $\wf$ for all $k\ge 1$
\end{block}
Hence by Lemma~\ref{lem:lcs-calculus}
\[\exists p_i, \hat{p}_i, m_i,n_i \colon s_i = p_i^{m_i}\wedge t_i=\hat{p}_i^{n_i}\wedge p_iw\inv{r_i}\inv{\hat{p}_i}\rdeq w\inv{r_i}\]
$m_i\ge n_i$ as $L$ is $\wf$ and thus nonnegative.
\end{block}

For all $i_1,\ldots, i_l, j\in\{1,2\}^+$ we thus have
\[
u s_{i_1}\ldots s_{i_l} s_j w\inv{r_j}\inv{t_j}r_j \rdeq u s_{i_1}\ldots s_{i_l} (p_j^{m_j-n_j}) w\inv{r_j} r_j \rdeq u s_{i_1}\ldots s_{i_l} (p_j^{m_j-n_j}) w
\]

If $m_1=n_1\wedge m_2=n_2$, then:
\begin{block}
For all $i_1,\ldots, i_l\in\{1,2\}^+$:
\[
u s_{i_1}\ldots s_{i_l} w \inv{r_{i_l}} \inv{t_{i_l}} r_{i_l} \ldots  \inv{r_{i_1}} \inv{t_{i_1}} r_{i_1}\rdeq uw
\]
s.t.\ $L\rdeq uw$ with $\rlcs(L)=uw$.

Hence, $p$ can be chosen as $\ew$.
\end{block}
So assume that $m_1>n_1\vee m_2>n_2$.

We then have $p_1=p_2=:p$
\begin{block}
If $s_2= \ew$, then:
\begin{block}
$t_2= \ew$ as $L$ is $\wf$ and thus nonnegative.

Thus simply choose $p_2$ as $p_1$
\end{block}

So assume $s_2\neq \ew$ and so $p_2\neq \ew\wedge m_2>0$.

Then for all $k,l\ge 1$ the following has to be $\wwf$:
\[\begin{array}{lcl}
s_1^k s_2^l w\inv{r_2} \inv{t_2}^l r_2 \inv{r_1} \inv{t_1}^k
& = & p_1^{km_1} p_2^{l(m_2-n_2)} w\inv{r_2} r_2 \inv{r_1} \inv{t_1}^{k}\\
&\rdeq & p_1^{km_1} \underline{p_2^{l(m_2-n_2)} w \inv{r_1} \inv{t_1}^{k}}\\
\end{array}\]
If $m_2>n_2$, then:
\begin{block}
$ p_2^\iomega w = t_1^\iomega r_1 = p_1^\iomega w$
\end{block}
So assume $m_2 = n_2>0$ and thus $t_2\neq \ew$ and $m_1>n_1$.

Then for all $k,l\ge 1$ the following has to be $\wwf$:
\[\begin{array}{lcl}
s_2^k s_1^l w\inv{r_1} \inv{t_1}^l r_1 \inv{r_2} \inv{t_2}^k
& = & p_2^{km_2} p_1^{l(m_1-n_1)} w\inv{r_1} r_1 \inv{r_2} \inv{t_2}^{k}\\
&\rdeq & p_2^{km_2} \underline{p_1^{l(m_1-n_1)} w \inv{r_2} \inv{t_2}^{k}}\\
\end{array}\]
Hence:
$p_1^\iomega w = t_2^\iomega r_2 = p_2^\iomega w$
\end{block}
As $s_1,s_2\in p^+$:
\[L\rdeq u(p^{m_1-n_j}+p^{m_2-n_2})^\ast w\]

$\rlcs(L) = \rlcs(uw, us_1w\inv{r_1}\inv{t_1} r_1, us_2w\inv{r_2}\inv{t_2}r_2)$
\end{prfblk}

\begin{lemma}\label{lem:negpos}
Let $L=(u,\ew)[(s_1,\inv{r_1}\inv{t_1} r_1)+(s_2,\inv{r_2}t_2 r_2)]^\ast w$ be $\wf$ with $t_1\neq \ew$.

Then $\exists p, k_1, k_2\forall i_1,\ldots, i_l \in\{1,2\}^+$ s.t.\
\begin{enumerate}
\item
$u s_{i_1}\ldots s_{i_l} s_1 w\inv{r_1} \inv{t_1} r_1 \rdeq us_{i_1}\ldots s_{i_l} p^{k_1} w$
\item
$u s_{i_1}\ldots s_{i_l} s_2 w\inv{r_2} t_2 r_2 \rdeq us_{i_1}\ldots s_{i_l} p^{k_2} w$
\item
$L\rdeq u(p^{k_1}+p^{k_2})^\ast w$ 
\item
$\rlcs(L) = \rlcsa{uw\\ us_1w\inv{r_1}\inv{t_1} r_1\\ us_2w\inv{r_2}\inv{t_2} r_2}$
\end{enumerate}
\end{lemma}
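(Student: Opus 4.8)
The plan is to mirror the proof of Lemma~\ref{lem:negneg}: because $t_1\neq\ew$, the negative context pins down a single primitive root $p$, and I will argue that \emph{every} remaining datum is forced to be compatible with $p$, so that after reduction $L$ collapses to a regular language $u(p^{k_1}+p^{k_2})^\ast w$ over the opening brackets alone, exactly as illustrated in Example~\ref{ex:neg-app}. The $\rlcs$ is then read off from this regular language.

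First I would treat the negative context $(s_1,\inv{r_1}\inv{t_1}r_1)$ in isolation. Since $us_1^k w\inv{r_1}\inv{t_1}^k r_1\in L$ is $\wf$ for every $k$, comparing these words of unbounded length forces, by the ultimately left-periodic characterization of Lemma~\ref{lem:lcs-calculus}, that $s_1^\iomega w=t_1^\iomega r_1$. Hence there are a primitive root $p$ and a conjugate $\hat p$ with $s_1=p^{m_1}$, $t_1=\hat p^{n_1}$ and $p\,w\inv{r_1}\inv{\hat p}\rdeq w\inv{r_1}$; nonnegativity of $L$ gives $m_1\ge n_1$. Setting $k_1:=m_1-n_1\ge 0$ and iterating the relation $p\,w'\inv{r_1}\inv{\hat p}\rdeq w'\inv{r_1}$ (which, as $p$ is a block of opening brackets, stays valid for every $w'=p^jw$) yields
\[
u s_{i_1}\ldots s_{i_l}s_1 w\inv{r_1}\inv{t_1}r_1\rdeq u s_{i_1}\ldots s_{i_l}p^{k_1}w,
\]
which is exactly claim~(1).

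The crux is claim~(2): the positive context must also be a power of $p$. Here I cannot reuse the well-formedness argument of Lemma~\ref{lem:negneg} verbatim, since $(s_2,\inv{r_2}t_2r_2)$ \emph{adds} opening brackets whereas $(s_1,\inv{r_1}\inv{t_1}r_1)$ \emph{removes} them. Instead I would place $s_1$ outermost and use both contexts unboundedly often; substituting $s_1=p^{m_1}$, $t_1=\hat p^{n_1}$ and reducing each iterated context gives
\[
(u,\ew)(s_1,\inv{r_1}\inv{t_1}r_1)^k(s_2,\inv{r_2}t_2r_2)^l w
\rdeq u\,p^{k m_1}s_2^l\,w\inv{r_2}t_2^l r_2\inv{r_1}\inv{\hat p}^{kn_1}r_1 .
\]
Well-formedness of this family for \emph{all} $k,l$ forces the growing suffix of closing brackets $\inv{\hat p}^{kn_1}$ to cancel against the opening brackets to its left; tracking these cancellations and again invoking the ultimately periodic equalities of Lemma~\ref{lem:lcs-calculus} forces $s_2$, $t_2$, $r_2$ into the cyclic structure generated by $p$, i.e.\ $s_2=p^{m_2}$ together with a conjugacy relation that rewrites $w\inv{r_2}t_2 r_2$ as $p^{k_2}w$. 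This gives $u s_{i_1}\ldots s_{i_l}s_2 w\inv{r_2}t_2 r_2\rdeq u s_{i_1}\ldots s_{i_l}p^{k_2}w$ (claim~(2)), and since both contexts now collapse to powers of $p$ we obtain $L\rdeq u(p^{k_1}+p^{k_2})^\ast w$ (claim~(3)): a regular language all of whose words are of the form $up^aw$ with $a$ in the numerical semigroup generated by $k_1,k_2$.

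Finally, claim~(4) is a routine $\lcs$-computation on this regular language. Peeling off the common tail gives $\lcs(up^aw,up^bw)=\lcs(up^a,up^b)\,w$, and for $a<b$ the rule $\lcs(u,up^{b-a})=\lcs(u,p^\iomega)$ of Lemma~\ref{lem:lcs-calculus} yields $\lcs(up^aw,up^bw)=\lcs(u,p^\iomega)\,p^a w$. The infimum over the whole language is therefore attained already at the smallest exponent $a=0$, so by the witness property (Lemma~\ref{lem:lcs-witness-main-text}) it suffices to keep the empty derivation together with a single application of each context; after reduction these are $uw$, $up^{k_1}w$ and $up^{k_2}w$, i.e.\ the reducts of the three words listed in the statement. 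I expect step~(2) to be the main obstacle: unlike the all-negative case, showing that the \emph{positive} context is pinned down by the root coming from the negative context requires simultaneous pumping in both $k$ and $l$ and a delicate bookkeeping of which opening brackets must cancel the unbounded closing factor $\inv{\hat p}^{kn_1}$.
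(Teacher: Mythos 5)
Your route is the same as the paper's: isolate the negative context to extract the primitive root $p$ and its conjugate $\hat p$ with $s_1=p^{m_1}$, $t_1=\hat p^{n_1}$, $m_1\ge n_1$ (nonnegativity), then pump both contexts simultaneously so that the unbounded closing factor $\inv{\hat p}^{kn_1}$ forces $s_2,t_2,r_2$ into the cyclic structure generated by $p$, collapse $L$ after reduction to $u(p^{k_1}+p^{k_2})^\ast w$, and read off the $\rlcs$ from three words. Claims (1), (3) and (4) are handled exactly as in the paper.

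The only substantive issue is the step you yourself flag: claim (2) is stated as an intention, not carried out. For completeness, the paper discharges it in two sub-steps, both of which live inside your displayed word $u\,p^{km_1}s_2^l\,w\inv{r_2}t_2^l r_2\inv{r_1}\inv{\hat p}^{kn_1}r_1$. First, from $t_2^{l}r_2\inv{r_1}\inv{t_1}^{k}$ having to be \wwf\ for all $k,l$ one gets $t_2^\iomega r_2=\hat p^\iomega r_1$, i.e.\ $t_2=\check p^{\,n_2}$ for a conjugate $\check p$ of $\hat p$ with $\check p\, r_2\inv{r_1}\inv{\hat p}\rdeq r_2\inv{r_1}$. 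Second --- the part that needs more than ``independent $k$ and $l$'' --- to pin down $s_2$ one couples the exponents as $k=cl$ with $c>n_2$: after cancelling $t_2^l=\check p^{\,n_2l}$ against $\inv{\hat p}^{\,cln_1}$ a \emph{net} closing factor $\inv{\hat p}^{\,(cn_1-n_2)l}$ of unbounded length survives, and together with $r_2\slt s_2^{l}w$ for $l$ large (from $s_2^{l}w\inv{r_2}$ being \wf) this forces $s_2^\iomega w=\hat p^\iomega r_1=p^\iomega w$, hence $s_2\in p^\ast$. Only then does one compute $k_2=m_2+n_2$ (note the sign: the positive context \emph{adds} $t_2$'s contribution), via a final case split on whether $r_2\sle w$ or $w\slt r_2$, which your sketch also omits. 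None of this contradicts your plan, but the lemma is not proved until these cancellations are tracked. (Minor remark: item~4 of the statement writes $\inv{t_2}$ where the language has $t_2$; you correctly treat the third witness as the reduct of $us_2w\inv{r_2}t_2r_2$.)
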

\begin{prfblk}
W.l.o.g.\ $t_2\neq \ew$ otherwise see Lemma~\ref{lem:negneg}.

$R:=\lcs(L)$

$\exists p, \hat{p}, m_1, n_1\colon s_1 = p^{m_1} \wedge t_1 = \hat{p}^{n_1}\wedge m_1\ge n_1 \wedge s_1w\inv{r_1}\inv{t_1} \rdeq p^{m_1-n_1}w\inv{r_1}$
\begin{block}
$s_1^{k} w\inv{r_1} \inv{t_1}^{k}$ is $\wwf$ for all $k\ge 1$

$s_1^\iomega w\seq t_1^\iomega r_1$ as $t_1\neq \ew$
\end{block}

W.l.o.g. $p,\hat{p}$ primitive

$p^\iomega w = \hat{p}^\iomega r_1$

$\exists \check{p}, n_2\colon t_2 = \check{p}^{n_2} \wedge \check{p} r_2 \inv{r_1} \inv{\hat{p}}\rdeq r_2 \inv{r_1}$ as
\begin{block}
$t_2^{l}r_2 \inv{r_1} \inv{t}_1^{k} = t_2^l r_2 \inv{r_1} \hat{p}^{n_1\cdot k}$ has to be $\wwf$ for all $k,l\ge 1$

$t_2^\iomega r_2 = \hat{p}^\iomega r_1$
\end{block}

$\check{p}$ primitive as $\hat{p}$ primitive
 
$\exists m_2\colon s_2 = p^{m_2}$ as
\begin{block}
If $s_2= \ew$, set $m_2:=0$.

So assume $s_2\neq \ew$.

Then $r_2 \slt s_2^\iomega w$ as
\begin{block}
$s_2^{l} w\inv{r_2}$ is $\wf$ for $l$ sufficiently large
\end{block}

So $s_2^\iomega w \seq \hat{p}^\iomega r_1$ as
\begin{block}
$s_2^{l}w\inv{r_2} t_2^{l} r_2\inv{r_1}\inv{t_1}^{k}$ is $\wwf$ for all $k,l\ge 1$

Let $c> n_2$ and $l$ so large that $r_2 \slt s_2^l w$:

\[s_2^{l}w\inv{r_2} t_2^{l} r_2\inv{r_1}\inv{t_1}^{cl}=s_2^{l}w\inv{r_2} \check{p}^{n_2l} r_2\inv{r_1}\inv{\hat{p}}^{cln_1} \rdeq s_2^{l}w\inv{r_2} r_2\inv{r_1}\inv{\hat{p}}^{(cn_1-n_2)l}\rdeq s_2^{l}w\inv{r_1}\inv{\hat{p}}^{(cn_1-n_2)l}\]
\end{block}
\end{block}

$pw\inv{r_2} \inv{\check{p}} \rdeq w\inv{r_2}$
\begin{block}
$p^\iomega w = \hat{p}^\iomega r_1 = \check{p}^\iomega r_2$
\end{block}

For all $i_1\ldots i_l\in \{1,2\}^\ast$ we have:
\begin{itemize}
\item
$u s_{i_1}\ldots s_{i_l} s_1 w \inv{r_1} \inv{t_1} r_1 \rdeq  u s_{i_1}\ldots s_{i_l} (p^{m_1-n_1}) w$
\item
$u s_{i_1}\ldots s_{i_l} s_2 w \inv{r_2} t_2 r_2 \rdeq  u s_{i_1}\ldots s_{i_l} (p^{m_2+n_2}) w$ as
\begin{block}

W.l.o.g.\ $s_2\neq \ew$.

If $w\slt r_2=r_2'w$, then:
\begin{block}
$u s_{i_1}\ldots s_{i_l} s_2 w \inv{r_2}\rdeq u s_{i_1}\ldots s_{i_l} s_2 \inv{r_2'}$ is $\wf$ for all $i_1\ldots i_l\in \{1,2\}^\ast$.

Further $\check{p} r_2\inv{w}\inv{p}\rdeq r_2\inv{w}\rdeq r_2'$ s.t.\
\[t_2 r_2\inv{w} \rdeq \check{p}^{n_2} r_2 \inv{w}\inv{p}^{n_2} p^{n_2} \rdeq r_2 \inv{w} p^{n_2}\]

Hence:
\[\begin{array}{lcl}
u s_{i_1}\ldots s_{i_l} s_2 w \inv{r_2} t_2 r_2
&\rdeq & u s_{i_1}\ldots s_{i_l} s_2 \inv{r_2'} r_2' w \inv{r_2} t_2 r_2\\
&\rdeq & u s_{i_1}\ldots s_{i_l} s_2 \inv{r_2'} t_2 r_2\inv{w}w \text{\hspace{4mm} as } w\inv{r_2} \rdeq \inv{r_2'}\\
&\rdeq & u s_{i_1}\ldots s_{i_l} s_2 \inv{r_2'} r_2' p^{n_2} w\\
&\rdeq & u s_{i_1}\ldots s_{i_l} p^{m_2+n_2} w
\end{array}\]
\end{block}

So assume $r_2\sle w=w'r_2$.

Then $p w' = w' \check{p}$ as:
\begin{block}
$p w\inv{r_2} \rdeq p w\inv{r_2} \inv{\check{p}} \check{p} \rdeq w\inv{r_2} \check{p} \rdeq w' \check{p}$
\end{block}
Thus:
\[\begin{array}{lcl}
u s_{i_1}\ldots s_{i_l} s_2 w \inv{r_2} t_2 r_2
&\rdeq & u s_{i_1}\ldots s_{i_l} s_2 w' t_2 r_2\\
&\rdeq & u s_{i_1}\ldots s_{i_l} p^{m_2} w' \check{p}^{n_2} t_2 r_2\\
&\rdeq & u s_{i_1}\ldots s_{i_l} p^{m_2+n_2} w' r_2\\
&\rdeq & u s_{i_1}\ldots s_{i_l} p^{m_2+n_2} w\\
\end{array}\]
\end{block}
\end{itemize}
Thus:
\[L\rdeq u (p^{m_1-n_1} + p^{m_2+n_2})^\ast w\]
and
$\rlcs(L) = \lcs(uw,upw) = \rlcs(uw, us_1w\inv{r_1}\inv{t_1} r_1, us_2 w\inv{r_2} t_2 r_2)$
\end{prfblk}

\subsection{Lemma~\ref{thm:converge-wf-main-text} in the main work}\label{app:conv-wf}

We first show that the computation of $\lcsext$ can be reduced to that of $\lcs$
which is essential to the proof of Lemma~\ref{thm:converge-wf-main-text},
give an example on that and then prove Lemma~\ref{thm:converge-wf-main-text}.

\begin{lemma}\label{lem:exttolcs}
Let $L\subseteq \als$ with $R=\lcs(L)$. If $\lcsext(L)\in\als$, then:
\[\forall xR\in L\setminus\{R\}\,\exists m\in\N\colon \lcs(x^mL)=\lcsext(L)\lcs(L) \quad\text{and}\quad \lcsext(L) \slt x^m\]
\end{lemma}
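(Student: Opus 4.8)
The plan is to reduce the statement to the closed formula for the longest common suffix of a language that is prepended by a single word, and then to choose $m$ large enough that the prefix $x^m$ swallows $E:=\lcsext(L)$.

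First I would record two elementary observations. Since $xR\in L\setminus\{R\}$ we have $xR\neq R$, hence $x\neq\ew$; and since $L$ contains both $R$-classes $xR$ and (possibly) $R$, it has at least two elements, so $E\neq\top$. By definition $E=\lcsext(L)=\lcs\bigl(z^\iomega \mid zR\in L\setminus\{R\}\bigr)$, and as $xR\in L\setminus\{R\}$ the word $x^\iomega$ occurs in this family; therefore $E$ is in particular a common suffix of $x^\iomega$, i.e.\ $E\sle x^\iomega$. By hypothesis $E\in\als$ is finite. Next I would apply the concatenation part of Lemma~\ref{lem:lcs-computation-main-text} (equivalently Lemma~\ref{lem:lcs-concat}) to the product $\{x^m\}\cdot L$: writing $\sig(\{x^m\})=(x^m,\top)$ and $\sig(L)=(R,E)$, and noting $x^m R\slt\top$ since $x^m,R\in\als$, every relevant branch of the formula collapses to
\[
\lcs(x^m L)=\lcs(x^m,E)\,R .
\]

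It then remains to pick $m\in\N$ with $\abs{x^m}=m\abs{x}>\abs{E}$, which is possible because $x\neq\ew$. Since $E\sle x^\iomega$ and $\abs{E}<\abs{x^m}$, the length-$\abs{E}$ suffix of $x^\iomega$ — which is exactly $E$ — coincides with the length-$\abs{E}$ suffix of $x^m$, using $x^m\sle x^\iomega$; as $\abs{E}<\abs{x^m}$ this gives $E\slt x^m$, the second claim. In particular $E$ is a strict suffix of $x^m$, so $x^m\not\sle E$ and $\lcs(x^m,E)=E$, whence $\lcs(x^m L)=E\,R=\lcsext(L)\lcs(L)$, the first claim.

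The only point that needs care is the reduction to $\lcs(x^m L)=\lcs(x^m,E)R$, i.e.\ checking which branch of the concatenation formula applies. The branch $(\top,\top)$ is excluded because $x^m R\slt\top$, and the branch $E=\top$ is excluded because $\abs{L}\ge 2$ forces $E\neq\top$; the remaining two branches (distinguished by whether $x^m\sle E$) both evaluate to $\lcs(x^m,E)R$, using $\lcs(x^m,E)=x^m$ in the case $x^m\sle E$. For our chosen (large) $m$ we are automatically in the branch with $x^m\not\sle E$, so this case analysis only serves to justify the uniform formula. This bookkeeping, together with the finite-versus-left-infinite suffix argument relating $E\sle x^\iomega$ to $E\slt x^m$, is the main — and essentially the only — obstacle, and it is routine given the $\lcs$-calculus of Lemma~\ref{lem:lcs-calculus}.
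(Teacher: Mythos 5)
Your proof is correct, and it takes a somewhat different route from the paper's. You obtain the identity $\lcs(x^mL)=\lcs(x^m,\lcsext(L))\,\lcs(L)$ by instantiating the concatenation part of Lemma~\ref{lem:lcs-computation-main-text} with the singleton $\{x^m\}$ (summary $(x^m,\top)$) as the left factor, and then finish by choosing $m$ with $\abs{x^m}>\abs{\lcsext(L)}$, so that $\lcsext(L)\sle x^\iomega$ together with $x^m\sle x^\iomega$ forces $\lcsext(L)\slt x^m$ and hence $\lcs(x^m,\lcsext(L))=\lcsext(L)$. The paper does not cite the concatenation lemma here; it re-derives exactly this instance by hand: it invokes the witness characterization of $\lcsext$ (Lemma~\ref{lem:lcsextwit-main-text}) to get $y$ with $\lcsext(L)=\lcs(x^\iomega,y^\iomega)=\lcs(x^\iomega,y^\iomega,z^\iomega)$ for all $zR\in L$, argues via the $\lcs$-calculus that each $\lcs(x^\iomega,z^\iomega)$ either dominates $x^m$ or equals $\lcs(x^m,x^mz)$, and concludes $\lcs(x^mL)=\lcs(x^m,x^my)R=\lcsext(L)R$. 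Both arguments rest on the same combinatorics and on the same choice of $m$; yours buys brevity by reusing the summary machinery as a black box (legitimately so --- Lemma~\ref{lem:lcs-computation-main-text} is proved independently of this statement, so there is no circularity), while the paper's version is self-contained at the level of witnesses. Your branch bookkeeping is also fine: for nonempty $L,L'$ the $\lcs$-component of $\sig(LL')$ is $\lcs(R,E')R'$ in every case.
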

\begin{prfblk}
Fix any $xR\in L\setminus\{R\}$.

Thus $x\neq \ew$ s.t.\ there is some $m\in\N$ with $\abs{x^m}> \abs{\lcsext(L)}$.

As $\lcsext(L)\in\als$ there is some $y\in L$ s.t.\ for all $zR\in L$:
\[\lcsext(L)
=\lcs(x^\iomega,y^\iomega)=\lcs(xy,yx)=\lcs(x^\iomega,y^\iomega,z^\iomega)=\lcsa{\lcs(x^\iomega,y^\iomega)\\ \lcs(x^\iomega,z^\iomega)}\slt x^m\sle x^\iomega\]

Pick any $zR \in L$.

If $\lcs(x^\iomega,z^\iomega)\sge x^m$, then:
\begin{block}
$\lcs(x^\iomega,z^\iomega)\sge\lcs(x^m, z^\iomega)=x^m$
\end{block}
If $\lcs(x^\iomega,z^\iomega)\slt x^m$, then:
\begin{block}
$\lcs(x^\iomega,z^\iomega)=\lcs(x^m, z^\iomega)= \lcs(x^m, x^mz)$ (Lemma~\ref{lem:lcs-calculus})
\end{block}
In particular for $z=y$ we thus have 
\[\lcsext(L)=\lcs(x^\iomega,y^\iomega)=\lcs(x^m,x^my)\]
Hence:
\[\begin{array}{lcl}
\lcs(x^mL)=\lcs(x^m zR \mid zR \in L) 
& = &  \lcs(x^m xR, x^m yR, x^m z R\mid zR\in L)\\
& = & \lcs(x^m xR, x^m y R) = \lcs(x^{m}, x^m y) R= \lcsext(L) R
\end{array}\]

\end{prfblk}

\begin{example}\label{ex:exttolcs}
In the case of $L=\{s^kt^kR\mid k\in\N_0\}$ with $R=\lcs(L)$
we have 
$\lcsext(L)=\lcs((st)^\iomega, (s^{k+2}t^{k+2})^\iomega\mid k\ge 0)$.
If $s$ and $t$ commute, then $\lcsext(L)=(st)^\iomega$, and $\lcsext(L)=(st)^\iomega$ is unbounded.
Thus assume $st\neq ts$ s.t.\ $\lcs(ts,st^{k+1})=\lcs(ts,st)$ and 
\[
\lcs((st)^\iomega, (s^{k+2}t^{k+2})^\iomega)= 
\lcs((ts)^\iomega tst, (st^{k+2}s^{k+1})^\iomega s t^{k+1} t) =
\lcs(tst,stt)
\]
Hence, $\lcs(ststL)=\lcs(tst,stt)R=\lcsext(L)\lcs(L)$.\qed
\end{example}

Recall, $\sT_X^{\le h}:=\tsr(r_X L_X^{\le })$ denotes an $\tseq$-equivalent sublanguage of $\rd(r_XL_X^{\le h})$.
\begin{lemma}[Lemma~\ref{thm:converge-wf-main-text} in the main work]\label{tm:converge-wf}
Let $G$ be a context-free grammar with $N$ nonterminals and $L:=L(G)$ be $\wf$.
For every nonterminal $X$ let $r_X\in\als$ s.t.\ $\abs{r_X}=d_X$ and $r_XL_X$ $\wf$.
Then $\rd(r_X L_X) \tseq \rd(r_X L_X^{\le 4N})$, and thus $\sT_X^{\le 4N}\tseq \sT_X^{\le 4N+1}$ for every nonterminal $X$.
\end{lemma}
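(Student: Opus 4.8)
The plan is to show that every word produced by a derivation tree of height greater than $4N$ is redundant for $\sig(\rd(r_X L_X))$, in the sense that both $R:=\rlcs(r_X L_X)$ and $\lcsext(\rd(r_X L_X))$ are already realized by words of derivation height at most $4N$; this is exactly the statement $\rd(r_X L_X)\tseq\rd(r_X L_X^{\le 4N})$. The second assertion is then immediate: the very words witnessing $\sig(\rd(r_X L_X))$ lie at height at most $4N$, hence also at height at most $4N+1$, and together with the antitonicity of $\lcs$ under language inclusion (item (iii) after \Cref{def:lcs}) this forces $\sig(\rd(r_X L_X^{\le 4N}))=\sig(\rd(r_X L_X^{\le 4N+1}))$; the inductive invariant $\sT_X^{\le h}\tseq\rd(r_X L_X^{\le h})$ then yields $\sT_X^{\le 4N}\tseq\sT_X^{\le 4N+1}$. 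So it suffices to establish the first equivalence.

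First I would treat the $\lcs$. Set $R:=\rlcs(r_X L_X)$ and pick a word $\kappa_0\in r_X L_X$ that is shortest before reduction; such a word has a derivation tree with no repeated nonterminal on any path, hence height at most $N$, so $\kappa_0\in r_X L_X^{\le 4N}$. By \Cref{lem:lcs-witness-main-text} there is a witness $\kappa_1$ with $R=\lcs(\rd(\kappa_0),\rd(\kappa_1))$. If $R\in\rd(r_X L_X)$, then $\rd(\kappa_0)=R$ and every word is a witness, so one of height at most $4N$ exists. Otherwise I factor $\kappa_0=\kappa_0''a\kappa_0'$ and $\kappa_1=\kappa_1''b\kappa_1'$ with $\rd(\kappa_0')=R=\rd(\kappa_1')$ and distinct opening brackets $a\neq b$, fix a derivation tree $t$ of $\kappa_1$ and the main path to the delimiting occurrence of $b$. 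Pruning any repeated nonterminal on a branch that leaves the main path into the prefix keeps $b\kappa_1'$ unchanged and still yields a witness, so such branches may be assumed to carry at most $N$ nonterminals; hence if every root-to-suffix path has at most $3N$ nonterminals, then $t$ has height at most $4N$ and we are done. Otherwise some such path carries $3N+1$ nonterminals, so some nonterminal occurs four times, giving a factorization $\kappa_1=(u,v)(s_1,\tau_1)(s_2,\tau_2)(s_3,\tau_3)w$ and the simple linear sublanguage $L'=(u,v)[(s_1,\tau_1)+(s_2,\tau_2)+(s_3,\tau_3)]^\ast w\subseteq r_X L_X$.

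Next I would normalize $L'$: using that $r_X L_X$ is $\wf$, I would move closing brackets rightward so that $u,v,w,s_1,s_2,s_3\in\als$ and each $\tau_i$ is in the normal form $\inv{r_i}t_ir_i$ (\emph{nonnegative}) or $\inv{r_i}\inv{t_i}r_i$ (\emph{negative}) with $r_i,t_i\in\als$, preserving the reduced language and putting the derivations of the two grammars in bijection (so that heights are preserved). Now the engine is \Cref{lem:negneg,lem:negpos,lem:pospos} (together with the bracket-free \Cref{lem:ustst,lem:uststw}): for a $\wf$ simple linear language with two contexts, $\sig$ of the reduct is determined by $(u,v)w$, by one copy $(u,v)(s_i,\tau_i)w$, and by two copies, where crucially the second copy $(u,v)(s_i,\tau_i)(s_j,\tau_j)w$ may use an \emph{arbitrary} $j$. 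Applying this twice as in \Cref{ex:lcpconv} — first to $(u,v)[(s_1,\tau_1)+(s_2,\tau_2)]^\ast(s_3,\tau_3)w$ and then to $(u,v)[(s_1,\tau_1)+(s_3,\tau_3)]^\ast w$ — shows that one of $(u,v)w$, $(u,v)(s_i,\tau_i)w$, $(u,v)(s_i,\tau_i)(s_j,\tau_j)w$ with $i\neq j$ is already a witness w.r.t.\ $\kappa_0$. Each such word omits at least one of the three pumping contexts, hence arises from $t$ by pruning at least one pumping subtree, producing a strictly smaller derivation tree of another witness. Inducting on the number of nodes of $t$ drives the height down to at most $4N$; since $\kappa_0$ and the resulting witness both lie in $r_X L_X^{\le 4N}$, this proves $R=\lcs(\rd(r_X L_X^{\le 4N}))$.

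Finally, for the maximal extension I would reduce to the $\lcs$ via \Cref{lem:exttolcs}: if $\lcsext(\rd(r_X L_X))$ is finite, then for a suitable word $x$ and exponent $m$ we have $\lcs\bigl(x^m\rd(r_X L_X)\bigr)=\lcsext(\rd(r_X L_X))\,R$; since prepending the fixed word $x^m$ changes no derivation height, the witness-height argument above, applied to $x^m\rd(r_X L_X)$, yields a height-$4N$ witness for the extension as well (the unbounded case, where $\lcsext\in\alup$, is immediate since it is already witnessed by a single pumping context). This gives $\sig(\rd(r_X L_X))=\sig(\rd(r_X L_X^{\le 4N}))$, i.e.\ $\rd(r_X L_X)\tseq\rd(r_X L_X^{\le 4N})$. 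The hard part is the combinatorial core \Cref{lem:pospos,lem:negneg,lem:negpos}: in the presence of the involution the contexts $\tau_i$ carry closing brackets, so the clean prefix-commutation arguments of~\cite{DBLP:conf/stacs/LuttenbergerPS18} break down, and one must establish — through a careful case analysis on nonnegative versus negative contexts and repeated use of conjugacy and Fine--Wilf periodicity — both that two copies suffice and that the index of the second copy is free, now for the full $\sig$ rather than merely for the $\lcs$.
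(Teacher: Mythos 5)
Your proposal is correct and follows essentially the same route as the paper's own proof: shortest word $\kappa_0$ at height at most $N$, witness factorization along the main path to the $\rlcs$-delimiting letter, the $3N{+}N$ height split, normalization of the pumping contexts to the forms $\inv{r_i}t_ir_i$ and $\inv{r_i}\inv{t_i}r_i$, the double application of \Cref{lem:pospos,lem:negpos,lem:negneg} exploiting the free choice of the second context index, and the reduction of $\lcsext$ to $\lcs$ via \Cref{lem:exttolcs} by prepending $x^m$. The only cosmetic difference is that you induct on the size of the derivation tree where the paper fixes a witness $\kappa$ minimal first in length before reduction and then after reduction; these amount to the same argument.
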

\begin{prfblk}
Let $S$ be the axiom of $G$. 

W.l.o.g.\ $G$ is reduced to the nonterminals which are reachable from $S$ and which are productive.
Additionally assume that $L(G)$ contains at least two words; otherwise the claim of the lemma follows directly.

As $r_X L_X$ is $\wf$, we have for any $\zeta\in L_X$ that $\rd(\zeta)=\inv{u}v\in\al^\ast$ with $u\sle r_X=r_X'u$ s.t.\ $r_X \zeta \rdeq r_X' v$ and thus: 
\[\abs{\rd(r_X\zeta)}=\abs{r_X'v}=\abs{r_Xv}-\abs{u}=\hd(r_X\zeta)\]

Let $\kappa_0\in L_X$ be a shortest-word-after-reduction i.e.\ $\hd(r_X\kappa_0)=\min\{\hd(r_X\zeta)\mid \zeta \in r_X L_X\}$.

Let $\kappa_1\in L_X$ be a second-shortest-word-after-reduction i.e.\ $\hd(r_X\kappa_1)=\min\{\hd(r_X\zeta)\mid \zeta \in r_X L_X, \mid \hd(\zeta)>\hd(\kappa_0)\}$.

Then w.l.o.g.\ $\kappa_0 \in L_X^{\le N}$ and $\kappa_1\in L_X^{\le 2N}$ as:
\begin{block}
For any $S\to_G^\ast \alpha X \beta$ and $X\to_G^\ast \gamma$ we need to have that 
\[\forall k\ge 0\colon \hd((\alpha,\beta)(\sigma,\tau)^k\gamma)=\hd(\alpha\gamma\beta)+k\hd(\sigma\tau)\ge0.\]
In fact, this has to hold also for any prefix of a word of $L$ and $r_X L_X$ s.t.\ also
\[\forall k\ge 0\colon \hd(\alpha \sigma^k \gamma)=\hd(\alpha\gamma)+k\hd(\sigma)\ge0\]
and thus $\hd(\tau)\ge - \hd(\sigma)$.

Any word $\zeta\in L_X\setminus L_X^{\le N}$ has a derivation tree with a path from its root to some leaf along which at least $N+1$ nontermimals occur, i.e.\ along at least one nonterminal occurs twice which gives rise to a factorization of the form
\[\zeta = (\alpha,\beta)(\sigma,\tau)\gamma\]
s.t.\ 
\[\abs{\rd(\zeta)}=\hd(\zeta)=\hd(\alpha\gamma\beta)+ \hd(\sigma\tau)\ge \hd(\alpha\gamma\beta)=\abs{\rd(\alpha\beta\gamma)}\]
Removing the pumping tree that gives rise to the factor $(\sigma,\tau)$ thus leads to a word $\alpha\gamma\beta$ that is shorter than $\zeta$ before reduction, and at most as long as $\zeta$ after reduction.

Hence, $r_X L_X^{\le N}$ already contains all shortest-words-after-reduction, i.e.\
\[\min\{\hd(r_X\zeta)\mid \zeta \in r_X L_X\} = \min\{\hd(r_X\zeta)\mid \zeta \in r_X L_X^{\le N}\}\]
and thus w.l.o.g.\ $\kappa_0\in L_X^{\le N}$.

Assume there exists a second-shortest-word-after-reduction $\kappa_1\in L_X$.

Any path that consists of at least $2N+1$ nonterminals contains at least one terminal three times which gives rise to a factorization of the form
\[\kappa_1 = (\alpha,\beta)(\sigma_1,\tau_1)(\sigma_2,\tau_2)\gamma\]
If $\hd(\sigma_1\tau_1)=\hd(\sigma_2\tau_2)$, we can prune both pumping trees.

So assume $\hd(\sigma_i\tau_i)>0$ for either $i=1$ or $i=2$.

Pruning $(\sigma_i,\tau_i)$ leads to $(\alpha,\beta)(\sigma_j,\tau_j)\gamma$ ($j\neq i$) with
\[\hd(\kappa_1)> \hd((\alpha,\beta)(\sigma_j,\tau_j)\gamma)\ge \hd(\kappa_0)\]
As $\kappa_1$ is a second-shortest word after reduction, we have to have 
\[\hd((\alpha,\beta)(\sigma_j,\tau_j)\gamma)= \hd(\kappa_0)\]
and thus
\[\hd(\sigma_j\tau_j)=0\]
s.t.\ we can prune $(\sigma_j,\tau_j)$ to obtain
\[\hd(\kappa_1)= \hd((\alpha,\beta)(\sigma_i,\tau_i)\gamma)\]
a possible different second-shortest-word-after-reduction.
\end{block}

Let $R:=\rlcs(r_X L_X)$.

If the $\rlcs$ of $r_X L_X$ can be extended infinitely:
\begin{block}
Then $\rd(\kappa_0)=R$ and $\rd(\kappa_1)= xR$ for some $x\in\al^+$.

Then for any $\zeta\in L_X$ we have $\rd(r_X \zeta)= yR$ with 
$x^\iomega\seq y^\iomega$, i.e.\ $xy=yx$.

Thus $\tsn(r_X L_X)$ is in this case given by $\{\rd(r_X\kappa_0), \rd(r_X \kappa_1)\}\subseteq r_X L_X^{\le 2N}$
\end{block}

{\em Assume thus w.l.o.g.\ that the $\rlcs$ of $r_X L_X$ can at most be finitely extended.}
\begin{block}
This implies that $\rd(r_X L_X)$ contains at least three distinct words.
\end{block}

We distinguish the two cases whether $R=\rlcs(r_X L_X)$ is a strict suffix of every word in $r_X L_X$, in particular $R\slt \rd(r_X \kappa_0)$, or if $R$ is a, and thus the shortest-word-after-reduction, in particular $R=\rd(r_X \kappa_0)$.
\begin{itemize}
\item
If $R=\rlcs(r_X L_X)\slt \rd(r_X\kappa_0)$, there is some witness $\kappa\in L_X$ s.t.\ 
\[R=\lcs(\rd(r_X \kappa_0),\rd(r_X \kappa))\slt r_X \kappa_0\]
In particular, we have that $r_X \kappa_0\rdeq \ldots a R$ and $r_X \kappa \rdeq \ldots b R$ for two distinct opening parenthesis $a,b\in \al$ ($a\neq b$).
\item
If $R=\rlcs(r_XL_X)=\rd(r_X \kappa_0)$, then recall Lemma~\ref{lem:exttolcs}:

The maximal extension $E$ of the $\lcs$ $R$ is given by
\[E =\lcsext^{\rd}(L)= \lcs( \rd(r_X \zeta\inv{R})^\iomega \mid \zeta \in L_X, \rd(r_X\zeta)\neq R)\]
As $E$ is assumed to be finite, $\rd(r_XL_X)$ has to contain at least two other reduced words, both longer than $\rd(r_X \kappa_0)$.
In particular, there has to be a second-shortest-word-after-reduction $\kappa_1$ s.t.\ we find a witness $\kappa$ for $E$ w.r.t.\ $\kappa_1$:
\[E = \lcs( \rd(r_X \zeta\inv{R})^\iomega \mid \zeta \in L_X, \rd(r_X\zeta)\neq R)= \lcs( \rd(r_X\kappa_1 \inv{R})^\iomega, \rd(r_X\kappa\inv{R})^\iomega)\]
Let $\rd(r_X \kappa_1)= xR$ with $x\neq \ew$. Choose $m>0$ s.t.\ $E\slt x^m$. 

Pick any $\zeta\in L_X$ with $\rd(r_X \zeta)=zR\wedge z\neq \ew$ and w.l.o.g.\ $xz\neq zx$.

We then have
\[\lcs(x^\iomega,z^\iomega)=\lcs(xz,zx)\]
If $\lcs(x^\iomega,z^\iomega) \slt x^m$, then
\[\rlcs( x^\iomega, z^\iomega )=\rlcs( x^m, x^mz)\slt x^m\]
If otherwise $\lcs(x^\iomega,z^\iomega) \sge x^m$, then
\[\rlcs( x^\iomega, z^\iomega )\sge \rlcs( x^m, x^mz)=x^m\]
Thus, 
\[ER = \rlcs(x^m r_X L_X)\slt \rd(x^m r_X \kappa_0)=x^m R= x^{m-1}\rd(r_X \kappa_1)\]
i.e.\ we can reduce this case to the case where $R$ is a strict suffix of any word in $r_X L_X$ by extending $r_X$ to $x^m r_X$.

Note that then any witness for $ER$ w.r.t.\ $x^mr_X\kappa_0$ is also a witness w.r.t.\ $x^mr_X\kappa_1$ and vice versa.
\end{itemize}

{\em Assume thus w.l.o.g.\ that $R=\rlcs(r_X L_X)\slt \rd(r_X\kappa_0)$ from here on.}

Choose $\kappa$ in $L_X$ s.t.\
\begin{enumerate}
\item $\lcs(\rd(r_X\kappa_0), \rd(r_X\kappa)) = \rlcs(r_XL_X)$.
\item $\abs{\kappa}$ is minimal w.r.t.\ to all words in $r_XL_X$ satisfying {\bf 1.},
\item $\abs{\rd(r_X\kappa)}$ is minimal w.r.t.\ to all words in $r_XL_X$ satisfying {\bf 2.}.
\end{enumerate} 

W.l.o.g.\ $\rd(r_X\kappa_0) = \ldots aR$ and $\rd(r_X\kappa) = \ldots bR$ with $a\neq b$ and $a,b\in\al$.

Note that there is a unique factorization $r_X\kappa = \zeta b \xi$ s.t.\ both $\zeta$ and $\xi$ are $\wf$ and $\rd(\xi)=R$:
\begin{block}
For every prefix (before reduction) $\pi$ of $r_X\kappa$ we can interpret $\hd(\pi)$ as the height of the last letter of $\pi$.

Then the $b$ in $\rd(r_X\kappa)= \ldots b R$ is the last letter in $r_X\kappa$ of height $\hd(r_X\kappa) - \abs{R}$ and is, thus, uniquely identified.

This specific $b$ splits $r_X\kappa$ into $r_X\kappa= \zeta b \xi$; 
as this $b$ is the last letter in $\kappa$ on height $\hd(\kappa) - \abs{R}$, 
$\xi$ has to be $\wf$ with $\rd(\xi) = R$;
as $r_X\kappa$ is $\wf$ and $\zeta$ is a prefix thereof, trivially also $\zeta$ is $\wf$.
\end{block}
Assume every derivation tree of $\kappa$ contains a path to a letter within $b \xi$ 
along which some nontermimal $A$ occurs at least $4$ times (see Fig.~\ref{fig:lcp-height}).

This gives rise to a factorization
\[
\kappa = (\alpha, \beta) (\sigma_1,\tau_1) (\sigma_2,\tau_2) (\sigma_3,\tau_3) \gamma
\]
Any word
\[
\zeta\in\{ (\alpha, \beta) (\sigma_1,\tau_1)^{k_1} (\sigma_2,\tau_2)^{k_2} (\sigma_3,\tau_3)^{k_3} \gamma \mid k_i \in \{0,1\}, k_1+k_2+k_3 < 3\}
\]
is shorter (before reduction) than $\kappa$, hence cannot be a witness w.r.t.\ $\kappa_0$ i.e.\ $aR\sle \rlcs(r_X\kappa, r_X\zeta)$.

Let 
\[L=(r_X\alpha,\beta)[(\sigma_1,\tau_1)+\ldots+(\sigma_3,\tau_3)]^\ast \gamma\]
$L$ is $\wf$ with $R=\rlcs(r_X L_X)=\rlcs(L)$ as $L\subseteq r_X L_X$ contains both $r_X\kappa$ and $r_X\alpha\gamma\beta$ with the latter not a witness w.r.t.\ $r_X\kappa_0$ s.t.\ $\rlcs(r_X\kappa, r_X\alpha\gamma\beta)=R$.

Our goal is to show that already $r_X\alpha\sigma_i \gamma \tau_i \beta$ or $r_X\alpha \sigma_i \sigma_j \gamma \tau_j \tau_i \beta$ for some $i\neq j$ is a witness w.r.t. $r_X\alpha\gamma\beta$.

Note that:
\begin{itemize}
\item 
$r_X \alpha$ has to be $\wf$, all other factors $\beta,\gamma,\sigma_i,\tau_i$ have to be $\wwf$.
\item
As noted already at the beginning, we have both $\hd(\sigma_i)\ge 0$ and $\hd(\sigma_i\tau_i)\ge0$.
\item
By choice of the path used for the factorization, we have $\rd(\tau_3\tau_2\tau_1\beta)=\inv{x}y$ with $y\sle R=\rlcs(L)=\rlcs(r_X L_X)$.
\end{itemize}

We first reduce the factors $\alpha,\beta,\gamma,\sigma_i$ to words in $\al^\ast$.

As $r_X \alpha$ has to be $\wf$, simply set $u:=\rd(r_X\alpha)\in\al^\ast$.

W.l.o.g.\ we may assume $\beta=\ew$.
\begin{block}
This amounts to changing $R=\rlcs(L)$ to $R:=\rd(\rlcs(L)\inv{\beta})$.
\end{block}

W.l.o.g.\ we may also assume $\rd(\sigma_i) = s_i\in\al^\ast$:
\begin{block}
Let $\rd(\sigma_i) = \inv{x_i} y_i$ for any $i\in[3]$.

Then $u\inv{x_i}$ has to be $\wf$ for all $i\in[3]$, i.e. we have $u\rdeq u\inv{x_i} x_i$ for all $i\in[3]$.

As $\hd(\sigma_i) \ge 0$, we have $\inv{x_i} \sle \inv{y_i}$.

As $u \sigma_i \sigma_i \rdeq u \inv{x_i} y_i \inv{x_i} y_i$ has to be $\wf$, we have $y_i = s_i x_i$.

Pick $J\in [3]$ s.t. $x_J$ is a longest word of $\{x_1,x_2,x_3\}$.

Then $u\sigma_i \sigma_J \rdeq u \inv{x_J} x_J \inv{x_i} s_i x_i \inv{x_J} s_J x_J$ has to be $\wf$, i.e. $(x_J \inv{x_i}) s_i \inv{x_J \inv{x_i}}$ is $\wf$ for all $i\in [3]$.

Thus there exist $\hat{s}_i$ s.t. $\rd(x_J \inv{x_i}) s_i = \hat{s}_i \rd(x_J \inv{x_i})$ resp. $x_J \inv{x_i} s_i \rdeq \hat{s}_i x_J \inv{x_i}$.

So $u \sigma_{i_1} \ldots \sigma_{i_l} \rdeq u\inv{x_J} \hat{s}_{i_1} \ldots \hat{s}_{i_l} x_J$.

Thus, set $u:=\rd(u\inv{x_J})$, $\gamma:= \rd(x_J \gamma)$ and $\sigma_i:=\hat{s}_i$.
\end{block}

Analogously, we may further assume $\rd(\gamma)= w\in\al^\ast$:
\begin{block}
If $\rd(\gamma) = \inv{x} w$, then $u=u'x$ resp. $u\rdeq u\inv{x} x$ and thus $us_i\gamma \rdeq u\inv{x}x s_i \inv{x} w$.

So $x s_i \inv{x}$ is $\wf$ for all $i\in[3]$.

Hence, we find $\hat{s}_i$ with $x s_i = \hat{s}_i x$ s.t. $u s_{i_1} \ldots s_{i_l} \gamma \rdeq u\inv{x} \hat{s}_{i_1} \ldots \hat{s}_{i_l} x \inv{x} w$

Thus, set $s_i := \hat{s}_i$, $u:= \rd(u\inv{x})$ and $\gamma:=w$.
\end{block}
We thus may simply assume that
\[L=(u,1)[(s_1,\tau_1)+\ldots+(s_3,\tau_3)]^\ast w\]
with $L$ $\wf$ and $\rd(\tau_3\tau_2\tau_1\beta)=\inv{x}y$ with $y\sle R=\rlcs(L')=\rlcs(r_X L_X)$.

For $i\in[3]$ we have either $\rd(\tau_i) = \inv{r_i} t_i r_i$ or $\rd(\tau_i) = \inv{r_i} \inv{t_i} r_i \wedge t_i \neq \ew$:
\begin{block}
Let $\rd(\tau_i)=\inv{x_i} y_i$ for $i\in[3]$.

Then $L_i:=(u,\ew)(s_i,\tau_i)^\ast w$ has to be $\wf$ for any $i\in[3]$ as $L_i\subseteq L$ and $L$ is $\wf$ by assumption.

Thus $\tau_i^2\rdeq\inv{x_i}y_i\inv{x_i} y_i$ has to be $\wwf$.

If $\abs{y_i}\ge \abs{x_i}$, then $y_i\inv{x_i}$ has to be $\wf$, i.e.\ $x_i\sle y_i$. 
Setting $r_i:=x_i$ and $t_i:=\rd(y_i\inv{x_i})$, we have 
\[\tau_i\rdeq\inv{x_i}y_i = \inv{x_i}\rd(y_i\inv{x_i})x_i = \inv{r_i} t_i r_i\]

Otherwise $\abs{y_i}<\abs{x_i}$ and $x_i\inv{y_i}$ is $\wf$ with $y_i\slt x_i$. 
Then set $r_i:=y_i$ and $t_i:=\rd(x_i\inv{y_i})\neq\ew$ s.t.\ 
\[\tau_i\rdeq\inv{x_i} y_i = \inv{\rd(x_i\inv{y_i})y_i}y_i = \inv{t_ir_i}r_i=\inv{r_i}\inv{t_i}r_i\]

\end{block}

Assume that for some $i\in [3]$ we have $\rd(\tau_i) = \inv{r_i} \inv{t_i} r_i$ with $t_i \neq \ew$:
\begin{block}
As shown in Lemma~\ref{lem:negneg} and Lemma~\ref{lem:negpos} we always have for $j\neq i$ and any sequence $i_1\ldots i_l\in\{1,2\}^+$:
\begin{itemize}
\item
$us_{i_1}\ldots s_{i_l} s_j w\inv{r_j} t_j r_j \rdeq us_{i_1}\ldots s_{i_l} p^{m_j+n_j}w$
\item
$us_{i_1}\ldots s_{i_l} s_j w\inv{r_j} \inv{t_j} r_j \rdeq us_{i_1}\ldots s_{i_l} p^{m_j-n_j}w$
\end{itemize}
Hence $L\rdeq u(p^{m_1\pm n_1} + p^{m_2\pm n_2} + p^{m_3\pm n_3})^\ast w$ and thus

$\rlcs(L)=\rlcs(uw, us_1w\tau_1, us_2 w\tau_2, us_3w\tau_3)$
\end{block}

So it remains the case that for all $i\in[3]$ we have $\tau_i = \inv{r_i} t_i r_i$:
\begin{block}
Hence $L=(u,1)[(s_1,\inv{r_1} t_1 r_1) + (s_2,\inv{r_2} t_2 r_2) + (s_3, \inv{r_3} t_3 r_3)]^\ast w$.

As before $r_i \inv{r_j}$ has to be $\wwf$ for any $i,j\in[3]$.

Let $\{i_1,i_2,i_3\} = [3]$ s.t. $r_{i_3} \sle r_{i_2} \sle r_{i_1}$.

Note that $\rd(t_3 r_3 \inv{r_2} t_2 r_2 \inv{r_1} t_1 r_1) = \inv{x} y$ with $y\sle \rlcs(L)=R$.

Hence $\abs{R}\ge \abs{y}=\hd(t_3r_3\inv{r_2}t_2r_2\inv{r_1} t_1 r_1) + \abs{x} = \abs{t_1t_2t_3} + \abs{r_3} + \abs{x}$.

We show that $\abs{R} \ge \abs{y} \ge \abs{t_ir_i}$ for all $i\in[3]$:
\begin{itemize}
\item
If $\abs{r_1} \le \abs{t_2r_2} \wedge \abs{r_2} \le \abs{t_3r_3}$

then $\abs{y}=\abs{t_1t_2t_3r_3} \ge \abs{t_1t_2r_2} \ge \abs{t_1r_1}$
\item
If $\abs{r_1}> \abs{t_2r_2} \wedge \abs{r_2} + \abs{r_1} - \abs{t_2r_2} \le \abs{t_3r_3}$

then $\abs{t_2r_2} < \abs{r_1} \wedge \abs{r_1} \le \abs{t_2t_3r_3}$

and $\abs{y}=\abs{t_1t_2t_3r_3} \ge \abs{t_1r_1} > \abs{t_1t_2r_2}$
\item
If $\abs{r_1} \le \abs{t_2r_2} \wedge \abs{r_2} > \abs{t_3r_3}$, 

then $\abs{x} = \abs{r_2} - \abs{t_3r_3}$ 

and $\abs{y}= \abs{t_1t_2t_3r_3} + \abs{r_2} - \abs{t_3r_3 }  = \abs{t_1t_2r_2} \ge \max(\abs{t_1r_1}, \abs{t_1t_2t_3r_3})$
\item
If $\abs{r_1} >  \abs{t_2r_2} \wedge \abs{t_3r_3} < \abs{r_2}+\abs{r_1} - \abs{t_2r_2}$, 

then $\abs{x} = \abs{r_1} - \abs{t_2t_3r_3}$ 

and $\abs{y} = \abs{t_1t_2t_3r_3} + \abs{r_1} - \abs{t_2t_3r_3} = \abs{t_1r_1} \ge \max(\abs{t_1t_2r_2}, \abs{t_1t_2t_3r_3})$
\end{itemize}
Consider $L' =(u,1)[(s_1,\inv{r}_1 t_1 r_1)+(s_2,\inv{r_2} t_2 r_2)]^\ast (s_3,\inv{r}_3 t_3 r_3) w$

We have $L' \subseteq L$
and thus $R=\rlcs(L)\sle \rlcs(L')$.

Note that $(u,1)(s_3,\inv{r_3}t_3r_3)w$ cannot be a witness w.r.t.\ $uw$ as its length after reduction is strictly smaller than that of $r_X\kappa$, hence the two words have to coincide on at least the last $1+\abs{R}$ letters s.t. $\rlcs(L')\sle \rlcs(r_X\kappa, (u,1)(s_3,\inv{r_3}t_3r_3)w)=R$ i.e.\ $\rlcs(L)=\rlcs(L')=R$.

Let $\tilde{w}=\rd(s_3w\inv{r_3}t_3r_3)$.
\begin{block}
If $\tilde{w}\rdeq \inv{x} y$ is only $\wwf$, then $u=u'x$ and suitable conjugates of $s_i$ exist that allow us to move $x$ from $u=u'x$ through any sequence $s_{i_1}\ldots s_{i_l}$ next to $\tilde{w}$ as done before. Thus assume w.l.o.g.\ that $\tilde{w}$ is already $\wf$.
\end{block}
By Lemma~\ref{lem:pospos}, we have:

\[R=\rlcs(L') = \rlcsa{u\tilde{w}\\ us_1\tilde{w}\inv{r_1}t_1 r_1\\ us_2\tilde{w}\inv{r_2}t_2r_2\\ us_1s_1\tilde{w}\inv{r_1} t_1 t_1 r_1\\ us_2s_2\tilde{w}\inv{r_2} t_2 t_2r_2}\]

Neither $us_1\tilde{w}\inv{r_1}t_1 r_1$ nor $us_2\tilde{w}\inv{r_2}t_2r_2$ can be witnesses again because their length before reduction is strictly less than that of $r_X\kappa$.

Hence, either $us_1s_1\tilde{w}\inv{r}_1 t_1 t_1 r_1$ or $us_2s_2\tilde{w}\inv{r}_2 t_2 t_2 r_2$ is a witness w.r.t.\ $us_3w\inv{r_3}t_3 r_3$ and thus also w.r.t. $uw$.

W.l.o.g.\ $us_1s_1\tilde{w}\inv{r}_1 t_1 t_1 r_1$ is a witness.

Consider then $L''=(u,1)[(s_1,\inv{r}_1 t_1 r_1)^\ast+(s_3,\inv{r_3} t_3 r_3)^\ast]  w$.

Again $L''\subseteq L$ s.t.\ $R=\rlcs(L)\sle \rlcs(L'')$.

But also $\rlcs(L'')\sle \rlcs(uw, us_1s_1w\inv{r_1}t_1t_1r_1)=R$ s.t. $R=\rlcs(L)=\rlcs(L'')$

Using Lemma~\ref{lem:pospos} and now that $r_1,r_2,r_3\sle R=\rlcs(L'')$, we obtain

\[R=\rlcs(L'') = \rlcsa{u{w}\\ us_1{w}\inv{r_1}t_1 r_1\\ us_3{w}\inv{r_3}t_3r_3\\ us_1s_3{w}\inv{r_3} t_3 r_3 \inv{r_1} t_1 r_1\\ us_3s_1{w}\inv{r_1}t_1 r_1 \inv{r_3} t_3 r_3}\]

But by our assumption that $r_X\kappa$ is a witness w.r.t.\ $uw$ of minimal length before reduction, none of theses words can be witnesses. Hence, our assumption that such a factorization exists, cannot hold.
\end{block}
So, every path leading to the occurrence of $b$ that defines the $\rlcs$ of $L$ or to a letter right of it has to have height at most $3N$.
By minimality, we can also assume that any path fragment that leads from the main path (leading to $\rlcs$-defining occurrence of $b$) to a letter left of this $b$ contains any nonterminal at most once (see Fig.\ \ref{fig:lcp-height-2}). Hence, the derivation tree can have height at most $4N$.

\begin{figure}
\begin{center}
\scalebox{0.8}{
\begin{tikzpicture}
\draw (0,0) -- (-5,-5) -- (5,-5) -- (0,0);

\draw (-5,-5) -- (-5,-6) -- (5,-6) -- (5,-5);
\draw[dashed] (-5,-5.5) -- (5,-5.5);

\draw (0,-0.6) -- (4.4,-5);
\draw (0,-0.6) -- (-4.4,-5);
\node at (0+0.3,-0.6) {$A$};

\draw (0,-1.2) -- (-3.8,-5);
\draw (0,-1.2) -- (3.8,-5);
\node at (0+0.3,-1.2) {$A$};

\draw (0,-1.8) -- (-3.2,-5);
\draw (0,-1.8) -- (3.2,-5);
\node at (0+0.3,-1.8) {$A$};

\draw (0,-4.4) -- (-0.6,-5);
\draw (0,-4.4) -- (0.6,-5);
\node at (0+0.3,-4.4) {$A$};

\draw[dashed] (0,0) -- (0.0,-0.6) -- (-0,-1.2) -- (0,-1.8) -- (0,-4.4) -- (0,-5);
\draw[dashdotdotted] (0,0) -- (0.0,-0.6) -- (-0,-1.2) -- (0,-1.8) -- (-0,-4.1) -- (-0.9,-5);

\node at (-4.7,-5.25) {$\alpha$};
\draw[dotted] (-4.4,-5) -- (-4.4,-5.5);
\node at (-4.1,-5.25) {$\sigma_1$};
\draw[dotted] (-3.8,-5) -- (-3.8,-5.5);
\node at (-3.5,-5.25) {$\sigma_2$};
\draw[dotted] (-3.2,-5) -- (-3.2,-5.5);
\node at (-1.9,-5.25) {$\sigma_3$};
\draw[dotted] (-0.6,-5) -- (-0.6,-5.5);
\node at (0,-5.25) {$\gamma$};
\draw[dotted] (0.6,-5) -- (0.6,-5.5);
\node at (1.9,-5.25) {$\tau_3$};
\draw[dotted] (3.2,-5) -- (3.2,-5.5);
\node at (3.5,-5.25) {$\tau_2$};
\draw[dotted] (3.8,-5) -- (3.8,-5.5);
\node at (4.1,-5.25) {$\tau_1$};
\draw[dotted] (4.4,-5) -- (4.4,-5.5);
\node at (4.7,-5.25) {$\beta$};

\draw (-0.8,-6) -- (-0.8,-5.5);
\draw (-1.2,-6) -- (-1.2,-5.5);
\node at (-1,-5.75) {$b$};
\node at (1.6,-5.75) {$\xi$};
\node at (-3.55,-5.75) {$\zeta$};

\end{tikzpicture}
}
\end{center}
\caption{
Assume that  the $\rlcs$-defining letter $b$ is contained within $\kappa$ and not $r_X$, i.e.\
that $\rd(\kappa) = \ldots b\rlcs(L)$ with $r_X\kappa$ $\wf$. 
As any opening letter within $\kappa$ can only be canceled by a closing letter from right, 
by canceling out always the pair of matching opening and closing letters that is farthest to the right,
we obtain a unique factorization of $\kappa= \zeta b \xi$ s.t.\ 
$r_X\zeta$ and $\xi$ are both \wf\ with $\rd(r_X\kappa) = \rd(r_X\zeta) b \rd(\xi) = \rd(r_X\zeta) b 
\rlcs(L)$
s.t.\ this specific occurrence of the letter $b$  defines the reduced suffix $\rlcs(L)$.
(If $b$ is contained within $r_X$, then $\kappa=\xi$, $r_X=r_X'bx$ and $\rd(x\kappa)=\rlcs(L)$.)
We assume that the given derivation tree of the witness $\kappa$ contains a path (drawn as dashed line) which (i) leads to one of the letters within $b\xi$ and (ii) consists of at least $3N+1$ nonterminals so that by the pigeon-hole principle at least one nontermimal $A$ occurs at least $4$ times; specifically, consider precisely the first $3N+1$ nonterminals along such path and let $A$ be the nonterminal that occurs both at least $4$ times within this fragment and also occurs the earliest.
W.r.t.\ the nonterminal $A$ we factorize the witness as $\kappa=(\alpha,\beta)(\sigma_1,\tau_1)(\sigma_2,\tau_2)(\sigma_3,\tau_3)\gamma\rdeq \ldots b
\rlcs(L)$.} 
\label{fig:lcp-height}
\end{figure}
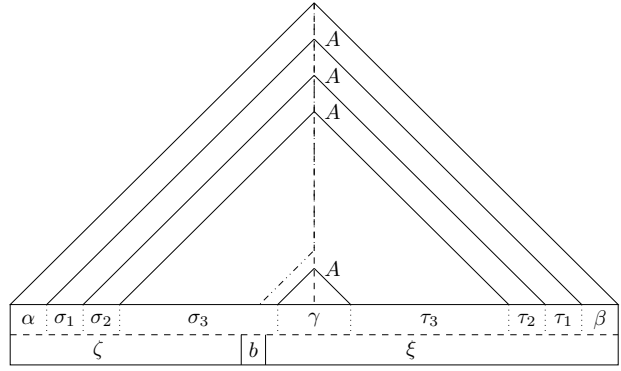

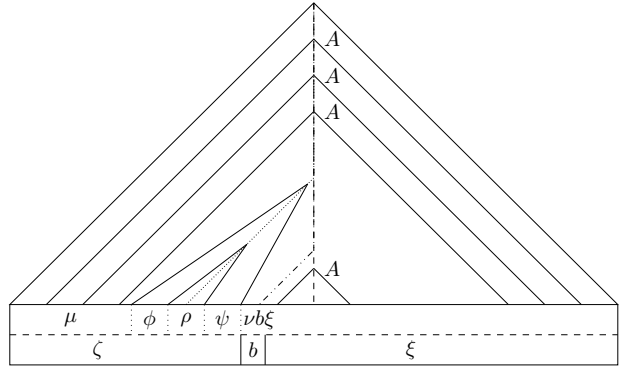
\begin{figure}
\begin{center}
\scalebox{0.8}{
\begin{tikzpicture}
\draw (0,0) -- (-5,-5) -- (5,-5) -- (0,0);

\draw (-5,-5) -- (-5,-6) -- (5,-6) -- (5,-5);
\draw[dashed] (-5,-5.5) -- (5,-5.5);

\draw (0,-0.6) -- (4.4,-5);
\draw (0,-0.6) -- (-4.4,-5);
\node at (0+0.3,-0.6) {$A$};

\draw (0,-1.2) -- (-3.8,-5);
\draw (0,-1.2) -- (3.8,-5);
\node at (0+0.3,-1.2) {$A$};

\draw (0,-1.8) -- (-3.2,-5);
\draw (0,-1.8) -- (3.2,-5);
\node at (0+0.3,-1.8) {$A$};

\draw (0,-4.4) -- (-0.6,-5);
\draw (0,-4.4) -- (0.6,-5);
\node at (0+0.3,-4.4) {$A$};

\draw[dashed] (0,0) -- (0.0,-0.6) -- (-0,-1.2) -- (0,-1.8) -- (0,-4.4) -- (0,-5);
\draw[dashdotdotted] (0,0) -- (0.0,-0.6) -- (-0,-1.2) -- (0,-1.8) -- (-0,-4.1) -- (-0.9,-5);

\draw[densely dotted] (0,0) -- (0.0,-0.6) -- (-0,-1.2) -- (0,-1.8) -- (-0,-2.9) -- (-2.1,-5);

\draw (-0.1, -3) -- (-1.2,-5);
\draw (-0.1, -3) -- (-3,-5);

\draw (-1.1, -4) -- (-1.8,-5);
\draw (-1.1, -4) -- (-2.4,-5);

\node at (-4,-5.25) {$\mu$};
\draw[dotted] (-3,-5) -- (-3,-5.5);
\node at (-2.7,-5.25) {$\phi$};
\draw[dotted] (-2.4,-5) -- (-2.4,-5.5);
\node at (-2.1,-5.25) {$\rho$};
\draw[dotted] (-1.8,-5) -- (-1.8,-5.5);
\node at (-1.5,-5.25) {$\psi$};
\draw[dotted] (-1.2,-5) -- (-1.2,-5.5);
\node at (-0.9,-5.25) {$\nu b \xi$};

\draw (-0.8,-6) -- (-0.8,-5.5);
\draw (-1.2,-6) -- (-1.2,-5.5);
\node at (-1,-5.75) {$b$};
\node at (1.6,-5.75) {$\xi$};
\node at (-3.55,-5.75) {$\zeta$};

\end{tikzpicture}
}
\end{center}
\caption{
Assume that the $\rlcs$-defining occurrence of $b$ is not contained in $r_X$ s.t.\
$\kappa = \zeta b \xi$ with both $r_X\zeta$ and $\xi$ \wf\ and $\rd(r_X\kappa) = \rd(r_X\zeta) b 
\rlcs(L)$.
Consider any path that leads to a letter within $\zeta$. (If $b$ is contained in $r_X$, then this cannot happen.)
The first nonterminal along this path that is not also contained in the path leading to $b$ defines a subtree
that does not contain the marked $b$ anymore.
Assume this subtree contains a path with at least $N+1$ nonterminals s.t.\ we can factorize $\zeta = (\mu, \nu)(\phi,\psi)\rho$.
Then $\kappa = (\mu,\nu b \xi)(\phi, \psi)\rho$, and $L'=(\mu,\nu b \xi)(\phi, \psi)^*\rho$ is a sublanguage of $L_X$ and thus $r_X L'$ is $\wf$.
Hence, $(r_X\mu,\nu b \xi)(\phi, \psi)^0\rho = r_X\mu \rho \nu b\xi$ is \wf. As $b\xi$ is \wf, too, we have that $r_X\mu \rho \nu b\xi \rdeq \rd(r_X\mu\rho\nu) b
\rlcs(L)$ is a shorter (before reduction) witness than $\kappa$.
Hence, we can always assume that all subtrees rooted at a node left of the path leading to the marked $b$ have height at most $n-1$.
Thus, if all paths leading to a letter within $b\xi$ contain at most $3N$ nonterminals, then the derivation tree can have at most height $4N$.
} 
\label{fig:lcp-height-2}
\end{figure}
\end{prfblk}

\subsection{Lemma~\ref{thm:not-wf--main-text} in the main work}

Recall, $\sT_X^{\le h}:=\tsr(r_X L_X^{\le })$ denotes an $\tseq$-equivalent sublanguage of $\rd(r_XL_X^{\le h})$, assuming that $r_X L_X^{\le h}$ is still well-formed.
\begin{lemma}[Lemma~\ref{thm:not-wf--main-text} in the main work]
If $L=L(G)$ is not $\wf$, then there is some least $h_0$ s.t.\ $r_X L_Y^{\le h_0}\inv{r_Z}$ is not $\wf$ with $X\to_G YZ$.
For $h\le h_0$, all $r_X L_X^{\le h}$ are $\wf$ s.t.\ $\sT_X^{\le h}\tseq \rd(r_X L_X^{\le h})$.
If $h_0 \ge 4N+1$, then at least for one nonterminal $X$ we have $\sT_X^{\le 4N+1}\not\tseq \sT_X^{\le 4N}$.
\end{lemma}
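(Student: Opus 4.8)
The plan is to first reduce the global property ``$G$ is $\wf$'' to the family of local properties ``$r_X L_X$ is $\wf$''. Since $G$ is nonnegative we have $d_S=0$ and hence $r_S=\ew$; thus if every $r_X L_X$ were $\wf$, then $G$ would be $\bwf$ with $r_S=\ew$ (the given $r_X$ witness boundedness, as $\abs{r_X}=d_X$), so $G$ would be $\wf$ by \Cref{lem:char-wf-main-text}. Contrapositively, since $G$ is not $\wf$, some $r_X L_X$ is not $\wf$. Next I would localize the first failure along the height recursion displayed before \Cref{ex:the-one}. Once all constants $r_X\inv{r_Y}$ and all lower summaries $\sT^{\le h}$ are $\wf$, the base contributions $r_X\inv{u}v$ and the unary steps $r_X\inv{r_Y}\,(r_Y L_Y^{\le h})$ merely relay already-$\wf$ material and introduce no fresh mismatch; a new counterexample can arise only when a freshly introduced block $\inv{r_Z}$, coming from the start of $L_Z$ in a binary rule $X\to_G YZ$, meets the opening brackets accumulated in $r_X L_Y^{\le h}$, i.e.\ inside $r_X L_Y^{\le h}\inv{r_Z}$. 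I would then let $h_0$ be the least height at which such a binary product fails to be $\wf$.

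For the second claim I would argue by induction on $h\le h_0$ that the iteration stays valid: since, by minimality, no counterexample is encountered strictly below $h_0$, every reduction $\rd(\cdot)$ in the update is well defined, so each $r_X L_X^{\le h}$ remains $\wf$; and because the union/concatenation operations on $\tseq$-summaries of \Cref{lem:lcs-computation-main-text} are well defined on $\tseq$-classes, the computed $\sT_X^{\le h}$ satisfies $\sT_X^{\le h}\tseq\rd(r_X L_X^{\le h})$ for all $h\le h_0$.

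For the third and hardest claim I would assume towards a contradiction that $\sT_X^{\le 4N}\tseq\sT_X^{\le 4N+1}$ for all $X$. Since the summary update is a function of the $\tseq$-classes of the $\sT_Y^{\le h}$ that respects $\tseq$ (\Cref{lem:lcs-computation-main-text}), and since by the second claim the iteration does not fail for any height $\le h_0$, a step-by-step induction propagates this convergence upward: from $\sT_Y^{\le h}\tseq\sT_Y^{\le h-1}$ for all $Y$ (which follows once both are $\tseq\sT_Y^{\le 4N}$) we obtain $\sT_X^{\le h+1}\tseq\sT_X^{\le h}$, so $\sT_X^{\le h}\tseq\sT_X^{\le 4N}$ for every $4N\le h\le h_0$, and in particular $\rd(r_Y L_Y^{\le h_0})\tseq\rd(r_Y L_Y^{\le 4N})$. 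The crux is then to show that whether $r_X L_Y^{\le h}\inv{r_Z}$ is $\wf$ depends only on the summary $\sig(\rd(r_Y L_Y^{\le h}))$ together with the fixed words $r_X,r_Z$: appending the closing block $\inv{r_Z}$ can only cancel against the common trailing opening brackets of the reduced words of $r_X\inv{r_Y}\,(r_Y L_Y^{\le h})$, and exactly this trailing structure --- together with how far a shortest reduced word may be extended towards the others --- is recorded precisely by $\lcs$ and $\lcsext$; this is why $\lcsext$ must be tracked and not merely $\lcs$. Granting this determinacy, the equality $\rd(r_Y L_Y^{\le h_0})\tseq\rd(r_Y L_Y^{\le 4N})$ forces $r_X L_Y^{\le h_0}\inv{r_Z}$ to be $\wf$ iff $r_X L_Y^{\le 4N}\inv{r_Z}$ is; but the latter is $\wf$ because $4N<h_0$ and $h_0$ is least, contradicting the failure at $h_0$. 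Hence the summaries cannot have converged, i.e.\ $\sT_X^{\le 4N}\not\tseq\sT_X^{\le 4N+1}$ for some $X$.

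I expect the main obstacle to be making this last determinacy statement rigorous --- proving that the well-formedness of $r_X L_Y^{\le h}\inv{r_Z}$ is genuinely a function of the $\tseq$-summary. This requires a careful case distinction according to whether $\abs{r_Z}$ exceeds the length of a shortest reduced word of $\rd(r_X\inv{r_Y}\,r_Y L_Y^{\le h})$, which is exactly the distinction between the $\inv{r_Z}$ interacting only with the common suffix $\lcs$ and it reaching into the region governed by $\lcsext$; this is also where the involution must be handled explicitly rather than reusing the purely $\lcs$-based machinery.
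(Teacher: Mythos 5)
Your proposal is correct in substance and, for the second and third claims, follows essentially the paper's route: since $\rd(r_X L_Y^{\le h})\subseteq\sA^\ast$ once everything below is well-formed, the product $r_X L_Y^{\le h}\inv{r_Z}$ is $\wf$ exactly when $r_Z\sle\lcs(\rd(r_X L_Y^{\le h}))$, and this $\lcs$ is a function of $\sig(\rd(r_Y L_Y^{\le h}))$ by the concatenation case of \Cref{lem:lcs-computation-main-text} applied with the singleton $\{\rd(r_X\inv{r_Y})\}$ on the left --- which is precisely where $\lcsext$ enters. So the ``determinacy'' you flag as the main obstacle is not an open step: it is the already-proven identity $\lcs(uL')=\lcs(u,\lcsext(L'))\,\lcs(L')$, and no further case distinction on $\abs{r_Z}$ versus the shortest reduced word is needed. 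The paper concludes positively (exhibiting $r_Z\sle\rlcs(r_XL_Y^{\le 4N+1})$ but $r_Z\not\sle\rlcs(r_XL_Y^{\le h_0})$, hence non-convergence of $\sig$ at some nonterminal), where you argue by contradiction via stationarity propagation; that difference is cosmetic.

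Where you genuinely diverge is the first claim. The paper locates the failing binary product by descending through a derivation tree of a concrete non-well-formed word $\alpha$, maintaining the invariant that $r_X\alpha_X$ is not $\wf$ and stopping at a rule $X\to_G YZ$ where $r_Y\alpha_Y$ and $r_Z\alpha_Z$ are both $\wf$; then $r_X\alpha_Y\alpha_Z\rdeq r_X\alpha_Y\inv{r_Z}r_Z'v_Z$ forces $r_X\alpha_Y\inv{r_Z}$ to be non-$\wf$. You instead take the contrapositive of \Cref{lem:char-wf-main-text} and analyse which term of the height recursion can first produce a non-well-formed word. Both work, but your version leans more heavily on an assumption you leave implicit: that the nullary contributions $r_X\inv{u}v$ and the constants $r_X\inv{r_Y}$ (for every rule with $Y$ on the right-hand side) are themselves $\wf$. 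If one of these failed, the first counterexample would arise in a nullary or unary step rather than in a binary product $r_X L_Y^{\le h}\inv{r_Z}$, and the stated form of $h_0$ --- and with it your second claim --- would break. These are properties of the Bellman--Ford construction of the $r_X$ (the paper states them as standing assumptions, e.g.\ that all nullary rules are reduced and $\wwf$ and that $r_X\inv{r_Y}$ is $\wf$ for all rules), not consequences of the minimality of $h_0$, so you should record and justify them explicitly before defining $h_0$.
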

\begin{prfblk}
We write $\tsr(r_XL_X^{\le h})$ for $\tsn(\rd(r_XL_X^{\le h}))$.

For simplicity, we also assume that all linear rules have been removed.

Assume thus that $G$ is not $\wf$.

We assume that all nullary rules $X\to \inv{u}v$ are already reduced and w.l.o.g. $\wwf$.

Further w.l.o.g. $G$ is nonnegative.

Then there is some $\alpha\in L(G)$ that is not $\wf$.

We show that then there is some rule $X\to_G YZ$ and words $\alpha_X=\alpha_Y\alpha_Z$ with $\alpha_Y\in L_Y$ and $\alpha_Z\in L_Z$ s.t.:
\begin{itemize}
\item
$r_X \alpha_Y\inv{r_Z}$ is not $\wf$
\item
$r_X\inv{r_Y}$ is $\wf$.
\item
$r_Y \alpha_Y$ is $\wf$.
\end{itemize}

To this end, consider any derivation of $\alpha$:
\begin{block}
Set $X:=S$ and $\alpha_X:=\alpha$

We have $r_{X}=r_S=\ew$ with $r_{X}\alpha_X$ not $\wf$

While $r_{X} \alpha_X$ is not $\wf$:
\begin{block}
Then there is some rule $X\to_G YZ$ and factorization $\alpha_X=\alpha_Y\alpha_Z$ as by assumption $r_X r$ is $\wf$ for all constant rules $X\to_G r$.

If $r_Y\alpha_Y$ is not $\wf$:
\begin{block}
Redefine $X:=Y$ and $\alpha_X:=\alpha_Y$ and descend accordingly into the derivation tree of $\alpha_Y$.
\end{block}

If $r_Z\alpha_Z$ is not $\wf$:
\begin{block}
Redefine $X:=Z$ and $\alpha_X:=\alpha_Z$ and descend accordingly into the derivation tree of $\alpha_Z$.
\end{block}

Otherwise $r_Z \alpha_Z$ is $\wf$, thus $\alpha_Z\rdeq \inv{u_Z}v_Z$ with $r_Z=r_Z'u_Z$.

Thus $r_X \alpha_Y \inv{r_Z}$ is not $\wf$ as $r_{X}\alpha_X=r_X \alpha_Y \alpha_Z \rdeq r_X \alpha_Y \inv{r_Z} r_Z' v_Y$
\end{block}
\end{block}

So, there is some least derivation height $n_0$ s.t.
\begin{itemize}
\item
$r_X L_X^{\le n_0}$ is $\wf$ for every nonterminal $X$
\item
$r_X\inv{r_Y}$ is $\wf$ for all rules $X\to_G YZ$
\item
there exists a nonterminal $X_0$ with $X_0\to_G YZ$, $\alpha_Y\in L_Y^{n_0}$, and $r_{X_0} \alpha_Y \inv{r_Z}$ not $\wf$ anymore.
\end{itemize}

As all $r_Y L_Y^{\le n_0}$ are $\wf$, we have $\rd(r_Y L_Y^{\le n_0})\tseq \tsr(r_Y L_Y^{\le n_0})\tseq \sT_Y^{\le n_0}$.

Thus also
\[\rd(r_XL_Y^{\le n_0}) = \rd(r_X\inv{r_Y}) \rd(r_YL_Y^{\le n_0}) \tseq \rd(r_X\inv{r_Y}) \sT_Y^{\le n_0}\]

Finally, as $G$ is nonnegative, also $r_X L_Y^{\le n_0} \inv{r_Z}$ is nonnegative.

Thus, as $r_X \alpha_Y \inv{r_Z}$ is not $\wf$, we have that $\rlcs(r_X L_Y^{\le n_0})\inv{r_Z}$ is not $\wf$, and thus 
$\rd(r_X\inv{r_Y}) \sT_Y^{\le n_0}\inv{r_Z}$ is not $\wf$.

So, if $n_0\le 4N+1$, by iteratively computing $\sT_X^{\le h}\tseq\tsn(\rd(r_XL_X^{\le h}))$, we discover the error;

Otherwise $r_Z \sle \rlcs(r_X L_Y^{\le 4N+1})=\lcs(\rd(r_X\inv{r_Y})\sT_X^{\le 4N+1})$ but $r_Z \not\sle \rlcs(r_X L_Y^{\le n_0})$;

Thus $\rd(r_X L_Y^{\le 4N+1})\not\tseq r_X L_Y^{\le n_0}$ and thus $r_Y L_Y^{\le 4N+1}\not\tseq r_YL_Y^{\le n_0}$.

So, for at least one nonterminal $\sig$ cannot have converged.
\end{prfblk}

\fi

\end{document}